\newcommand{\RO}{$\mathscr{RO}$}
\newcommand{\PRO}{$\mathscr{PRO}$}
\newcommand{\ARF}{$\mathscr{ARF}$}
\newcommand{\ARO}{$\mathscr{ARO}$}
\newcommand{\PARO}{$\mathscr{PARO}$}
\newlist{CC}{enumerate}{1}
\setlist[CC]{label=Case \arabic*)}
\newlist{STP}{enumerate}{1}
\setlist[STP]{label=Step \arabic*)}
\theoremstyle{plain}
\declaretheorem{theorem}
\newlist{lemlist}{enumerate}{1}
\setlist[lemlist]{label=(\alph{lemlisti}), ref=\thelemma\alph{lemlisti},noitemsep}
\Crefname{theorem}{Theorem}{Theorems}
\declaretheorem[sibling=theorem,
name=Lemma,
Refname={Lemma,Lemmas},
numberwithin=section]{lemma}
\Crefname{lemlisti}{Lemma}{Lemmas}
\declaretheorem{assumption}
\theoremstyle{definition}
\declaretheorem[qed=$\blacksquare$]{definition}
\DeclareMathOperator*{\argmax}{arg\,max}
\newcounter{drafts}
\newcounter{module}
\newcounter{model}
\newcounter{example}
\newcommand \linedabstractkw[2]{
  \renewcommand\maketitlehookd{%
    \mbox{}\medskip\par
    \centering
    \hrule\medskip
    \begin{minipage}{0.9\textwidth}
    #1\\

    \textit{Keywords: }#2
    \end{minipage}\medskip\hrule\medskip
    }
}
\title{
Adjustable robust treatment-length optimization in radiation therapy
}
\author{S.C.M. ten Eikelder\thanks{Department of Econometrics and Operations Research, Tilburg University, The Netherlands} \and A. Ajdari\thanks{Department of Radiation Oncology, Massachusetts General Hospital and Harvard Medical School, USA} \and T. Bortfeld\protect\footnotemark[2] \and D. den Hertog\thanks{Department of Operations Management, University of Amsterdam, The Netherlands}}
\newcommand{\dobib}{ 
    \bibliographystyle{apalike}
    \bibliography{../ReferenceList} 
}
\begin{document}
\renewcommand{\dobib}{}

\linedabstractkw{Traditionally, optimization of radiation therapy (RT) treatment plans has been done before the initiation of RT course, using population-wide estimates for patients' response to therapy. However, recent technological advancements have enabled monitoring individual patient response during the RT course, in the form of biomarkers. Although biomarker data remains subject to substantial uncertainties, information extracted from this data may allow the RT plan to be adapted in a biologically informative way. We present a mathematical framework that optimally adapts the treatment-length of an RT plan based on the acquired mid-treatment biomarker information, while accounting for the inexact nature of this information. We formulate the \emph{adaptive treatment-length optimization problem} as a 2-stage problem, wherein the information about the model parameters gathered during the first stage influences the decisions in the second stage. Using Adjustable Robust Optimization (ARO) techniques we derive explicit optimal decision rules for the stage-2 decisions and solve the optimization problem. The problem allows for multiple worst-case optimal solutions. To discriminate between these, we introduce the concept of Pareto Adjustable Robustly Optimal (\PARO{}) solutions. 

\hspace*{4ex} In numerical experiments using lung cancer patient data, the ARO method is benchmarked against several other static and adaptive methods. In the case of exact biomarker information, there is sufficient space to adapt, and numerical results show that taking into account both robustness and adaptability is not necessary. In the case of inexact biomarker information, accounting for adaptability and inexactness of biomarker information is particularly beneficial when robustness (w.r.t. organ-at-risk (OAR) constraint violations) is of high importance. If minor OAR violations are allowed, a nominal folding horizon approach (NOM-FH) is a good performing alternative, which can outperform ARO. Both the difference in performance and the magnitude of OAR violations of NOM-FH are highly influenced by the biomarker information quality.
}{radiation therapy, adjustable robust optimization, decision rules, optimal stopping}

\maketitle

\section{Introduction}\label{sec: introduction}
In radiation therapy (RT), the goal is to deliver a curative amount of radiation dose to the target volume(s), while keeping the dose to all organs-at-risk (OARs) within tolerable limits. As the radiation beam delivers energy to all tissues that are on its path, the OARs will (often) inevitably receive some dose as well. Treatment planning has a spatial component, where the optimal dose distribution is determined, and a temporal component, where the optimal number of treatment sessions, or fractions, is determined. Whereas the former is predominantly a geometric problem, the latter involves radiobiological effects. 

Technological advances in treatment monitoring through imaging and other forms of data acquisition allow for a more accurate assessment of an individual's radiation response \citep{Baumann16}. Biologically-based adaptive treatments aim to monitor the treatment, acquire mid-treatment biomarker information, and adapt the remainder of the treatment course accordingly. Many approaches to adaptive treatment planning have been studied in the literature. To the best of our knowledge, all existing approaches assume that all information acquired mid-treatment is exact, i.e., gives a perfect representation of the current state of treatment response. Unfortunately, the limited availability and accuracy of required biomarkers pose a primary challenge for adaptive treatments \citep{Baumann16}. Any information from biomarker data acquired during treatment remains subject to uncertainties, stemming from both measurement errors and the inexactness in the translation of measured data to usable model parameters. Therefore, it is crucial that any adaptive treatment planning approach takes this into account. \citet{Ajdari19} provide an overview of relevant mathematical (optimization) tools. We present an approach to optimally adapt the treatment length of RT using inexact mid-treatment biomarker information.

Specifically, we take an adjustable robust optimization \citep[ARO,][]{BenTal04, Yanikoglu19} approach that accounts for the inexactness of biomarker information. ARO is an extension of robust optimization (RO) that takes into account the flow of information over time and exploits the fact that some decisions need to be taken only after the data has (partially) revealed itself. By using ARO, we ensure that the treatment plan delivered in the initial treatment stage (prior to obtaining biomarker information) is `adaptation aware'. That is, it is designed with adaptation in mind, which may yield more flexibility at the time of treatment adaptation. In the standard paradigm, ARO assumes that the revealed information is exact; we employ the ARO methodology developed in \citet{DeRuiter17} for the case when revealed information is not exact, but provides only an estimate of the true parameters.

\subsection*{Contributions}
We consider a stylized two-stage ARO model to optimally adapt the treatment-length based on inexact biomarker information acquired mid-treatment. Although the stylized model makes several simplifying assumptions to aid the analysis, we believe it captures several important characteristics of realistic adaptive treatment planning, and it enables a precise analysis of the influence of uncertainty in biomarker information. Our main contributions are:
\begin{itemize}
\item We develop mathematical tools based on ARO that enable us to (i) optimally adapt the dose per fraction and treatment length after acquiring mid-treatment biomarker information, (ii) analyze the influence of biomarker information uncertainty.
\item We present explicit optimal decision rules for a difficult (non-convex, mixed-integer) yet practically relevant ARO problem.
\item We show that there are multiple optimal solutions for the worst-case scenario, and that these differ in performance in non-worst-case scenarios. To handle this, we introduce the concept of Pareto Adjustable Robustly Optimal (\PARO{}) solutions, a generalization of Pareto Robustly Optimal (\PRO{}) solutions \citep{Iancu14} to two-stage robust optimization problems. In case the acquired biomarker information is exact, \PARO{} solutions are obtained.
\item We perform a computational study using real lung cancer patient data to determine the optimal timing of acquiring biomarker information in case biomarker quality improves over time. Later biomarker acquisition also limits adaptation possibilities, and the optimal balance depends on the improvement rate.
\end{itemize}

\subsection*{Literature review}
There is a large body of adaptive treatment planning research in RT, the majority of which focusses not on biologically-based uncertainties but on geometric uncertainties and anatomical changes. \citet{Chan13}, \citet{Mar15}, \citet{Bock17}, \citet{Bock19} and \citet{Lim20} present adaptive treatment planning approaches that starts with delivery of the original treatment plan, often derived using RO methods. At given adaptation moments, the `state' of the patient (e.g., anatomical changes, tumor shrinkage, breathing motion pattern) is observed, and the treatment is re-optimized for the remainder of the treatment plan. In RO terminology, these approaches are known as folding horizon (FH) methods. They disregard adaptation possibilities initially, and re-optimize the updated model once mid-treatment information is acquired.

Several treatment plan adaptation approaches have been proposed for adapting to biological information, differing in considered biomarker information, adapted treatment plan decisions and used methodology. \citet{Ghate11} and \citet{Kim12} propose a theoretical stochastic control framework to optimally adapt the dose distribution over a fixed number of fractions, based on hypothetically-observed tumor states. \citet{Saberian16b} concretize this theoretical framework, using simulated hypoxia (insufficient oxygen supply at cellular level) status as biomarker. \citet{Long15} consider a model with a constraint on the probability of radiation-induced lung toxicity, which depends on an a priori unknown model parameter. The problem is formulated as a two-stage model (before and after parameter observation), and the optimal dose distribution is determined for each stage. They consider a finite scenario set for the parameter, and the lung toxicity constraint either has to hold in expectation or has to hold for all considered scenarios.

\citet{Nohadani17} consider a two-stage model to adapt to hypoxia information, where the uncertainty is time-dependent. As the hypoxia information ages the uncertainty grows, until it resets at the observation moment. In each stage the dose distribution is optimized; for the second stage a finite adaptability approach is taken. It is shown that total information degradation is minimized if the observation moment is set mid-treatment. \citet{Dabadghao20} consider a similar time-dependent uncertainty set, for adapting to hypoxia information in a multi-stage setting. In each stage the mean tumor dose per fraction is optimized using a folding horizon approach. It is shown that total information degradation is minimized if the observation moments are set equidistant. They introduce a cost of observation (additional dose due to mid-treatment positron emission tomography (PET) scans), and determine the optimal number of observations. Both papers assume that the hypoxia state varies over time, and is exactly observed at the observation moment(s). In contrast, we assume that uncertain parameters are constant in time, and consider inexact biomarker information. Moreover, they solely adapt the dose, whereas we also adapt the total number of treatment fractions itself.

Adapting the treatment length (i.e., the fractionation schedule) based on observed radiation response has been studied before in the literature. \citet{Saka14} consider a two-stage model where after a fixed number of treatment fractions hypoxia information is acquired. Based on this information both the remaining number of treatment fractions and the dose distribution are re-optimized, in order to maximize average hypoxia-corrected tumor dose. They focus on maintaining hypoxia-corrected fraction size requirements. A similar approach is taken by \citet{Ajdari18}, where after each treatment fraction the tumor cell density in each voxel is observed, and adaptations are made after each treatment fractions instead of only once. The objective is to minimize the total number of tumor cells remaining (TNTCR) at the end of the treatment course. Both approaches can be considered FH methods. In contrast to our approach, it is assumed that any information acquired mid-treatment is exact.

\citet{Iancu20} propose a conceptual robust monitoring and stopping model. They consider a system with a state $x(t)$, and after each observation moment the uncertainty in the system state $x(t)$ grows as $t$ increases. At a new observation moment, the uncertainty reduces to zero, i.e., an exact observation is made. They consider multiple observation moments, and the goal is to time these optimally. At each observation moment, the (state-dependent) direct stopping reward is compared to the worst-case continuation reward, and the according action is taken. Their model does not allow for controls that influence state variables, i.e., applying their model to RT optimization problems would not allow to adjust the dose distribution or the mean dose per fraction.

\subsection*{Notation and organization}
All variables and constants are $1$-dimensional (belong to $\mathbb{R}$ or $\mathbb{N}$) unless indicated otherwise. In functions, a semicolon (;) is used to separate variables and constant arguments from uncertain parameters. Optimal solutions to optimization problems are indicated with an asterisk ($^{\ast}$). Properties of optimal solutions to optimization problems have calligraphic font (e.g., \ARO{}) to distinguish them from methods with the same or similar abbreviations.

The remainder of this paper is organized as follows. \Cref{sec: problem-formulation} introduces the used biological models, background information on biomarkers and states modeling choices. \Cref{sec: exact} introduces the adjustable treatment-length optimization problem under the assumption of exact biomarker information and solves this using ARO techniques. \Cref{sec: inexact} generalizes this to inexact biomarker information. \Cref{sec: results} presents and discusses results of numerical experiments on a lung cancer data set. Finally, \Cref{sec: conclusion} concludes the paper.

\section{Adaptive fractionation} \label{sec: problem-formulation}
\subsection{The fractionation problem} \label{sec: fractionation}
Spatial optimization exploits the fact that, by mounting the beam head on a gantry, the tumor can be targeted from various angles. It aims to find the combination of beam angles and weights that gives the best trade-off between tumor dose conformity and healthy tissue sparing. There is a large body of literature on this topic, see for example \citet{Shepard99,Ehrgott08} for reviews. The result of the spatial optimization problem is a combination of beam angles and weights that gives the best trade-off between tumor dose conformity and healthy tissue sparing. The resulting dose distribution gives the dose to each voxel (3-dimensional subvolume) of the tumor and OARs. \Cref{fig: dose-distribution} gives an example of a slice of a dose distribution. The beam angles and weights are chosen such that the target (contoured in black) receives a high dose and a nearby organ-at-risk (OAR, contoured in red) receives a low dose. 
\begin{figure}[htb!]
\centering
\includegraphics[scale=0.5]{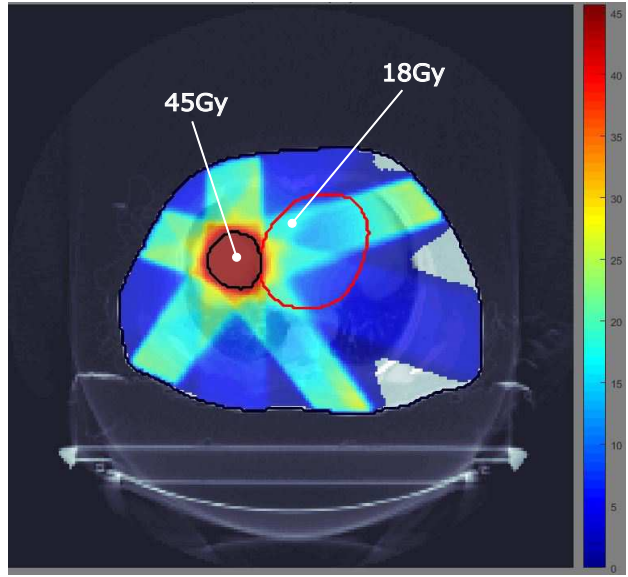} 
\caption{\small Slice of a 3D dose distribution (illustration). Target is contoured in black, OAR is contoured in red. The target volume has a mean dose of 45Gy. The indicated OAR voxel with dose 18Gy has a dose sparing factor of 0.4. \label{fig: dose-distribution}}
\end{figure}

Typically, this dose is not delivered in a single treatment session, but spread out over multiple treatment fractions (fx). The underlying idea is that compared to tumor cells, healthy tissues often have better repair capabilities between fractions \citep{Fowler89,Withers85}. On the other hand, a treatment spread-out over a large number of treatment fractions may not deliver sufficient damage to the target volume, and increases the risk of tumor proliferation. The effect of fractionation differs per healthy tissue type and per tumor site, see, e.g., \citet{Hall12} for further details. Determining the optimal number of treatment fractions is known as the fractionation problem. Treatments with a higher number of fractions and a lower dose per fraction than the conventional regimen are known as \emph{hyperfractionated} treatments. Treatments with a lower number of fractions and a higher dose per fraction than conventional are known as \emph{hypofractionated} treatments.

In each treatment fraction, a scaled version of the dose distribution is administered. Scaling the dose distribution influences the absolute dose delivered to each voxel, but the relative dose remains unchanged. That is, for each voxel we can define a \emph{dose sparing factor}: the dose of this voxel as a fraction of the mean target dose. In \Cref{fig: dose-distribution}, the target volume receives a uniform dose of 45Gy over the entire treatment course. The indicated OAR voxel receives 18Gy and thus has a dose sparing factor of $0.4$. Consequently, if in an individual treatment fraction we administer a mean target dose of 3Gy, the OAR voxel receives 1.2Gy. Thus, with a fixed dose distribution, the sole decision in treatment fraction $t$ is $d_t$, the mean tumor dose in that fraction. The corresponding dose to any voxel $i$ with dose sparing factor $s_i$ is $s_id_t$. Altogether, with a fixed dose distribution fractionation problems can be described using a low number of decision variables. Typically, the target dose is homogeneous, so for simplicity we assume that each target voxel receives dose $d_t$ (i.e., has dose sparing factor 1). Nevertheless, we emphasize that this modeling approach allows for both heterogeneous dose distributions in target and OAR volumes.

The BED model \citep{Fowler89,Fowler10,Hall12} states that the biological effect of an $N$-fraction dose sequence $d = (d_1,\dotsc,d_N)$ (in Gray (Gy)) delivered to a tumor volume is given by
\begin{align} \label{eq: BED-T}
\text{BED}_T = \sum_{t=1}^N d_t + \frac{1}{\alpha/\beta} \sum_{t=1}^N d_t^2,
\end{align}
which is a model governed by a single parameter, the $\alpha/\beta$ ratio, which signifies the fractionation sensitivity of the tumor tissue. The BED to the OAR can be described by
\begin{align} \label{eq: BED-OAR}
\text{BED}_{\text{OAR}} = \sum_{t=1}^N \sigma d_t + \frac{1}{\alpha/\beta} \sum_{t=1}^N \sigma d_t^2,
\end{align}
where $\sigma$ is the generalized dose-sparing factor. For the maximum BED in the OAR volume, $\sigma$ is the maximum of the individual dose sparing factors $s_i$. In order to describe a mean BED constraint or dose-volume BED constraint other choices for $\sigma$ can be used \citep{Saberian16a,Perko18}.

For notational convenience, let $\tau$ (for tumor) and $\rho$ (for risk) denote the inverse $\alpha/\beta$ ratio of the tumor and OAR volume, respectively. \citet{Mizuta12} consider the problem of minimizing OAR BED subject to a lower bound $\text{BED}_{\text{T}}^{\text{pres}}$ on tumor BED. The number of fractions $N$ is restricted to be at most $N^{\text{max}}$. The problem reads
\begin{subequations} \label{eq: mizuta-problem}
\begin{align}
\min_{d,N}  ~&~ \sigma \sum_{t=1}^N d_t + \rho \sigma^2 \sum_{t=1}^N d_t^2 \\
\text{s.t.} ~&~ \sum_{t=1}^N d_t + \tau \sum_{t=1}^N d_t^2 \geq \text{BED}_{\text{T}}^{\text{pres}} \label{eq: mizuta-problem2} \\
            ~&~ d_1,\dotsc,d_N \geq 0, N \in \{1,\dotsc,N^{\text{max}}\}.
\end{align}
\end{subequations}

Let $(d^{\ast},N^{\ast})$ denote the optimal solution to \eqref{eq: mizuta-problem}. A simple analysis in \citet{Mizuta12} reveals the following important result:
\begin{align} \label{eq: mizuta-rule}
N^{\ast} =
\begin{cases}
1  & \text{if } \tau \geq \sigma \rho  \\
N^{\text{max}} \text{ and } d_1^{\ast} = \dotsc = d_{N^{\text{max}}}^{\ast}  & \text{otherwise.}
\end{cases}
\end{align}
In both cases the optimal dose $d^{\ast}$ is such that \eqref{eq: mizuta-problem2} is active. It can be shown that the same result holds if we maximize tumor BED subject to an upper bound on OAR BED \citep{Bortfeld15}. A similar result has been derived for the case with multiple OARs \citep{Saberian16a}. There is a large body of research that optimizes the number of treatment fractions for different model formulations (see \citet{Saberian17} and references therein).

In the current paper, we restrict to one dose-limiting OAR. For many tumor sites, there is a single OAR that restricts the doses that can be delivered, and other OARs are much less restrictive. For example, for lung cancer the mean lung dose is an important indicator of toxicity. On the other hand, for head and neck cancer many OARs must be accounted for. We emphasize that other OARs are not completely disregarded. They are taken into account implicitly, because the original dose distribution was planned with all relevant OARs taken into consideration. Moreover, by restricting the minimum and maximum (mean target) dose per fraction, extreme deviations from the standard fractionation schedule are avoided, which is also designed whilst taking multiple OARs into account.

\subsection{Adaptive fractionation using biomarkers} \label{sec: biomarkers}
Most fractionation optimization methods assume the tumor and OAR fractionation sensitivity parameters $\tau$ and $\rho$ are known exactly. There is much research on the $\alpha/\beta$ ratios for different tumor sites \citep{vanLeeuwen18} and OAR sites \citep{Kehwar05}, but they remain subject to considerable uncertainties. We assume box uncertainty of the form:
\begin{align} \label{eq: Z}
Z := \big\{(\rho,\tau): \rho_L \leq \rho \leq \rho_U, \tau_L \leq \tau \leq \tau_U \big\},
\end{align}
with $0<\rho_L<\rho_U$ and $0 < \tau_L < \tau_U$. It is assumed that there is a nominal scenario, e.g., parameter values $\bar{\tau}$ and $\bar{\rho}$ derived from literature. There are two reasons for assuming a box uncertainty set. First, to the best of our knowledge there is little evidence that there is any correlation between the $\alpha/\beta$ ratios of target volumes and normal tissues. Second, box uncertainty leads to simpler models, which allow for a more detailed analysis of optimal fractionation decisions.

\citet{Ajdari17b} also consider a box uncertainty set, and determine a \emph{robustly optimal} fractionation scheme, i.e., one that is that is feasible for all possible realizations and that is optimal for the worst-case realization. If biomarker information acquired during treatment provides more accurate information on fractionation sensitivity than what was available at the start of the treatment, such a static RO approach may be overly conservative.

\citet{Somaiah15} give an overview of various mechanisms for determining fractionation sensitivity. Using blood samples, one can quantify the involvement of  non-homologous end-joining (NHEJ) and homologous recombination (HR) in cells. For details on how to measure these, we refer to \citet{Bindra13} and \citet{Barker10}, respectively. Change in the expression of epidermal growth factor receptor (EGFR) genes can also give some hints regarding the fractionation sensitivity \citep{Somaiah15}, which can also be measured mid-treatment. Lastly, \citet{Somaiah15} mention that there is a close link between proliferation index and hypoxia, both of which can be measured during RT using different PET tracers. We note that there is evidence that some of these mechanisms could be subject to change during RT, depending on, amongst others, the delivered dose, hypoxia, and immune system interaction. However, as a first study to adapting to inexact biomarker information, we make the assumption that fractionation sensitivity is static throughout treatment, i.e., there is a static `true’ $(\rho,\tau)$. In \Cref{sec: exact} we assume to observe (measure) the true $(\rho,\tau)$ exactly, and in \Cref{sec: inexact} we assume to observe only an estimation/approximation $(\hat{\rho},\hat{\tau})$.

The quality of the observed parameter estimates depends amongst others on the suitability of the biomarker itself, the measurement accuracy and when the biomarker measurement is taken during the treatment course. The relationship between the data quality and the moment of biomarker observation is complex, and it is impossible to exactly quantify this. For some biomarkers the data quality may greatly improve in the first few fractions, with a diminishing improvement in later fractions\footnote{This is especially true in the case of certain blood biomarkers of innate immune status (such as interleukin (IL)-6 or tumor necrosis factor (TNF)-$\alpha$) which are also the markers of inflammation. In these biomarkers, as the biomarker acquisition is shifted towards later in the RT course, the information regarding the immune status gets mixed with the RT-induced inflammation and loses its specificity to immune system condition.}. For others (e.g., functional imaging such as PET and magnetic resonance imaging (MRI)), the data quality is poor at the first couple of fractions and only increases substantially in later fractions\footnote{This is because the effect of RT on tissue is cumulative and is morphologically manifested only after a certain amount of dose (which depends on the underlying tissue threshold) is delivered.}. In practice, some biomarkers, e.g., radiographic information, may also exhibit a decreasing data quality for very late observation moments (due to, for instance, interference from acute inflammation in the lung). Such behavior is rare, and as such not considered here. We impose a minimum dose per fraction, and make the assumption that biomarker data quality increases in the number of treatment fractions. In this way, the change in biomarker quality is influenced by both the dose delivered and the time passed. We will investigate several functional forms for this relationship in the numerical experiments.

\subsection{Modeling choices}
In order to establish a meaningful model for the adjustable robust optimization approach, we restrict the dose sequence $d=(d_1,\dotsc,d_{N^{\text{max}}})$ in several ways. In addition to a maximum number of fractions, we also set a minimum $N^{\text{min}}$. Furthermore, we assume there is a single moment $N_1$ where we can adapt the treatment. Under the assumption that the uncertain parameters remain constant over time, more than one observation moment is not useful if the parameter is observed exactly. With inexact observations, there can be value in multiple observations, but given the patient burden (logistically and the delivery of additional dose) and financial cost of PET scans this is not considered here.

The dose per fraction in the first $N_1$ fractions is assumed to be the same, denote this by $d_1$. Variable $N_2$ denotes the number of fractions after adaptation; also these fractions have equal dose, denoted by $d_2$. In current clinical practice, uniform fractionated treatments are the standard. By restricting to only two different dose levels, extreme deviations from standard protocols are prevented. The above implies
\begin{align}
N_2 \in \big\{N_2^{\text{min}},\dotsc, N_2^{\text{max}}\big\},
\end{align}
with $N_2^{\text{min}} = \max\{1,N^{\text{min}}-N_1\}$ and $N_2^{\text{max}} = N^{\text{max}}-N_1 $. We additionally set the constraint that $d_1, d_2 \geq d^{\text{min}}$, for some predetermined value $d^{\text{min}}$. Aside of preventing an unrealistically low dose per fraction, the minimum dose serves a modeling purpose for stage 1. As noted in \Cref{sec: biomarkers}, the biomarker quality can depend on both dose and time. The current model implicitly makes the assumption that an early response can only be observed via biomarkers once $N_1$ fractions of dose at least $d_{\text{min}}$ have been delivered. Thus, in our models, this can be interpreted as a threshold. In the numerical experiments we investigate several temporal relationships between $N_1$ and biomarker quality. Lastly, we set a maximum dose per fraction $d_1^{\text{max}}$ in stage 1, to avoid delivering dosages in stage 1 that severely restrict adaptation possibilities in stage 2. We will later impose some restrictions on the allowed combinations of $d^{\text{min}}$, $d_1^{\text{max}}$ and $N_2^{\text{max}}$. 

\Cref{fig: overview} provides a schematic overview of the situation.
\begin{figure}
\centering
\includegraphics{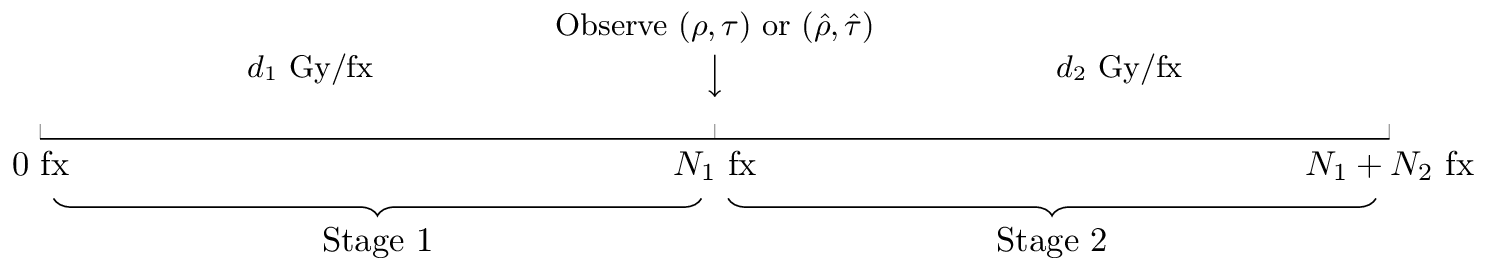}
\caption{\small Schematic overview of the considered model. There are 3 variables: $d_1$, $d_2$ and $N_2$. First, we deliver $N_1$ fractions of dose $d_1$ per fraction. After this, we observe $(\rho,\tau)$ or $(\hat{\rho},\hat{\tau})$. Subsequently, we deliver $N_2$ fractions of dose $d_2$ per fraction. \label{fig: overview}}
\end{figure}
We wish to maximize the tumor BED, subject to the constraint that the BED to the OAR is below the generalized tolerance level $\text{BED}_{\text{tol}}$, given by
\begin{align}
\text{BED}_{\text{tol}}(\rho) = \varphi D \Big(1 + \frac{\varphi D}{T}\rho \Big),
\end{align}
i.e., the OAR is known to tolerate a total dose of $D$ Gy if delivered in $T$ fractions under dose shape factor $\varphi$. The dose shape factor is a parameter characterizing the spatial heterogeneity of a dose distribution, for more details see \citet{Saberian16a} and \citet{Perko18}. Note that $\text{BED}_{\text{tol}}(\rho)$ is a function of uncertain parameter $\rho$, the inverse $\alpha/\beta$ ratio of the OAR.

We emphasize that the model resulting from our modeling choices and assumptions does not directly represent a realistic decision-making problem in radiation therapy treatment planning. Nevertheless, it captures several important aspects of fractionation optimization. Moreover, using ARO on this stylized model, we gain insight into optimal decision rules and the role of uncertainty in adaptive fractionation optimization.

\section{ARO: Biomarkers provide exact information} \label{sec: exact}
We present an adjustable robust optimization approach that optimally adjusts the remainder of the treatment once biomarker information has provided the true value of parameters $\tau$ and $\rho$. This serves as a stepping stone to the inexact data model.

\subsection{Problem formulation} \label{sec: exact-problem-formulation}
After delivering $N_1$ fractions we observe the true $(\rho,\tau)$. At the time that the stage-1 dose per fraction $d_1$ has to be decided, it is only known that $(\rho,\tau)$ belongs to uncertainty set $Z$. Put together, the Exact Data Problem (EDP) reads:
\begin{subequations} \label{eq: exact}
\begin{align}
 \max_{d_1,d_2(\rho,\tau),N_2(\rho,\tau)} ~&~ \min_{(\rho,\tau)\in Z} N_1 d_1 + N_2(\rho,\tau) d_2(\rho,\tau) + \tau (N_1 d_1^2 + N_2(\rho,\tau) d_2(\rho,\tau)^2) \label{eq: exact-1} \\
 \begin{split}
\text{s.t.}~~~~~~ ~&~  \sigma (N_1 d_1 + N_2(\rho,\tau) d_2(\rho,\tau))  + \rho \sigma^2 (N_1 d_1^2 + N_2(\rho,\tau) d_2(\rho,\tau)^2) \\
&  ~~ \leq \text{BED}_{\text{tol}}(\rho),~~ \forall (\rho,\tau) \in Z
 \end{split}  \label{eq: exact-2} \\[0.5em]
            ~&~ N_2(\rho,\tau) \in \{N_2^{\text{min}},\dotsc, N_2^{\text{max}}\},~~\forall (\rho,\tau) \in Z  \label{eq: exact-3}\\
            ~&~ d_2(\rho,\tau) \geq d^{\text{min}}, ~~ \forall (\rho,\tau) \in Z \label{eq: exact-4} \\
            ~&~ d^{\text{min}} \leq d_1 \leq d_1^{\text{max}}. \label{eq: exact-5}
\end{align}
\end{subequations}
The value for the stage-1 dose $d_1$ has to be decided before the value of $(\rho,\tau)$ is revealed; in ARO this is also commonly referred to as a \emph{here-and-now} variable or decision. The values stage-2 dose $d_2$ and stage-2 number of fractions $N_2$ need to be decided only after $(\rho,\tau)$ is revealed as they may depend on the values of these parameters. Hence, they are written as functions $d_2(\rho,\tau)$ and $N_2(\rho,\tau)$ of the uncertain parameters $(\rho,\tau)$. In ARO such variables are also commonly referred to as \emph{wait-and-see} variables or decisions. In this paper, we will adhere to the terms \emph{stage 1} and \emph{stage 2}, however.

Before we solve \eqref{eq: exact}, we need some definitions and assumptions. The remaining BED tolerance level of the OAR, if $N'$ fractions with dose $d'$ have been administered, is given by
\begin{align}\label{eq: B}
B(d',N';\rho) = \text{BED}_{\text{tol}}(\rho) - \sigma d' N' -  \rho \sigma^2 (d')^2N'.
\end{align}
Subsequently, define the function
\begin{align}\label{eq: g}
g(d',N',N''; \rho) := \frac{-1 + \sqrt{1 + \frac{4\rho}{N''}B(d',N';\rho)}}{2\sigma \rho}.
\end{align}
The value of $g$ can be interpreted as the maximum dose that can be delivered in $N''$ fractions if already $N'$ fractions of dose $d'$ are (scheduled to be) delivered. It is obtained by solving the equality version of \eqref{eq: exact-2} for $d_1$ or $d_2$. Functions of this form will be used frequently throughout the remainder of this paper.

The following assumption on the relation between $d^{\text{min}}$, $d_1^{\text{max}}$ and the bounds on $N_2$ makes sure that for a given optimal number of fractions, it is feasible to deliver that number of fractions with minimum dose.
\begin{assumption}\label{ass: exact-dminmax}
It holds that
\begin{align}
d^{\text{min}} \leq d_1^{\text{max}} \leq \min \Big\{g(d^{\text{min}},N_2^{\text{min}},N_1; \rho_L), g(d^{\text{min}},N_2^{\text{max}},N_1; \frac{\tau_L}{\sigma}), g(d^{\text{min}},N_2^{\text{max}},N_1; \rho_U) \Big\}.
\end{align}
\end{assumption}
\noindent The particular form of the upper bound on $d_1^{\text{max}}$ will become clear later. Numerical experiments indicate that results are not very sensitive to the choices of $d^{\text{min}}$ and $d_1^{\text{max}}$. 

We continue by formally defining several properties of solutions. Let $X(\rho,\tau)$ denote the feasible region defined by constraints \eqref{eq: exact-2}-\eqref{eq: exact-5} for fixed $(\rho,\tau)$.
\begin{definition}[Adjustable robustly feasible] \label{def: ARF}
A tuple $(d_1,d_2(\cdot),N_2(\cdot))$ is adjustable robustly feasible (\ARF{}) to \eqref{eq: exact} if $(d_1,d_2(\rho,\tau),N_2(\rho,\tau)) \in X(\rho,\tau)$ for all $(\rho,\tau) \in Z$.
\end{definition}
\begin{definition}[Adjustable robustly optimal] \label{def: ARO}
A tuple $(d_1,d_2(\cdot),N_2(\cdot))$ is adjustable robustly optimal (\ARO{}) to \eqref{eq: exact} if it is \ARF{} and there does not exist an \ARF{} tuple $(\bar{d_1},\bar{d_2}(\cdot),\bar{N_2})$ such that
\begin{align}
\begin{aligned}
\min_{(\rho,\tau)\in Z} N_1 d_1 + N_2(\rho,\tau) d_2(\rho,\tau) + \tau (N_1 d_1^2 + N_2(\rho,\tau) d_2(\rho,\tau)^2) \\ < \min_{(\rho,\tau)\in Z} N_1 \bar{d_1} + \bar{N_2}(\rho,\tau) \bar{d_2}(\rho,\tau) + \tau (N_1 \bar{d_1}^2 + \bar{N_2}(\rho,\tau) \bar{d_2}(\rho,\tau)^2).
\end{aligned}
\end{align}
\end{definition}
\noindent We also define the \ARO{} property for the stage-1 decisions $d_1$ individually.
\begin{definition}[Adjustable robustly optimal $d_1$] \label{def: ARO-x}
A stage-1 decision $d_1$ is \ARO{} to \eqref{eq: exact} if there exist decision rules $d_2(\cdot)$ and $N_2(\cdot)$ such that $(d_1,d_2(\cdot),N_2(\cdot))$ is \ARO{} to \eqref{eq: exact}.
\end{definition}
\noindent Lastly, we define optimality of a decision rule.
\begin{definition}[Optimal decision rule] \label{def: optimal-DR}
For a given $d_1$, a decision rule pair $(d_2(\cdot),N_2(\cdot))$ is optimal to \eqref{eq: exact} if $(d_1,d_2(\cdot),N_2(\cdot))$ is \ARF{} and for any $(\rho,\tau) \in Z$ it holds that
\begin{align}
\begin{aligned}
N_1 d_1 + N_2(\rho,\tau) d_2(\rho,\tau) + \tau (N_1 d_1^2 + N_2(\rho,\tau) d_2(\rho,\tau)^2) \\ \geq N_1 d_1 + \bar{N_2}(\rho,\tau) \bar{d_2}(\rho,\tau) + \tau (N_1 d_1^2 + \bar{N_2}(\rho,\tau) \bar{d_2}(\rho,\tau)^2),
\end{aligned}
\end{align}
for every $(\bar{d_2}(\cdot),\bar{N_2}(\cdot))$ such that $(d_1,\bar{d_2}(\cdot),\bar{N_2}(\cdot))$ is \ARF{}.
\end{definition}

The first observation we make in \eqref{eq: exact} is that any solution $(d_1,d_2(\cdot),N_2(\cdot))$ feasible for scenario $(\rho,\tau)\in Z$ is also feasible for $(\rho,\tau+\epsilon)$ with a higher objective value, if $\epsilon>0$. Therefore, in any worst-case realization it will hold that $\tau = \tau_L$ (see \eqref{eq: Z}). This observation has consequences for what uncertainty sets $Z$ need to be considered. Due to the result \eqref{eq: mizuta-rule}, one can in general distinguish three cases for uncertainty set $Z$ and parameter $\sigma$:
\begin{CC}[leftmargin=*,labelindent=1em]
\item $\sigma \rho_U \leq \tau_L$: According to \eqref{eq: mizuta-rule}, for any realization (with $\tau = \tau_L$) it is optimal to deliver the minimum number of fractions in stage 2.
\item $\sigma \rho_L \geq \tau_L$: According to \eqref{eq: mizuta-rule}, for any realization (with $\tau = \tau_L$) it is optimal to deliver the maximum number of fractions in stage 2. 
\item $\sigma \rho_L < \tau_L < \sigma \rho_U$: In the scenario $(\rho_L,\tau_L)$, it is optimal to deliver the maximum number of fractions in stage 2 according to \eqref{eq: mizuta-rule}. In the scenario $(\rho_U,\tau_L)$, it is optimal to deliver the minimum number of fractions in stage 2 according to \eqref{eq: mizuta-rule}.
\end{CC}
In Cases 1 and 2, \eqref{eq: exact} is easily solved by plugging in the (worst-case) optimal value for $N_2$, and solving the resulting 2-variable optimization problem. Therefore, only Case 3 is of interest and in the remainder of this paper we make the following assumption.
\begin{assumption} \label{ass: omega-rho}
It holds that $\sigma \rho_L < \tau_L < \sigma \rho_U$.
\end{assumption}
In our numerical experiments, we use a lung cancer data set. Recent evidence suggests that for lung cancer \Cref{ass: omega-rho} can indeed hold, i.e., the optimal number of treatment fractions is not always known prior to treatment. Further details are provided in \Cref{sec: setup}. For other tumor sites, such as liver cancer, the tumor $\alpha/\beta$ is generally assumed to be 10 or higher, whereas the $\alpha/\beta$ of normal liver tissue is typically assumed to be $3$ or $4$. Thus, for such tumor sites \Cref{ass: omega-rho} generally does not hold, and hyperfractionation is optimal.

\subsection{Optimal decision rules and worst-case solution} \label{sec: exact-solution}
Problem \eqref{eq: exact} is a 2-stage non-convex mixed-integer ARO problem, which are generally hard to solve. Nevertheless, due to the small number of variables the problem can be solved to optimality. In order to solve \eqref{eq: exact}, we take two steps:
\begin{STP}[leftmargin=*,labelindent=1em]
\item Determine optimal decision rules $d_2(\cdot)$ and $N_2(\cdot)$ for fixed $d_1$.
\item Plug in optimal decision rules and solve for $d_1$.
\end{STP}
In what follows, we give a detailed explanation of both steps. Let $(d_1^{\ast},d_2^{\ast}(\rho,\tau),N_2^{\ast}(\rho,\tau))$ denote an \ARO{} solution to \eqref{eq: exact}.\\

\noindent \noindent \emph{Step 1: Determine optimal decision rules $d_2(\cdot)$ and $N_2(\cdot)$ for fixed $d_1$}\\
\noindent Fix stage-1 variable $d_1$. Similar to the result \eqref{eq: mizuta-rule}, we will show that it is optimal to deliver either the minimum or the maximum number of fractions in stage-2. Moreover, \eqref{eq: exact-2} is the only OAR dose-limiting constraint, so it will hold with equality if this does not violate variable bounds \eqref{eq: exact-4} and \eqref{eq: exact-5}. We will show that the latter is not the case. The theorem below summarizes the result.
\begin{theorem} \label{theorem: exact-stage2} 
Let $d_1$ be the stage-1 decision of \eqref{eq: exact}. The decision rules
\begin{align} \label{eq: exact-N2}
N_2^{\ast}(\rho,\tau) &=
\begin{cases}
N_2^{\text{min}} & \text{ if } \tau \geq \sigma \rho  \\
N_2^{\text{max}} & \text{ otherwise},
\end{cases}
\end{align}
and 
\begin{align} \label{eq: exact-d2}
d_2^{\ast}(d_1;\rho,\tau) &=
\begin{cases}
g(d_1,N_1,N_2^{\text{min}};\rho) & \text{ if } \tau \geq \sigma \rho  \\
g(d_1,N_1,N_2^{\text{max}};\rho) & \text{ otherwise}
\end{cases}
\end{align}
are optimal to \eqref{eq: exact} for the given $d_1$. These provide the unique optimal decisions unless $\tau = \sigma \rho$.
\end{theorem}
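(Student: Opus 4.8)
The plan is to prove the theorem pointwise in the scenario: fix the stage-1 decision $d_1$ and an arbitrary realization $(\rho,\tau)\in Z$, and optimize $d_2$ and $N_2$ for that scenario. Since \Cref{def: optimal-DR} requires only scenariowise optimality among \ARF{} tuples, solving each scenario separately yields the claimed rules. For fixed $d_1$ the terms $N_1 d_1$ and $N_1 d_1^2$ are constants, so the stage-2 objective is the ``variable part'' $N_2 d_2 + \tau N_2 d_2^2$, and the only dose-limiting restriction is \eqref{eq: exact-2} together with $d_2\ge d^{\text{min}}$. First I would fix $N_2$ and optimize over $d_2$: the variable part has derivative $N_2(1+2\tau d_2)>0$, hence is strictly increasing in $d_2$, so the optimum pushes $d_2$ to the largest value permitted by \eqref{eq: exact-2}. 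Solving that constraint at equality for $d_2$ returns exactly $d_2=g(d_1,N_1,N_2;\rho)$ by the definition \eqref{eq: g}. This is legitimate provided $g(d_1,N_1,N_2;\rho)\ge d^{\text{min}}$, i.e.\ the saturating dose does not drop below the floor; I defer this feasibility check to the end.

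The crux is the reduction over $N_2$. With \eqref{eq: exact-2} active I would set $P:=N_2 d_2$ and $Q:=N_2 d_2^2$ and use $\sigma P+\rho\sigma^2 Q=B(d_1,N_1;\rho)$ to eliminate $Q$, turning the variable part $P+\tau Q$ into $\frac{\sigma\rho-\tau}{\sigma\rho}\,P+\text{const}$, an affine function of the total stage-2 dose $P$ whose slope has the sign of $\sigma\rho-\tau$. It then remains to show that $P=N_2\,g(d_1,N_1,N_2;\rho)$ is strictly increasing in $N_2$ on the continuous relaxation; differentiating the closed form $N_2 g=\frac{1}{2\sigma\rho}\bigl(-N_2+\sqrt{N_2^2+4\rho B N_2}\bigr)$ reduces this to the elementary inequality $(2N_2+4\rho B)^2>4(N_2^2+4\rho B N_2)$, which holds whenever $B>0$ (and $B>0$ here because a feasible stage-2 plan exists). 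Combining the two monotonicities: when $\tau<\sigma\rho$ the objective strictly increases in $N_2$, so the optimum is $N_2^{\text{max}}$; when $\tau>\sigma\rho$ it strictly decreases, so the optimum is $N_2^{\text{min}}$; and at $\tau=\sigma\rho$ the slope vanishes and every $N_2$ is equally good. Because $P$ is strictly monotone, so is the objective over the integer grid $\{N_2^{\text{min}},\dots,N_2^{\text{max}}\}$ when $\tau\ne\sigma\rho$, which yields both the endpoint selection \eqref{eq: exact-N2}--\eqref{eq: exact-d2} and the uniqueness claim; the degenerate case $\tau=\sigma\rho$ is exactly the stated exception.

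The remaining work is the feasibility bookkeeping, which I expect to be the main obstacle. I must verify $g(d_1,N_1,N_2;\rho)\ge d^{\text{min}}$ at the selected endpoint for every $(\rho,\tau)\in Z$, and this is what \Cref{ass: exact-dminmax} encodes. The key observation is that the combined OAR constraint for a plan delivering $N_1$ fractions of $d_1$ and $N_2$ fractions of $d_2$, namely $\sigma(N_1 d_1+N_2 d_2)+\rho\sigma^2(N_1 d_1^2+N_2 d_2^2)\le\text{BED}_{\text{tol}}(\rho)$, is symmetric under swapping $(d_1,N_1)\leftrightarrow(d_2,N_2)$; hence $g(d_1,N_1,N_2;\rho)\ge d^{\text{min}}\iff g(d^{\text{min}},N_2,N_1;\rho)\ge d_1$. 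Since $g$ is decreasing in its dose argument, it suffices to check $d_1=d_1^{\text{max}}$, which is precisely the form of the bound in \Cref{ass: exact-dminmax}. Finally, feasibility in $\rho$ is linear: the slack $\text{BED}_{\text{tol}}(\rho)-\text{LHS}$ is affine in $\rho$, so nonnegativity on a $\rho$-interval reduces to its endpoints. Identifying, for each endpoint $N_2\in\{N_2^{\text{min}},N_2^{\text{max}}\}$, the sub-range of $\rho$ on which that endpoint is actually selected (governed by $\tau\lessgtr\sigma\rho$) pins down which $\rho$-values must be tested, yielding the three terms $\rho_L$, $\tau_L/\sigma$, $\rho_U$ in the assumption. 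Carefully matching these sub-ranges and endpoint slopes to the three listed terms is the fiddly part; the rest is the clean two-step monotonicity argument above.
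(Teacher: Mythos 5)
Your proposal is correct, and its core coincides with the paper's own proof: fix the scenario, saturate the OAR constraint \eqref{eq: exact-2} (the only dose-limiting one) to get $d_2 = g(d_1,N_1,N_2;\rho)$, eliminate the quadratic term via the active constraint so the stage-2 objective becomes $\frac{\sigma\rho-\tau}{\sigma\rho}\,N_2 d_2$ plus a constant, and prove that $N_2\, g(d_1,N_1,N_2;\rho)$ is strictly increasing in $N_2$ when $B(d_1,N_1;\rho)>0$, so the sign of $\sigma\rho-\tau$ selects $N_2^{\text{min}}$ or $N_2^{\text{max}}$ and gives uniqueness except at $\tau=\sigma\rho$ (your explicit computation reducing to $16\rho^2 B^2>0$ is exactly the step the paper labels ``straightforward''). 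The one genuine difference is in the feasibility bookkeeping: the paper verifies $d_1 \leq g(d^{\text{min}},N_2,N_1;\rho)$ over each $\rho$-subrange by appealing to \Cref{lemma: g-omega} (monotonicity of $g$ in $\rho$, whose proof needs a case analysis on the roots $d_1^{-}(N_2)$, $d_1^{+}(N_2)$), whereas you observe that the condition is nonnegativity of the BED slack of the combined plan ($N_1$ fractions of $d_1$ plus $N_2$ fractions of $d^{\text{min}}$), which is affine in $\rho$, so interval feasibility reduces to endpoint feasibility --- a more elementary route that bypasses \Cref{lemma: g-omega} for this purpose. Do close the loose end you flagged: for the $N_2^{\text{min}}$ branch the selected subrange is $\rho \in [\rho_L, \min\{\tau_U/\sigma,\rho_U\}]$, whose upper endpoint is a fourth value not listed in \Cref{ass: exact-dminmax} (the paper dismisses the corresponding condition using that $g$ is decreasing in its second argument). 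Your affine-slack framework handles it in two lines: the slack of the $N_2^{\text{min}}$ plan exceeds that of the $N_2^{\text{max}}$ plan at every $\rho$ (fewer $d^{\text{min}}$-fractions cost less BED), and the latter slack is nonnegative on all of $[\tau_L/\sigma,\rho_U]$, which contains $\min\{\tau_U/\sigma,\rho_U\}$, by \Cref{ass: exact-dminmax} together with affineness in $\rho$.
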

\begin{proof}
See \Cref{app: proof-exact-stage2}.
\end{proof}
Clearly, these decision rules are non-linear, and in fact split the uncertainty region in two parts: one where it is optimal to deliver the minimum number of fractions $N_2^{\text{min}}$ in stage 2, and one where it is optimal to deliver the maximum number of fractions $N_2^{\text{max}}$ in stage 2. This suggests splitting the uncertainty set as follows:
\begin{subequations} \label{eq: Zsplit-fixed}
\begin{align}
Z^{\text{min}} &:= Z \cap \{\tau \geq \sigma\rho \}  \label{eq: Zmin}\\
Z^{\text{max}} &:= Z\cap \{ \tau < \sigma\rho \}. \label{eq: Zmax}
\end{align}
\end{subequations}
An illustration is provided in \Cref{fig: Z-box}.\\
\begin{figure}
\centering
\includegraphics{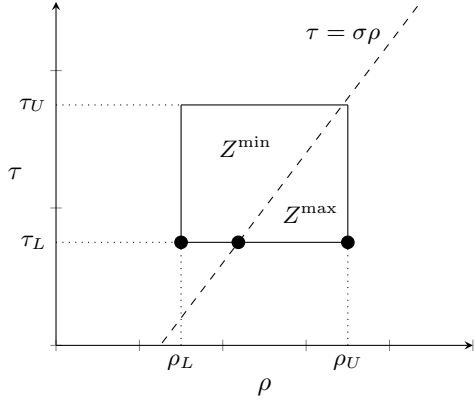}
\caption{ \small Split of uncertainty set $Z$ according to \eqref{eq: Zsplit-fixed}. The circles indicate the locations of the candidate worst-case scenarios for \eqref{eq: exact2}. \label{fig: Z-box}}
\end{figure}

\noindent \noindent \emph{Step 2: Plug in optimal decision rules and solve for $d_1$}\\
In order to find an \ARO{} $d_1$, we introduce the following objective function for given $(\rho,\tau)$:
\begin{align} \label{eq: f}
f(d_1,N_2 ; \rho,\tau) :=%
\begin{cases}
 \!\begin{aligned} &N_1d_1+ N_2 g(d_1,N_1,N_2;\rho) \\
 + &\tau \big(N_1d_1^2+N_2g(d_1,N_1,N_2;\rho)^2\big) \end{aligned} & \text{ if } d_1 \in [0, g(0,0,N_1;\rho)]  \\
  -\infty & \text{ otherwise},
 \end{cases}
\end{align}
where, for given $\rho$, the value $g(0,0,N_1;\rho)$ is the maximum dose that can be delivered in stage 1 due to the nonnegativity restriction on the stage-2 dose. From \Cref{ass: exact-dminmax} it follows that $g(0,0,N_1;\rho) \geq d_1^{\text{max}}$ for all $(\rho,\tau) \in Z$, so $f$ is finite for all feasible $d_1$. According to \Cref{lemma: convexity-f} in \Cref{app: auxiliary-lemmas}, function $f$ is either convex, concave or constant in $d_1$.

Plugging in \eqref{eq: exact-N2} and \eqref{eq: exact-d2} and using definition \eqref{eq: f} allows us to rewrite \eqref{eq: exact} to a problem of only variable $d_1$:
\begin{subequations}\label{eq: exact2}
\begin{align}
\max_{d_1}~&~ \min_{(\rho,\tau) \in Z}~ f(d_1,N_2^{\ast}(\rho,\tau), \rho,\tau) \label{eq: exact2-1} \\
\text{s.t.} ~&~ d^{\text{min}} \leq d_1 \leq d_1^{\text{max}}.
\end{align}
\end{subequations}
As noted in \Cref{sec: exact-problem-formulation}, in any worst-case realization it will hold that $\tau = \tau_L$, so it is sufficient to consider only those observations with $\tau = \tau_L$.

In order to reformulate \eqref{eq: exact2}, we make use of the properties of $g$ and $f$ in \Cref{lemma: fg-omega} in \Cref{app: auxiliary-lemmas}. In particular, \Cref{lemma: f-omega} states that function $f$ is either increasing or decreasing in $\rho$ for fixed $d_1$. Hence, if we move \eqref{eq: exact2-1} to a constraint and split according to \eqref{eq: Zsplit-fixed}, for both $Z^{\text{min}}$ and $Z^{\text{max}}$ it is sufficient to consider the constraint for the highest and lowest value of $\rho$ in the uncertainty set. With $\tau=\tau_L$, this yields the scenarios $(\rho_L,\tau_L)$ and $(\frac{\tau_L}{\sigma},\tau_L)$ for $Z^{\text{min}}$ and $(\frac{\tau_L}{\sigma},\tau_L)$ and $(\rho_U,\tau_L)$ for $Z^{\text{max}}$. 

Therefore, the three candidate worst-case scenarios are $(\rho_L,\tau_L)$, $(\rho_U,\tau_L)$ and $(\frac{\tau_L}{\sigma},\tau_L)$; their locations are indicated in \Cref{fig: Z-box}. By \Cref{lemma: convexity-f}, the objective value in the third scenario is equal to $K$, with
\begin{align} \label{eq: K}
K = \frac{1}{\sigma}B\big(0,0,\frac{\tau_L}{\sigma}\big).
\end{align}
This is the maximum target BED that can be attained if radiation sensitivity parameters are exactly such that fractionation has no influence on the optimal objective value. It is equal to the maximum tolerable OAR BED for these radiation sensitivity parameters, divided by the generalized OAR dose sparing factor $\sigma$. 

Putting everything together, we conclude that if $(d_1^{\ast},d_2^{\ast}(\cdot),N_2^{\ast}(\cdot))$ is \ARO{} to the EDP \eqref{eq: exact} then there exists a $q^{\ast} \in \mathbb{R}_+$ such that $(d_1^{\ast},q^{\ast})$ is an optimal solution to
\begin{subequations}\label{eq: exact3}
\begin{align}
\max_{d_1,q}~&~ q \\
\text{s.t.} ~&~ q \leq f(d_1,N_2^{\text{min}};\rho_L,\tau_L) \label{eq: exact3-2}\\
            ~&~ q \leq f(d_1,N_2^{\text{max}};\rho_U,\tau_L) \label{eq: exact3-3}\\
            ~&~ q  \leq K \label{eq: exact3-4} \\
            ~&~ d^{\text{min}} \leq d_1 \leq d_1^{\text{max}}. \label{eq: exact3-5}
\end{align}
\end{subequations}
Conversely, if $(d_1^{\ast},q^{\ast})$ is an optimal solution to \eqref{eq: exact3} and $N_2^{\ast}(\cdot)$ and $d_2^{\ast}(\cdot)$ are given by \eqref{eq: exact-N2} and \eqref{eq: exact-d2}, respectively, then $(d_1^{\ast},d_2^{\ast}(\cdot),N_2^{\ast}(\cdot))$ is \ARO{} to \eqref{eq: exact}. Hence, \eqref{eq: exact3} and EDP \eqref{eq: exact} are equivalent.

According to \Cref{lemma: convexity-f}, the RHS of \eqref{eq: exact3-2} and \eqref{eq: exact3-3} is convex and concave in $d_1$, respectively. Hence, \eqref{eq: exact3} asks to find the value of $d_1$ that maximizes the minimum of a univariate convex \eqref{eq: exact3-2}, concave \eqref{eq: exact3-3} and constant \eqref{eq: exact3-4} function on a closed interval \eqref{eq: exact3-5}. \Cref{lemma: intersection} in \Cref{app: auxiliary-lemmas} provides information on the intersection points of the functions \eqref{eq: exact3-2}-\eqref{eq: exact3-4}. Consequently, the optimal solution(s) to \eqref{eq: exact3} is/are easily found.

\Cref{fig: exact} illustrates a possible instance of \eqref{eq: exact3}, displaying constraints \eqref{eq: exact3-2}-\eqref{eq: exact3-4}. In this case, the set of optimal solutions is the union of the two intervals for $d_1$ where constraint \eqref{eq: exact3-4} is active. This is indicated in red. Dose constraints \eqref{eq: exact3-5} may cut off part of these intervals. If due to constraint \eqref{eq: exact3-5} both these intervals are infeasible, the optimum is at one of the boundaries for $d_1$.

\begin{figure}
\centering
\includegraphics{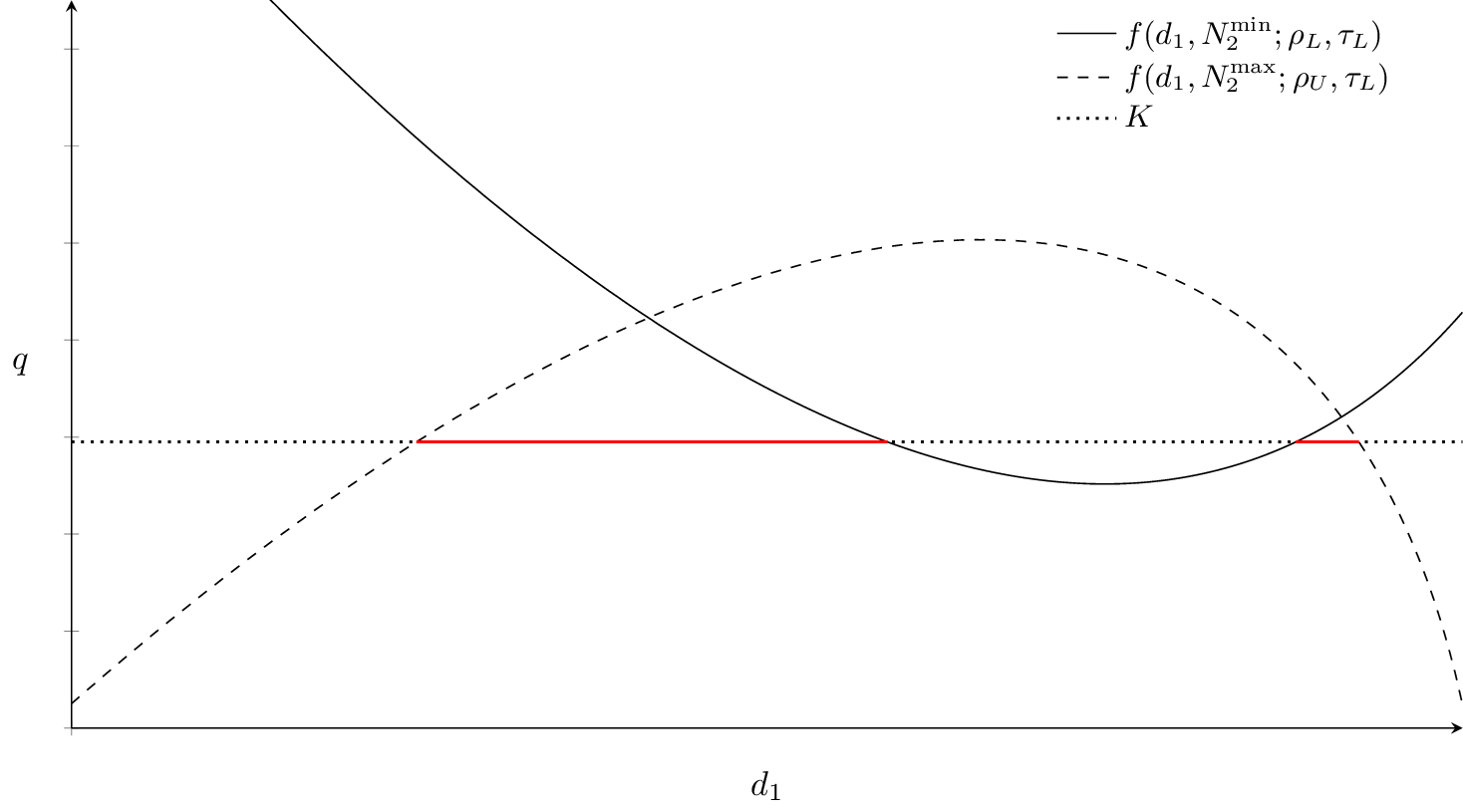}
\caption{\small Schematic illustration of \eqref{eq: exact3}. The solid and dashed curves represent constraints \eqref{eq: exact3-2} and \eqref{eq: exact3-3}, respectively, and the dotted line represent constraint \eqref{eq: exact3-4}. The optimal intervals are indicated in red.\label{fig: exact}}
\end{figure}

\subsection{Pareto adjustable robustly optimal solutions} \label{sec: exact-PARO}
\Cref{fig: exact} illustrates that it is possible that there are multiple optimal solutions to \eqref{eq: exact3}. These solutions are \ARO{} stage-1 solutions to the EDP \eqref{eq: exact}. In general, in case there are multiple \ARO{} solutions these may perform vastly different if a non-worst-case scenario realizes \citep{DeRuiter16}. \citet{Iancu14} studies static robust optimization problems with multiple robustly optimal solutions, and introduce the concept of Pareto robustly optimal (\PRO{}) solutions. A robustly optimal solution is called \PRO{} if there is no other robustly feasible solution that has equal or better objective value for all scenarios in the uncertainty set, while being strictly better for at least one scenario. Non-\PRO{} solutions are dominated by at least one \PRO{} solution and are therefore not desired\footnote{The concept of Pareto robust optimality closely resembles the concept of Pareto efficiency in multi-criteria optimization (MCO). In MCO, Pareto efficient solutions can only be improved in one criteria at the cost of a deterioration in another criteria. Only Pareto efficient solutions are of interest, and the overall goal in MCO is to compute this set of solutions (known as the Pareto surface).}. The concept has previously been applied to breast cancer treatment planning by \citet{Mahmoudzadeh15}.

\citet{Iancu14} study \PRO{} solutions solely for static RO problems; we generalize the concept to 2-stage adjustable robust optimization problems.
\begin{definition}[Pareto adjustable robustly optimal] \label{def: PARO-xy}
An \ARO{} tuple $(d_1,d_2(\cdot),N_2(\cdot))$ is Pareto adjustable robustly optimal (\PARO{}) to \eqref{eq: exact} if there is no tuple $(\bar{d_1},\bar{d_2}(\cdot),\bar{N_2}(\cdot))$ that is \ARO{} to \eqref{eq: exact} and satisfies
\begin{subequations}\label{eq: PARO-conditions}
\small 
\begin{align}
\begin{aligned}
& N_1 d_1 + N_2(\rho,\tau) d_2(\rho,\tau) + \tau (N_1 d_1^2 + N_2(\rho,\tau) d_2(\rho,\tau)^2) \\
& \hspace*{1cm} \leq N_1 \bar{d_1} + \bar{N_2}(\rho,\tau) \bar{d_2}(\rho,\tau) + \tau (N_1 \bar{d_1}^2 + \bar{N_2}(\rho,\tau) \bar{d_2}(\rho,\tau)^2)~~ \forall (\rho, \tau) \in Z 
\end{aligned}\\[1em]
\begin{aligned}
& N_1 d_1 + N_2(\bar{\rho},\bar{\tau}) d_2(\bar{\rho},\bar{\tau}) + \tau (N_1 d_1^2 + N_2(\bar{\rho},\bar{\tau}) d_2(\bar{\rho},\bar{\tau})^2) \\
& \hspace*{1cm} < N_1 \bar{d_1} + \bar{N_2}(\bar{\rho},\bar{\tau}) \bar{d_2}(\bar{\rho},\bar{\tau}) + \bar{\tau} (N_1 \bar{d_1}^2 + \bar{N_2}(\bar{\rho},\bar{\tau}) \bar{d_2}(\bar{\rho},\bar{\tau})^2) ~~\text{for some } (\bar{\rho}, \bar{\tau}) \in Z.
\end{aligned}
\end{align}
\end{subequations}
\end{definition}
\noindent We also define the concept \PARO{} for the stage-1 decision $d_1$ individually.
\begin{definition}[Pareto adjustable robustly optimal $d_1$] \label{def: PARO-x}
A stage-1 decision $d_1$ is \PARO{} to \eqref{eq: exact} if there exist decision rules $N_2(\cdot)$ and $d_2(\cdot)$ such that $(d_1,d_2(\cdot),N_2(\cdot))$ is \PARO{} to \eqref{eq: exact}.
\end{definition}
If there are multiple \ARO{} solutions, we wish to pick one that is \PARO{}. In general, finding PARO solutions is hard, because it requires comparing the performance of both stage-1 decisions and stage-2 decision rules on multiple scenarios simultaneously. However, for the current problem \eqref{eq: exact-N2} and \eqref{eq: exact-d2} are optimal decision rules. Plugging these in conditions \eqref{eq: PARO-conditions} reduces the problem of finding a PARO solution to solely comparing the performance of \ARO{} stage-1 decisions $d_1$ in non-worst-case scenarios.

In \citet{Iancu14} it is shown for linear optimization that, if we optimize over the robustly optimal solutions for a second criterion (scenario) that is in the relative interior of the uncertainty set, the resulting solution(s) are \PRO{}. In a similar fashion \PARO{} solutions to the current problem can be found. Let $X^{\text{ARO}}$ denote the set of \ARO{} stage-1 solutions to \eqref{eq: exact}. It turns out that consecutively optimizing over an auxiliary scenario where hyperfractionation is optimal and an auxiliary scenario where hypofractionation is optimal yields a set of \PARO{} solutions. Let $(\rho^{\text{aux-min}}, \tau^{\text{aux-min}}) \in \text{int} ( Z^{\text{min}} )$, where $\text{int}(\cdot)$ is the interior operator. Define the auxiliary optimization problem for the hypofractionation scenario:
\begin{align}\label{eq: ARO-exact-aux-min}
\max_{d_1 \in X^{\text{ARO}}} f\big(d_1,N_2^{\text{min}};\rho^{\text{aux-min}},\tau^{\text{aux-min}}\big).
\end{align}
Denote the set of optimal solutions to \eqref{eq: ARO-exact-aux-min} by $X^{\text{aux-min}}$. Similarly, let $(\rho^{\text{aux-max}}, \tau^{\text{aux-max}}) \in \text{int} (Z^{\text{max}} )$. Define the auxiliary optimization problem for the hyperfractionation scenario:
\begin{align}\label{eq: ARO-exact-aux-max}
\max_{d_1 \in X^{\text{aux-min}}} f\big(d_1,N_2^{\text{max}};\rho^{\text{aux-max}},\tau^{\text{aux-max}}\big).
\end{align}
Note that it uses $X^{\text{aux-min}}$ as input, i.e., we solve the auxiliary problems consecutively. Denote the set of optimal solutions to \eqref{eq: ARO-exact-aux-max} by $X^{\text{PARO}}$.
\begin{theorem} \label{theorem: PARO}
All solutions in $X^{\text{PARO}}$ are \PARO{} to \eqref{eq: exact}.
\end{theorem}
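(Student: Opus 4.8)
The plan is to argue by contradiction, exploiting that \Cref{theorem: exact-stage2} fixes the stage-2 decision rules once $d_1$ is chosen (uniquely, except on the set $\tau=\sigma\rho$, where the choice is immaterial to the objective). Consequently every \ARO{} tuple is determined by its stage-1 decision, and the objective attained in scenario $(\rho,\tau)$ by the tuple with stage-1 decision $d_1$ equals $f(d_1,N_2^{\ast}(\rho,\tau);\rho,\tau)$ with $N_2^{\ast}$ as in \eqref{eq: exact-N2}. Thus a stage-1 decision fails to be \PARO{} (\Cref{def: PARO-xy}) precisely when some \ARO{} stage-1 decision $\bar d_1$ satisfies $f(\bar d_1,N_2^{\ast};\rho,\tau)\ge f(d_1,N_2^{\ast};\rho,\tau)$ for all $(\rho,\tau)\in Z$ with strict inequality somewhere. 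I would fix an arbitrary $d_1^{\ast}\in X^{\text{PARO}}\subseteq X^{\text{aux-min}}\subseteq X^{\text{ARO}}$, assume such a dominating $\bar d_1$ exists, and derive a contradiction.

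The enabling observation is a reformulation of the objective along the saturated constraint. Since the optimal rules make \eqref{eq: exact-2} active with $d_2=g$, writing $L=N_1d_1+N_2d_2$, $Q=N_1d_1^2+N_2d_2^2$ and $M(\rho)=\mathrm{BED}_{\mathrm{tol}}(\rho)/\sigma$, saturation gives $L+\sigma\rho\,Q=M(\rho)$, hence
\[
f(d_1,N_2;\rho,\tau)=M(\rho)+(\tau-\sigma\rho)\,Q(d_1,N_2;\rho).
\]
Therefore the objective gap between $\bar d_1$ and $d_1^{\ast}$ in scenario $(\rho,\tau)$ factorizes as $(\tau-\sigma\rho)(\bar Q-Q)$, where $\bar Q,Q$ are the saturated quadratic doses. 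On $Z^{\text{min}}$ (where $\tau\ge\sigma\rho$ and $N_2^{\ast}=N_2^{\text{min}}$) domination is equivalent to $\bar Q\ge Q$, and on $Z^{\text{max}}$ (where $\tau<\sigma\rho$ and $N_2^{\ast}=N_2^{\text{max}}$) it is equivalent to $\bar Q\le Q$, in each case for every admissible $\rho$.

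Next I would carry out the lexicographic step. Because $\bar d_1\in X^{\text{ARO}}$ while $d_1^{\ast}$ maximizes $f(\cdot,N_2^{\text{min}};\rho^{\text{aux-min}},\tau^{\text{aux-min}})$ over $X^{\text{ARO}}$ in \eqref{eq: ARO-exact-aux-min}, domination at the interior scenario $(\rho^{\text{aux-min}},\tau^{\text{aux-min}})\in\mathrm{int}(Z^{\text{min}})$ forces equality there, so $\bar d_1\in X^{\text{aux-min}}$; repeating this at $(\rho^{\text{aux-max}},\tau^{\text{aux-max}})\in\mathrm{int}(Z^{\text{max}})$ via \eqref{eq: ARO-exact-aux-max} shows $\bar d_1$ attains the same value as $d_1^{\ast}$ at the second auxiliary scenario. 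Crucially, since each auxiliary scenario is interior, the gap $(\tau-\sigma\rho)(\bar Q-Q)$ vanishes at a point where $\tau-\sigma\rho\ne0$; hence $\bar Q=Q$ there, and by $L-\bar L=\sigma\rho(\bar Q-Q)$ also $\bar L=L$ there. Thus the linear and quadratic doses of the two plans coincide at $(\rho^{\text{aux-min}},N_2^{\text{min}})$ and at $(\rho^{\text{aux-max}},N_2^{\text{max}})$.

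The remaining — and hardest — step is to propagate these two interior coincidences to a global one. Matching both $N_1d_1+N_2g$ and $N_1d_1^2+N_2g^2$ at a single scenario yields, after eliminating the stage-2 dose, either $\bar d_1=d_1^{\ast}$ or a ``dose-swap'' relation $\bar d_1+d_1^{\ast}=\bar g+g$ with $N_1(\bar d_1-d_1^{\ast})=N_2(g-\bar g)$. The swap relation pins $\bar d_1,d_1^{\ast}$ symmetrically about the minimizer of the convex (by \Cref{lemma: convexity-f}) map $d_1\mapsto Q(d_1,N_2;\rho)$, whose location depends on both the scenario and the fraction count. Since $N_2^{\text{min}}\ne N_2^{\text{max}}$ and $\rho^{\text{aux-min}}\ne\rho^{\text{aux-max}}$, the two swap relations are centered differently, so two distinct points cannot satisfy both, while the monotone $\rho$-dependence recorded in \Cref{lemma: f-omega} excludes the degenerate coincidence. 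Hence $\bar d_1=d_1^{\ast}$, the two plans agree in every scenario, and no strict improvement is possible, contradicting the assumed domination. I expect this incompatibility-of-swaps argument to be the main obstacle, as it is the only place where the precise form of $g$ and the distinctness of the auxiliary data are genuinely used.
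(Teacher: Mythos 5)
Your first two steps are sound and track the paper's own reduction: \Cref{theorem: exact-stage2} collapses the comparison to stage-1 decisions with $f(\cdot,N_2^{\ast}(\rho,\tau);\rho,\tau)$, the saturation identity $f=\mathrm{BED}_{\mathrm{tol}}(\rho)/\sigma+(\tau-\sigma\rho)Q$ is a valid rewriting (it is implicit in the paper's proof of \Cref{theorem: exact-stage2}), and the lexicographic argument correctly forces any dominator $\bar d_1$ of $d_1^{\ast}\in X^{\text{PARO}}$ to lie in $X^{\text{aux-min}}$ and then in $X^{\text{PARO}}$ itself, with $L$ and $Q$ matching those of $d_1^{\ast}$ at both auxiliary scenarios. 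Your ``dose-swap'' relation is exactly the paper's twin-point map $t(\cdot;\rho,\tau)$ of \Cref{lemma: twins}. In particular, if $X^{\text{PARO}}$ is a singleton your argument already closes the contradiction, just as in the paper.

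The gap is in your final propagation step. You claim that two distinct points cannot satisfy the swap relation simultaneously at $(\rho^{\text{aux-min}},N_2^{\text{min}})$ and at $(\rho^{\text{aux-max}},N_2^{\text{max}})$ because the relations are ``centered differently.'' This is unjustified and, as far as the paper's analysis goes, false: the twin maps are decreasing \emph{nonlinear} maps, not reflections about distinct centers, so requiring $t(d_1';\rho^{\text{aux-min}},\tau^{\text{aux-min}})=t(d_1';\rho^{\text{aux-max}},\tau^{\text{aux-max}})$ is a single scalar equation that can have solutions; the appeal to \Cref{lemma: f-omega} (monotonicity of $f$ in $\rho$ for \emph{fixed} $d_1$) does not exclude this. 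Indeed, the paper's own proof explicitly admits the configuration you claim is impossible --- $|X^{\text{PARO}}|=2$ with $d_1'<d_1''$ twinned with respect to \emph{both} auxiliary scenarios --- and resolves it not by impossibility but by showing mutual non-domination: since $d_1^{-}(N_2^{\text{min}})\neq d_1^{-}(N_2^{\text{max}})$ (\Cref{lemma: intersection2}) and $d_1'$ cannot equal the upper exceptional points, $d_1'$ avoids $\{d_1^{-}(N_2),d_1^{+}(N_2)\}$ for at least one of the two fraction counts, so by \Cref{lemma: g-omega} the relevant twin map is strictly monotone in $\rho$ at $d_1'$; perturbing that auxiliary scenario to $\rho\pm\varepsilon$ then produces one interior scenario where $d_1'$ strictly beats $d_1''$ and another where it strictly loses. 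That perturbation argument (or a genuine proof of the impossibility you assert) is what your proposal is missing; without it, the case in which the purported dominator is the other twin remains open and the contradiction is not reached.
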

\begin{proof}
See \Cref{app: proof-PARO}. 
\end{proof}
Solving \eqref{eq: ARO-exact-aux-min} or \eqref{eq: ARO-exact-aux-max} entails maximizing a strictly convex or strictly concave function over a feasible set consisting of a small number of intervals or points. Hence, these auxiliary problems are easily solved. Note that the second auxiliary problem is only relevant if the first auxiliary problem has multiple optimal solutions. Switching their order, and optimizing \eqref{eq: ARO-exact-aux-min} over the set $X^{\text{aux-max}}$ may lead to different solutions, and these are also \PARO{}. Thus, in general $X^{\text{PARO}}$ does not contain all \PARO{} solutions. The two-step approach is necessary; numerical results show that optimizing over only one auxiliary scenario may indeed yield non-\PARO{} solutions.

\section{ARO: Biomarkers provide inexact information} \label{sec: inexact}
In this section we present an adjustable robust optimization approach to solve a more realistic version of the adaptive treatment-length problem. Because in practice it is impossible to exactly determine the $\alpha/\beta$ parameters from biomarker data, any values for the $\alpha/\beta$ parameters obtained during treatment are inexact. This section presents a model that accounts for uncertainty in biomarker information.

\subsection{Problem formulation}
The setup for the ARO problem with inexact data is based on \citet{DeRuiter17}. After $N_1$ fractions we obtain an estimate $(\hat{\rho},\hat{\tau})$ for $(\rho,\tau)$, the inverse $\alpha/\beta$ parameters for the OAR and the tumor. It is still assumed that \Cref{ass: omega-rho} holds for uncertainty set $Z$. Furthermore, we assume that $(\rho,\tau),~(\hat{\rho},\hat{\tau}) \in Z$ (as defined in \eqref{eq: Z}) and that $(\hat{\rho},\hat{\tau})-(\rho,\tau) \in \hat{Z}$, with
\begin{align}\label{eq: Zhat}
\hat{Z} = \{(\varepsilon^{\rho}, \varepsilon^{\tau}) \in \mathbb{R}^2: |\varepsilon^{\rho}| \leq r^{\rho}, |\varepsilon^{\tau}| \leq r^{\tau}\}.
\end{align}
Here $r^{\rho}$ and $r^{\tau}$ are parameters that define the accuracy of the estimate/observation $(\hat{\rho},\hat{\tau})$. Set $\hat{Z}$ is the uncertainty set around the inexact observation. This can also be written as $(\rho,\tau) \in \{(\hat{\rho},\hat{\tau})\} + \hat{Z}$, which is the Minkowski sum of a singleton and a set. This new set need not be contained in the original uncertainty set $Z$. Define
\begin{align} \label{eq: U}
U := \{(\rho,\tau, \hat{\rho}, \hat{\tau}): (\rho,\tau),~(\hat{\rho},\hat{\tau}) \in Z,~(\hat{\rho},\hat{\tau})-(\rho,\tau) \in \hat{Z}\},
\end{align}
and
\begin{align} \label{appeq: stage2-Z}
Z_{(\hat{\rho},\hat{\tau})} := \big( \{(\hat{\rho},\hat{\tau})\} + \hat{Z} \big) \cap Z.
\end{align}
The set $U$ contains all possible observation-realization pairs and $Z_{(\hat{\rho},\hat{\tau})}$ contains all possible realizations after observation of $(\hat{\rho},\hat{\tau})$. For given observation $(\hat{\rho},\hat{\tau})$, the new upper and lower bounds for $(\rho,\tau)$ are given by
\begin{subequations} \label{eq: bounds-given-obs}
\begin{align}
\hat{\tau}_L = \max\{\tau_L, \hat{\tau}-r^{\tau}\}, ~~& \hat{\tau}_U = \min\{\tau_U, \hat{\tau}+r^{\tau}\} \\
\hat{\rho}_L = \max\{\rho_L, \hat{\rho}-r^{\rho}\}, ~~& \hat{\rho}_U = \min\{\rho_U, \hat{\rho}+r^{\rho}\}.
\end{align}
\end{subequations}
Compared to \Cref{sec: exact}, we remove \Cref{ass: exact-dminmax} and impose a different (slightly stricter) assumption on the relation between $d^{\text{min}}$, $d_1^{\text{max}}$ and the bounds on $N_2$.
\begin{assumption}\label{ass: inexact-dminmax}
It holds that $d^{\text{min}} \leq d_1^{\text{max}}$ and 
\begin{equation}
\scalebox{0.9}{$ d_1^{\text{max}} \leq \min \Big\{g(d_1^{\text{min}},N_2^{\text{min}},N_1; \rho_L), g(d_1^{\text{min}},N_2^{\text{max}},N_1; \max\{\rho_L,\frac{\tau_L}{\sigma}-2r^{\rho}\}), g(d_1^{\text{min}},N_2^{\text{max}},N_1; \rho_U) \Big\}.$}
\end{equation}
\end{assumption}
\noindent The inexact data problem (IDP) analogous to \eqref{eq: exact} is given by
\begin{subequations}\label{eq: inexact}
\begin{align}
\max_{d_1,d_2(\hat{\rho},\hat{\tau}),N_2(\hat{\rho},\hat{\tau})}~&~ \min_{(\rho,\tau, \hat{\rho}, \hat{\tau}) \in U} N_1 d_1 + N_2(\hat{\rho},\hat{\tau}) d_2(\hat{\rho},\hat{\tau}) + \tau (N_1 d_1^2 + N_2(\hat{\rho},\hat{\tau}) d_2(\hat{\rho},\hat{\tau})^2), \label{eq: inexact-1} \\
\begin{split}
\text{s.t.}~~~~ ~&~ \sigma (N_1 d_1 + N_2(\hat{\rho},\hat{\tau}) d_2(\hat{\rho},\hat{\tau}))  + \rho \sigma^2 (N_1 d_1^2 + N_2(\hat{\rho},\hat{\tau}) d_2(\hat{\rho},\hat{\tau})^2) \\
& ~~~ \leq \text{BED}_{\text{tol}}(\rho),~~ \forall (\rho,\tau, \hat{\rho}, \hat{\tau}) \in U
\end{split} \label{eq: inexact-2} \\[0.5em]
            ~&~ N_2(\hat{\rho},\hat{\tau}) \in \{ N_2^{\text{min}},\dotsc, N_2^{\text{max}} \}, ~~\forall (\rho,\tau, \hat{\rho}, \hat{\tau}) \in U \label{eq: inexact-3} \\
            ~&~ d_2(\hat{\rho},\hat{\tau}) \geq d^{\text{min}}, ~~\forall (\rho,\tau, \hat{\rho}, \hat{\tau}) \in U \label{eq: inexact-4} \\
            ~&~ d^{\text{min}} \leq d_1 \leq d_1^{\text{max}}. \label{eq: inexact-5}
\end{align}
\end{subequations}

For stage-2 variables $d_2(\hat{\rho},\hat{\tau})$ and $N_2(\hat{\rho},\hat{\tau})$ it is indicated that they are a function of the observations $(\hat{\rho},\hat{\tau})$ instead of the uncertain parameters $(\rho,\tau)$. Similar to \Cref{sec: exact}, we formally define several properties of solutions. Let $X(\rho,\tau, \hat{\rho}, \hat{\tau})$ denote the feasible region defined by constraints \eqref{eq: inexact-2}-\eqref{eq: inexact-5} for fixed $(\rho,\tau, \hat{\rho}, \hat{\tau})$. 

\begin{definition}[Adjustable robust feasibility] \label{def: inexact-feasibility}
A tuple $(d_1,d_2(\cdot),N_2(\cdot))$ is adjustable robustly feasible (\ARF{}) to \eqref{eq: inexact} if $(d_1,d_2(\hat{\rho},\hat{\tau}),N_2(\hat{\rho},\hat{\tau})) \in X(\rho,\tau,\hat{\rho},\hat{\tau})$ for all $(\rho,\tau,\hat{\rho},\hat{\tau}) \in U$.
\end{definition}
\noindent Optimality of a decision rule is defined as follows.
\begin{definition}[Optimal decision rule] \label{def: inexact-optimal-DR}
For a given $d_1$, a decision rule pair $(d_2(\cdot),N_2(\cdot))$ is optimal to \eqref{eq: inexact} if $(d_1,d_2(\cdot),N_2(\cdot))$ is \ARF{} and for any $(\hat{\rho},\hat{\tau}) \in Z$ it holds that
\begin{align}
\begin{aligned}
\min_{(\rho,\tau) \in Z_{(\hat{\rho},\hat{\tau})}} N_1 d_1 + N_2(\hat{\rho},\hat{\tau}) d_2(\hat{\rho},\hat{\tau}) + \tau (N_1 d_1^2 + N_2(\hat{\rho},\hat{\tau}) d_2(\hat{\rho},\hat{\tau})^2)\\
\geq \min_{(\rho,\tau) \in Z_{(\hat{\rho},\hat{\tau})}} N_1 d_1 + \bar{N_2}(\hat{\rho},\hat{\tau}) \bar{d_2}(\hat{\rho},\hat{\tau}) + \tau (N_1 d_1^2 + \bar{N_2}(\hat{\rho},\hat{\tau}) \bar{d_2}(\hat{\rho},\hat{\tau})^2)
\end{aligned}
\end{align}
for every $(\bar{d_2}(\cdot),\bar{N_2}(\cdot))$ such that $(d_1,\bar{d_2}(\cdot),\bar{N_2}(\cdot))$ is \ARF{}.
\end{definition}
Note that for exact data, an optimal decision rule yields the optimal decision for any \emph{realization} in the uncertainty set $Z$ (given $d_1$). For inexact data, we call a decision rule optimal if it yields the maximum worst-case (guaranteed) objective value for any \emph{observation} in the uncertainty set $Z$.

\subsection{Optimal decision rules and conservative approximation} \label{sec: inexact-solution}
Depending on both the observed $(\hat{\rho}, \hat{\tau})$ and the quality of the biomarker information (i.e., $r^{\rho}$ and $r^{\tau})$, we may be able to immediately determine the optimal value for $N_2$. Therefore, we split the uncertainty set for the observations $(\hat{\rho},\hat{\tau})$. Define
\begin{subequations} \label{eq: def-Zi}
\begin{align}
Z_{\text{ID}}^{\text{min}} & = \{(\hat{\rho},\hat{\tau}) \in Z :  \hat{\tau}_L \geq \sigma \hat{\rho}_U  \} \label{eq: def-Z1}\\
Z_{\text{ID}}^{\text{int}} & = \{(\hat{\rho},\hat{\tau}) \in Z :  \sigma \hat{\rho}_L < \hat{\tau}_L < \sigma \hat{\rho}_U \} \label{eq: def-Z2}\\
Z_{\text{ID}}^{\text{max}} & = \{(\hat{\rho},\hat{\tau}) \in Z : \hat{\tau}_L \leq \sigma \hat{\rho}_L  \}, \label{eq: def-Z3}
\end{align}
\end{subequations}
so that $Z = Z_{\text{ID}}^{\text{min}} \cup Z_{\text{ID}}^{\text{int}} \cup Z_{\text{ID}}^{\text{max}}$. \Cref{fig: Z-split} provides an illustration. The split is such that if $(\hat{\rho},\hat{\tau}) \in Z_{\text{ID}}^{\text{min}}$ or $(\hat{\rho},\hat{\tau}) \in Z_{\text{ID}}^{\text{max}}$ only $N_2^{\text{min}}$ resp. $N_2^{\text{max}}$ fractions can be optimal in stage 2. Subset $Z_{\text{ID}}^{\text{int}}$ is the area between the dash-dotted lines. If $(\hat{\rho},\hat{\tau}) \in Z_{\text{ID}}^{\text{int}}$ both $N_2^{\text{min}}$ and $N_2^{\text{max}}$ fractions in stage 2 may be optimal for the true $(\rho,\tau)$. The following theorem states the optimal stage-2 decision rules for a given value of $d_1$.
\begin{theorem} \label{theorem: inexact-stage2}
Let $d_1$ be the stage-1 decision of \eqref{eq: inexact}. The decision rules
\begin{flalign} \label{eq: inexact-stage2-N2}
N_2^{\ast}(d_1;\hat{\rho},\hat{\tau}) =
\begin{cases}
N_2^{\text{min}} & \text{ if } (\hat{\rho},\hat{\tau}) \in Z_{\text{ID}}^{\text{min}} \\
\scriptstyle \argmax\limits_{N_2 \in \{N_2^{\text{min}},\dotsc,N_2^{\text{max}}\}} \min \{f(d_1,N_2;\hat{\rho}_L,\hat{\tau}_L), f(d_1,N_2;\hat{\rho}_U,\hat{\tau}_L) \} & \text{ if } (\hat{\rho},\hat{\tau}) \in Z_{\text{ID}}^{\text{int}} \\
N_2^{\text{max}} & \text{ if } (\hat{\rho},\hat{\tau}) \in Z_{\text{ID}}^{\text{max}},
\end{cases}
\end{flalign}
and
\begin{align} \label{eq: inexact-stage2-d2}
d_2^{\ast}(d_1;\hat{\rho},\hat{\tau}) =  \min \{g(d_1, N_1,N_2^{\ast}(d_1;\hat{\rho},\hat{\tau}); \hat{\rho}_L), g(d_1, N_1,N_2^{\ast}(d_1;\hat{\rho},\hat{\tau}); \hat{\rho}_U) \},
\end{align}
are optimal to \eqref{eq: inexact} for the given $d_1$.
\end{theorem}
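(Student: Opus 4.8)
The plan is to fix the stage-1 decision $d_1$ and an arbitrary observation $(\hat{\rho},\hat{\tau})$, and to argue observation-by-observation as required by \Cref{def: inexact-optimal-DR}: for this observation the admissible realizations form the box $Z_{(\hat{\rho},\hat{\tau})}=[\hat{\rho}_L,\hat{\rho}_U]\times[\hat{\tau}_L,\hat{\tau}_U]$ with bounds \eqref{eq: bounds-given-obs}, and I must choose $(N_2,d_2)$ feasible for every realization so as to maximize the worst-case tumor BED. I would first exploit the separation of the two parameters: the objective \eqref{eq: inexact-1} depends only on $\tau$ and is increasing in it (the coefficient $N_1d_1^2+N_2d_2^2$ is nonnegative), whereas the OAR constraint \eqref{eq: inexact-2} depends only on $\rho$. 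Hence for any fixed feasible $(N_2,d_2)$ the inner worst case over $\tau$ is attained at $\tau=\hat{\tau}_L$, reducing the objective to $N_1d_1+N_2d_2+\hat{\tau}_L(N_1d_1^2+N_2d_2^2)$; since the feasible set for $d_2$ does not depend on $\tau$, no min-max ordering issue arises.

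Next I would characterize the maximal feasible $d_2$ for a fixed $N_2$, which yields \eqref{eq: inexact-stage2-d2}. For fixed $(d_1,N_2,d_2)$ the left-minus-right side of \eqref{eq: inexact-2} is affine in $\rho$, so it is nonpositive on all of $[\hat{\rho}_L,\hat{\rho}_U]$ if and only if it is nonpositive at the two endpoints. For each fixed $\rho$ the constraint is a convex quadratic in $d_2$ whose largest nonnegative root is exactly $g(d_1,N_1,N_2;\rho)$. Combining these facts, the largest $d_2$ feasible for the whole interval is $\min\{g(d_1,N_1,N_2;\hat{\rho}_L),g(d_1,N_1,N_2;\hat{\rho}_U)\}$: at this value the constraint binds at the endpoint attaining the minimum and leaves slack at the other, and affineness propagates feasibility across the interval, while any larger $d_2$ violates the constraint at that endpoint. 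Because the reduced objective is increasing in $d_2$, this minimum is the optimal stage-2 dose, establishing \eqref{eq: inexact-stage2-d2}. I would then invoke \Cref{ass: inexact-dminmax} to confirm that this optimal $d_2$ never drops below $d^{\text{min}}$, so that \eqref{eq: inexact-4} is inactive; the three terms in the assumption are precisely the worst-case $(N_2,\rho)$ pairs that must be checked.

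With $d_2$ eliminated, choosing $N_2$ becomes a discrete one-dimensional problem. Substituting $d_2=\min\{g(\cdot;\hat{\rho}_L),g(\cdot;\hat{\rho}_U)\}$ and using that $f$ is increasing in its dose argument, the worst-case objective for a given $N_2$ equals $\min\{f(d_1,N_2;\hat{\rho}_L,\hat{\tau}_L),f(d_1,N_2;\hat{\rho}_U,\hat{\tau}_L)\}$, and maximizing over $N_2\in\{N_2^{\text{min}},\dots,N_2^{\text{max}}\}$ is exactly the middle branch of \eqref{eq: inexact-stage2-N2}. To collapse the two extreme regions I would invoke \Cref{theorem: exact-stage2}: when $\hat{\tau}_L\geq\sigma\hat{\rho}_U$ both endpoints satisfy $\tau\geq\sigma\rho$, so $f(d_1,\cdot;\hat{\rho}_L,\hat{\tau}_L)$ and $f(d_1,\cdot;\hat{\rho}_U,\hat{\tau}_L)$ are each individually maximized at $N_2^{\text{min}}$; since $\min\{a,b\}$ is nondecreasing in each argument, their pointwise minimum is then also maximized at $N_2^{\text{min}}$, giving the first branch. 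The region $Z_{\text{ID}}^{\text{max}}$ is symmetric, with $\hat{\tau}_L\leq\sigma\hat{\rho}_L$ forcing $N_2^{\text{max}}$.

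I expect the main obstacle to be the rigorous justification of \eqref{eq: inexact-stage2-d2}, i.e.\ that robustifying the OAR constraint over the full box reduces to the two $\rho$-endpoints and that the resulting maximal dose is the minimum of the two $g$-values, together with verifying through \Cref{ass: inexact-dminmax} that this dose does not under-run $d^{\text{min}}$; the latter requires matching the correct worst-case $\rho$ to each relevant $N_2$, which is the role of the term $\max\{\rho_L,\tfrac{\tau_L}{\sigma}-2r^{\rho}\}$ (it accounts for how far $\hat{\rho}_L$ can reach into the interior region). By contrast, the intermediate region $Z_{\text{ID}}^{\text{int}}$ ($\sigma\hat{\rho}_L<\hat{\tau}_L<\sigma\hat{\rho}_U$) admits no simplification because the two $f$-curves then have opposite monotonicity in $N_2$, so the formula must retain the explicit $\argmax$; Steps A--C already certify this $\argmax$ as optimal, so this is a feature of the statement rather than a gap to close.
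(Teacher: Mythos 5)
Your proposal is correct, and its overall skeleton coincides with the paper's proof in \Cref{app: proof-inexact-stage2}: fix $d_1$ and the observation, push the worst case over $\tau$ to $\hat{\tau}_L$, reduce the worst case over $\rho$ to the two endpoints $\hat{\rho}_L,\hat{\rho}_U$, eliminate $d_2$ through the binding OAR constraint (yielding the minimum of the two $g$-values), and then treat $N_2$ as a discrete one-dimensional choice whose feasibility is guaranteed by \Cref{ass: inexact-dminmax}. Where you genuinely diverge is in the machinery used for the endpoint reduction. The paper works with the composite function $f$ (dose already eliminated) and invokes \Cref{lemma: fg-omega} --- whose proof is a nontrivial root analysis --- to conclude that $f$ and $g$ are monotone in $\rho$, so the inner minimum over $Z_{(\hat{\rho},\hat{\tau})}$ is attained at $\hat{\rho}_L$ or $\hat{\rho}_U$. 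You instead exploit that, for fixed decisions, the objective is $\rho$-free while the constraint-minus-tolerance is affine in $\rho$, so robust feasibility over the box is equivalent to feasibility at the two endpoints; this is more elementary and self-contained, and the same affineness trick can replace the monotonicity arguments the paper uses in its feasibility checks. A second difference: the paper proves the three branches of \eqref{eq: inexact-stage2-N2} as three separate cases (the extreme ones re-running the Mizuta-type argument of \Cref{theorem: exact-stage2} realization-by-realization, each with its own feasibility verification), whereas you establish the middle ($\argmax$) branch as the generic answer and collapse the extreme regions out of it by applying \Cref{theorem: exact-stage2} to each endpoint scenario and using that $\min\{\cdot,\cdot\}$ is nondecreasing in each argument --- a cleaner unification. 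Two small caveats, neither fatal: (i) what you need from \Cref{theorem: exact-stage2} is really the monotonicity-in-$N_2$ content of its proof (that $f(d_1,\cdot;\rho,\tau)$ is maximized at $N_2^{\text{min}}$ when $\tau\geq\sigma\rho$ and at $N_2^{\text{max}}$ when $\tau\leq\sigma\rho$), not just its statement about optimal rules; (ii) for $(\hat{\rho},\hat{\tau})\in Z_{\text{ID}}^{\text{int}}$ the $\argmax$ in \eqref{eq: inexact-stage2-N2} ranges over \emph{all} of $\{N_2^{\text{min}},\dotsc,N_2^{\text{max}}\}$, so you must verify that \eqref{eq: inexact-4} is slack for every $N_2$ in that range (not only the maximizer); the paper does this per region using the bound $\hat{\rho}_L\geq\max\{\rho_L,\tfrac{\tau_L}{\sigma}-2r^{\rho}\}$, which you correctly identified, but the verification itself remains to be written out.
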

\begin{proof}
See \Cref{app: proof-inexact-stage2}. 
\end{proof}
The worst-case optimal decision rule \eqref{eq: inexact-stage2-N2} may yield a value unequal to $N_2^{\text{min}}$ and $N_2^{\text{max}}$ if $(\hat{\rho},\hat{\tau}) \in Z_{\text{ID}}^{\text{int}}$.
\begin{figure}
\centering
\includegraphics{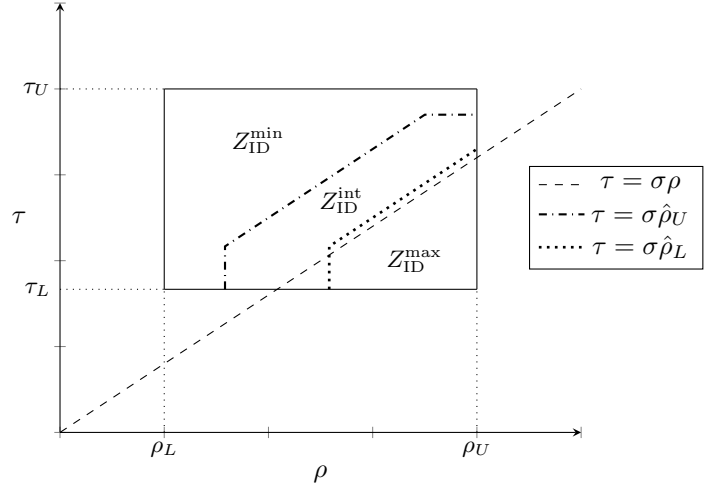}
\caption{\small The uncertainty set $Z$ (solid lines) for $(\hat{\rho},\hat{\tau})$ is split into $Z_{\text{ID}}^{\text{min}}$, $Z_{\text{ID}}^{\text{int}}$, $Z_{\text{ID}}^{\text{max}}$, according to \eqref{eq: def-Zi}. Subset $Z^{\text{int}}$ is the area between the dotted and dash-dotted curves. If $(\hat{\rho},\hat{\tau}) \in Z_{\text{ID}}^{\text{int}}$ both $N_2^{\text{min}}$ and $N_2^{\text{max}}$ fractions in stage 2 may be optimal for the true $(\rho,\tau)$. If $(\hat{\rho},\hat{\tau}) \in Z_{\text{ID}}^{\text{min}}$ or $(\hat{\rho},\hat{\tau}) \in Z_{\text{ID}}^{\text{max}}$ only $N_2^{\text{min}}$ resp. $N_2^{\text{max}}$ fractions can be optimal in stage 2. \label{fig: Z-split}}
\end{figure}
If $r^{\rho}$ and $r^{\tau}$ are zero, i.e., we have exact data, then it holds that $\hat{\tau}_L = \tau$ and $\hat{\rho}_L = \hat{\rho}_U = \rho$. Hence, the two functions $f$ in the RHS of \eqref{eq: inexact-stage2-N2} are equal, and the optimal $N_2^{\ast}$ is the one that maximizes the resulting function. One can verify that this does not depend on $d_1$. Hence, in case of exact data \Cref{theorem: inexact-stage2} reduces to \Cref{theorem: exact-stage2}.

It turns out that, after plugging in \eqref{eq: inexact-stage2-N2} and \eqref{eq: inexact-stage2-d2}, and splitting the uncertainty set according to \eqref{eq: def-Zi}, it is not apparent how to determine the optimal stage-1 decision $d_1^{\ast}$ for \eqref{eq: inexact}. In \Cref{app: proof-inexact-reformulation} the following lower bound problem to \eqref{eq: inexact} is derived, named the Approximate Inexact Data Problem (AIDP):
\begin{subequations}\label{eq: inexact2}
\begin{align}
\max_{d_1,q}~&~ q \\
\text{s.t.}~&~ q \leq f(d_1,N_2^{\text{min}};\rho_L,\tau_L) \label{eq: inexact2-2} \\
            ~&~ q \leq f(d_1,N_2^{\text{max}};\rho_U,\tau_L) \label{eq: inexact2-3} \\
            ~&~ q \leq K \label{eq: inexact2-4} \\
~&~ q \leq p(d_1) \label{eq: inexact2-5} \\
~&~ d^{\text{min}} \leq d_1 \leq d_1^{\text{max}}. \label{eq: inexact2-6}
\end{align}
\end{subequations}

\begin{figure}
\centering
\includegraphics{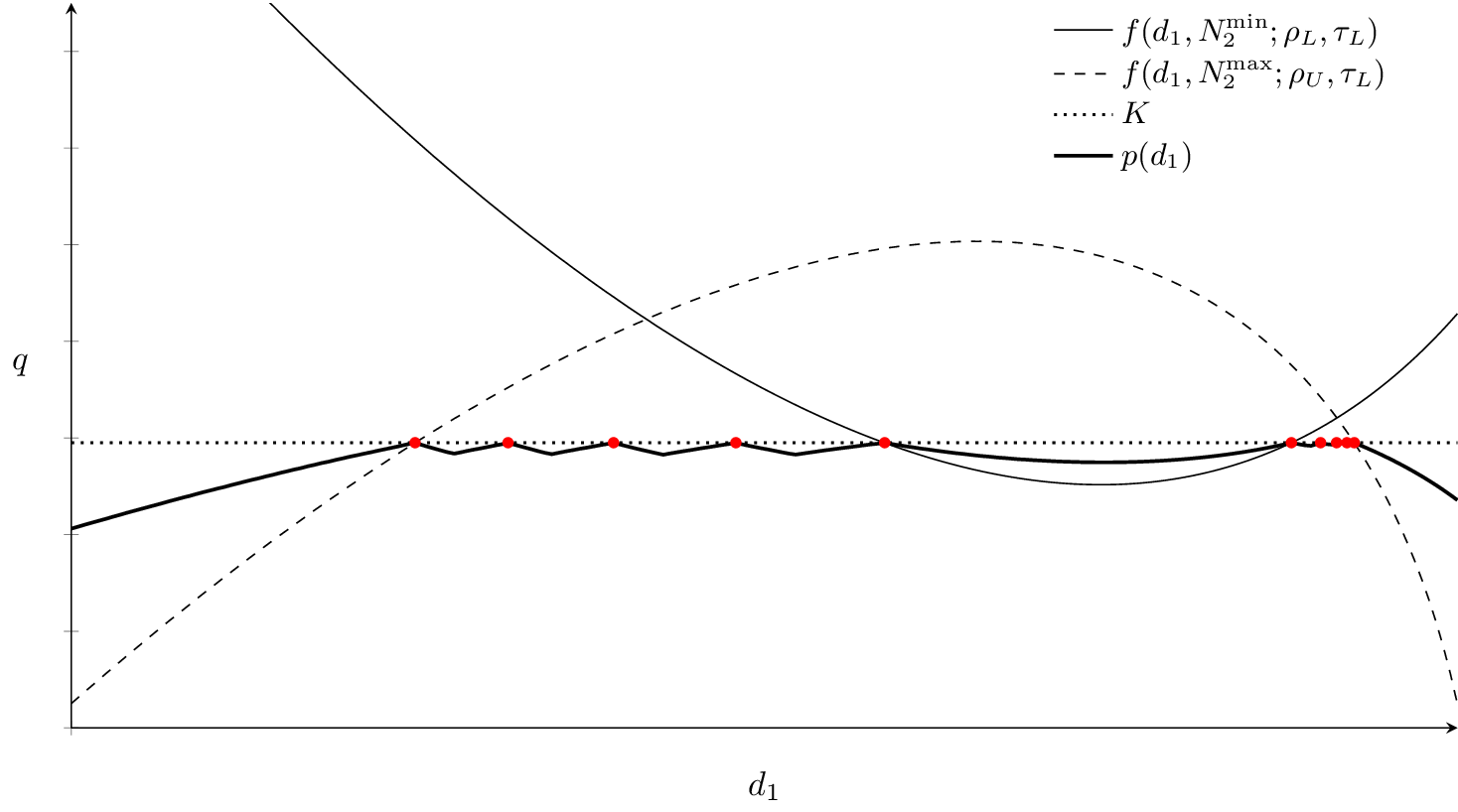}
\caption{\small Schematic illustration of \eqref{eq: inexact2}. Compared to the case with exact data (\Cref{fig: exact}), the thick black curve (constraint \eqref{eq: inexact2-5}) is extra. The solid, dashed and dotted lines/curves represent constraints \eqref{eq: inexact2-2}, \eqref{eq: inexact2-3} and \eqref{eq: inexact2-4}, respectively. Optimal solutions are indicated by red circles.\label{fig: inexact}}
\end{figure}

The AIDP is best explained using an example. \Cref{fig: inexact} illustrates a possible instance of \eqref{eq: inexact2}, displaying constraints \eqref{eq: inexact2-2}-\eqref{eq: inexact2-5}. Compared to \eqref{eq: exact3} for exact biomarker information (see \Cref{fig: exact}), problem \eqref{eq: inexact2} has the added constraint \eqref{eq: inexact2-5}; a piecewise convex-concave function $p(d_1)$ defined by \eqref{appeq: p} in \Cref{lemma: i=2-reform} (\Cref{app: auxiliary-lemmas}). It can be interpreted as follows. If $(\hat{\rho},\hat{\tau}) \in Z_{\text{ID}}^{\text{int}}$, the optimal number of stage-2 fractions can be inbetween $N_2^{\text{min}}$ and $N_2^{\text{max}}$, as shown in \Cref{theorem: inexact-stage2}. In those cases, the optimal number of fractions also depends on the already delivered stage-1 dose $d_1$. The upper kinks (black dots) in the piecewise convex-concave function $p(d_1)$ in \Cref{fig: inexact} indicate values of $d_1$ where the worst-case scenario changes. The lower kinks indicate values where the optimal number of stage-2 fractions changes. The exact expression for $p(d_1)$ does not provide additional insight and is therefore omitted here. 

In \Cref{fig: inexact}, optimal solutions are locations where \eqref{eq: inexact2-4} and \eqref{eq: inexact2-5} are both binding, indicated by red circles. Dose constraints \eqref{eq: inexact2-6} may cut off (some of) these points. If due to constraint \eqref{eq: inexact2-6} none of the circles are feasible, the optimum is at one of the boundaries for $d_1$. Constraint \eqref{eq: inexact2-5} is the only conservative constraint in \eqref{eq: inexact2}. Hence, only if the feasible values for $d_1$ are such that none of the circles in \Cref{fig: inexact} are feasible and constraint \eqref{eq: inexact2-5} is binding, it is possible that the optimal objective value of \eqref{eq: inexact2} is strictly worse than that of \eqref{eq: inexact}.

\Cref{lemma: convexity-f,lemma: intersection,lemma: i=2-reform} in \Cref{app: auxiliary-lemmas} provide information on the shape and intersection points of constraint functions \eqref{eq: inexact2-2}-\eqref{eq: inexact2-5}. Consequently, the optimal solution(s) of \eqref{eq: inexact} is/are easily obtained. If $(d_1^{\ast},q^{\ast})$  is optimal to AIDP \eqref{eq: inexact2}, and $N_2^{\ast}(\cdot)$ and $d_2^{\ast}(\cdot)$ are given by \eqref{eq: inexact-stage2-N2} and \eqref{eq: inexact-stage2-d2} then $(d_1^{\ast},d_2^{\ast}(\cdot),N_2^{\ast}(\cdot))$ is \ARF{} to the original IDP \eqref{eq: inexact}. It is \ARF{} because AIDP is a conservative approximation of IDP.

\subsection{Pareto robustly optimal solutions to conservative approximation} \label{sec: inexact-PARO}
\Cref{fig: inexact} also illustrates that it is possible that there are multiple optimal solutions to the AIDP \eqref{eq: inexact2}. Because the AIDP provides a conservative approximation to IDP \eqref{eq: inexact}, optimizing over auxiliary scenario(s) as in \Cref{sec: exact} does not necessarily yield a stage-1 decision $d_1$ that is \PARO{} to the original IDP. It turns out that a \PRO{} solution to AIDP is obtained from the set of robustly optimal solutions to AIDP if we consecutively optimize for two auxiliary observations such that any worst-case realization is in the interior of set $Z^{\text{min}}$ resp. $Z^{\text{max}}$. Two important remarks are in place here. First, a \PRO{} solution to AIDP need not be a \PARO{} solution to IDP, even if it is \ARO{} to IDP. Second, the required auxiliary scenarios need not exist; their existence depends on the values of $r^{\rho}$ and $r^{\tau}$. Hence, further details are omitted.

\section{Numerical results} \label{sec: results}
This section presents numerical results of the methods presented in Sections \ref{sec: exact} and \ref{sec: inexact}. First, Section \ref{sec: benchmark} describes the benchmark methods against which we compare the ARO method for EDP and IDP, and Section \ref{sec: setup} describes the setup of the numerical experiments.

\subsection{Benchmark static and folding horizon methods} \label{sec: benchmark}
We analyze the performance of the static and folding horizon nominal method (NOM and NOM-FH), the static and folding horizon robust optimization method (RO and RO-FH) and the adjustable robust optimization method (ARO). In the folding horizon approaches only the stage-1 decisions are implemented, and the model is re-optimized for the second stage once the biomarker information is revealed.

The static method NOM optimizes for the nominal parameter values $(\bar{\rho},\bar{\tau})$ and disregards any uncertainty and adaptability. This method is the same for both EDP and IDP. In stage 2, NOM-FH solves the nominal problem under the assumption that the obtained biomarker estimate is exact (which is an invalid assumption for IDP). This method does not guarantee robustly feasible solution (feasible for all $(\rho,\tau) \in Z$) nor a robustly optimal solution (\RO{}; (static) optimal for the worst-case realization $(\rho,\tau) \in Z$). The static method RO optimizes for the worst-case realization of $(\rho,\tau)$ in the uncertainty set $Z$, and disregards adaptability. For EDP the method RO-FH solves the same nominal problem as NOM-FH in stage 2; for IDP it solves a static robust optimization problem in stage 2, for which the uncertainty set is determined by the accuracy of the biomarker information. RO and RO-FH both guarantee an \RO{} solution.

One may add a folding horizon component to ARO (for either EDP or AIDP). This may improve the results in case a suboptimal stage-2 decision rule is used. However, as shown in Sections \ref{sec: exact-solution} and \ref{sec: inexact-solution}, the used stage-2 decision rules are optimal for any realized scenario (and for given stage-1 decision $d_1$ in case of inexact information). Hence, adding a folding horizon component will not change results.

Table \ref{table: solution-properties} provides an overview of the guaranteed solution properties of the methods. It is important to note that in case of inexact biomarker information (IDP) the methods RO and RO-FH guarantee an \RO{} solution, whereas ARO guarantees only an \ARF{} solution via solving the approximate problem AIDP. Depending on the approximation quality, the \ARF{} solution may be close or equal to an \ARO{} solution.
\begin{table}
\centering
\begin{tabular}{c | c c c c c} \toprule
         & \multicolumn{5}{c}{Method} \\ 
 Problem & NOM   & NOM-FH  & RO       & RO-FH    & ARO \\ \midrule
EDP     &  -    &   -     & \RO{}      & \RO{}      & \PARO{} \\ 			
IDP     &  -    &   -     & \RO{}        & \RO{} & \ARF{} \\ \bottomrule
\end{tabular}
\caption{\small Guaranteed solution properties of the five methods.  \label{table: solution-properties}}
\end{table}
Next to these five methods, we also report the results for the perfect information optimum (PI). This is the attainable optimum if from the start of the first fraction the true $(\rho,\tau)$ is exactly known. It can be formulated by taking the nominal problem and replacing the nominal parameter values by their true values. While in practice not a viable strategy, PI provides information on the value of perfect information, and allows us to put the performance of and differences between the other methods in perspective.

\subsection{Study setup} \label{sec: setup}
We use a data set of 30 non-small cell lung cancer (NSCLC) patients, treated with either photon or proton therapy. The mathematical models in Sections \ref{sec: exact} and \ref{sec: inexact} are based on the assumption that there is a single dose restricting OAR. We assume that the single dose restricting OAR is the normal lung itself\footnote{This is in line with clinical practice wherein normal lung is treated as the most important normal tissue and the treatment is designed as to minimize the radiation exposure to normal lung.}. For the models in Sections \ref{sec: exact} and \ref{sec: inexact}, an instance is defined by a tuple $(\sigma, \varphi, D, T, N_1, N^{\text{min}}, N^{\text{max}},d^{\text{min}},d_1^{\text{max}})$ and the relevant uncertainty sets.

Clinically, the number of treatment fractions varied between 33 and 37 fractions, with the majority of patients receiving 37 fractions. We set $N^{\text{min}} = 30$ and $N^{\text{max}} = 40$, to allow for slight deviations from the clinical standard. We assume the biomarker acquisition is made once $N_1=10$ fractions have been administered. This implies $N_2^{\text{min}} = 20$ and $N_2^{\text{max}} = 30$. Mean lung dose tolerance is $D=20$Gy, and we set $T = 37$ as that is the clinically standard regimen. The patients differ in $(\sigma,\varphi)$, which characterize the spatial dose distribution. Using the clinically delivered dose distribution, we derive for each normal lung voxel its dose sparing factor $s_i$ (see \Cref{sec: fractionation}). The dose shape factor $\varphi$ and the generalized dose sparing factor $\sigma$ for mean OAR BED are given by
\begin{subequations}
\begin{align}
\varphi &= \frac{n \sum_{i=1}^n s_i^2}{\big(\sum_{i=1}^n s_i \big)^2}, \\[0.5em]
\sigma  &= \frac{\sum_{i=1}^n s_i^2}{\sum_{i=1}^n s_i},
\end{align}
\end{subequations}
see \citet{Perko18} for details.

\citet{Cox86} estimate normal lung tissue $\alpha/\beta$ to be between 2.4 and 6.3. We set the nominal value at the midpoint $4.35$. The $\alpha/\beta$ of NSCLC lung tumors has traditionally been assumed to be above 10 Gy. However, recent NSCLC hypofractionation trials show promising results, indicating that NSCNC cells are more sensitive to fraction size than previously assumed, i.e., have a lower $\alpha/\beta$ than 10. \citet{Santiago16} find values between 2.2 and 9.0. We set the nominal value at the midpoint $5.6$. Put together, we get the following uncertainty set for the inverse $\alpha/\beta$ ratios:
\begin{align}
Z = \{(\rho,\tau) : 1/6.3 \leq \rho \leq 1/2.4,~1/9.0 \leq \tau \leq 1/2.2 \},
\end{align}
and the nominal scenario is $(\bar{\rho},\bar{\tau}) = (1/4.35, 1/5.6)$. With this uncertainty set, 20 out of 30 patient cases satisfy \Cref{ass: omega-rho}: these are used in the numerical experiments. For the remaining ten patients the optimal number of treatment fractions can be determined prior to treatment, so these are removed. 

To discriminate between multiple \ARO{} solutions, we follow the procedure detailed in Section \ref{sec: exact-PARO} in the case of exact biomarker information. The auxiliary scenarios are sampled uniformly from $\text{int}(Z^{\text{min}})$ and $\text{int}(Z^{\text{max}})$. In the case of inexact biomarker information, the procedure discussed in Section \ref{sec: inexact-PARO} is followed if the required auxiliary observations exist. If such observations exist, we sample uniformly from $Z$ until we have found two auxiliary observations for which any worst-case realization is in $\text{int}(Z^{\text{min}})$ resp. $\text{int}(Z^{\text{max}})$. If such observations do not exist, the robustly optimal solution to AIDP with lowest stage-1 dose is selected. The method RO (and therefore also RO-FH) may also find multiple robustly optimal solutions. For the obtained set of robustly optimal solutions we again follow the procedure detailed in Section \ref{sec: exact-PARO}. It turns out that for RO, the robustly optimal solutions often perform identical in non-worst-case scenarios. We optimize over the auxiliary scenarios consecutively; the first auxiliary scenario is the scenario corresponding to $\text{int}(Z^{\text{min}})$.

The minimum dose per fraction is $d^{\text{min}} = 1.5$ Gy and the maximum stage-1 dose per fraction is $d_1^{\text{max}} =3$. This satisfies \Cref{ass: exact-dminmax} (for EDP) and \ref{ass: inexact-dminmax} (for IDP). Using these parameter values, it is feasible to deliver $N_2^{\text{max}}$ fractions with dose $d^{\text{min}}$ in \emph{all} scenarios in $Z$. This means that stage-1 decisions cannot render stage 2 infeasible for RO, NOM (and their FH counterparts) or PI. Numerical results indicate that results are not sensitive to the choice of $d^{\text{min}}$ and $d_1^{\text{max}}$.

First, Section \ref{sec: results-exact} presents and discusses the results for the problem with exact biomarker information (EDP) of Section \ref{sec: exact}. After that, Section \ref{sec: results-inexact} presents and discusses the results for the problem with inexact biomarker information (IDP) of Section \ref{sec: inexact}. Lastly, Section \ref{sec: results-N1} again considers the inexact biomarker information case, and varies parameter $N_1$ in order to determine the optimal moment of biomarker acquisition.

We consider a sample of 200 scenarios for $(\rho,\tau)$ from $Z$. For each scenario, we compute the the average tumor BED over 20 patients is computed, thus creating a tumor BED distribution for the `average' patient. For this tumor BED distribution we report the mean, $5\%$ quantile and worst-case value. Next to this, we report the true worst-case tumor BED over $Z$ (averaged over 20 patients). Note that the true worst-case scenario can differ per patient, so the true worst-case BED is is typically not attained in the sample. For OAR violations, we report the percentage by which the OAR BED tolerance is exceeded (i.e., percentage overdose). The maximum violation is the maximum value found over all patients and scenarios. All reported decision variable statistics are averaged over all patients and scenarios.

\subsection{Results exact biomarker information} \label{sec: results-exact}
\Cref{table: numexp-res-unif} presents the results. Altogether, the results indicate that the value of exact information is high. NOM-FH performs very similar to ARO. This illustrates that ignoring uncertainty and adaptability in stage 1 neither compromises worst-case or mean performance, nor does it lead to OAR constraint violations if treatment can be adapted based on exact biomarker information. In fact, NOM-FH outperforms RO-FH, indicating that accounting for uncertainty in stage-1 is overly conservative.

NOM is the only method that is not worst-case optimal, but yields the highest mean tumor BED accross the sample. However, it is the only method that results in OAR constraint violations. In the nominal scenario $(\bar{\rho},\bar{\tau})$ it is optimal to hypofractionate for all patients, so the mean $N_2$ equals $20$ for NOM. The other static method, RO, is worst-case optimal, but yields lower tumor BED accross the entire sample. It delivers significantly more fractions on average, i.e., it decides to hyperfractionate more often.  

NOM-FH adds a folding horizon component to NOM, and this results in zero violations and worst-case optimality. It does have slightly lower sample mean tumor BED. RO-FH adds a folding horizon component to RO, and this results in improved performance accross the entire sample. It chooses to hypofractionate more often than RO. ARO is worst-case optimal and performs very similar to NOM-FH. Excluding NOM (for OAR constraint violations) and PI (not implementable), NOM-FH, ARO, RO-FH and RO yield the (possibly joint) highest objective value in $83.4\%$, $76.4\%$, $22.1\%$ and $0.6\%$ of all (scenario, patient) instances, respectively.

The results of \Cref{table: numexp-res-unif} show that the methods have different stage-1 decisions $d_1$; this indicates the existence of multiple worst-case optimal stage-1 solutions. As indicated in \Cref{sec: setup}, RO, RO-FH and ARO optimize over auxiliary scenarios in this case. According to \Cref{theorem: PARO}, ARO finds a \PARO{} solution this way. Overall, methods that deliver a relatively low dose in stage-1 perform better than the methods that deliver a higher dose. This may be data set-specific. From PI we see that for the majority of patients and scenarios hypofractionation is optimal (average $N_2=22.2$). We emphasize that for different data sets, where for the majority of scenarios and patients hyperfractionation is optimal, a higher stage-1 dose (which allows for $N^{\text{max}}$ constant-dose fractions) may perform better, such as the result of RO and RO-FH. 

\begin{table}[htb!]
\small
\centering
\begin{tabular}{l | c c c c c c} \toprule
       & \multicolumn{6}{c}{Method} \\
      						&    NOM & NOM-FH &     RO &  RO-FH &    ARO &    PI \\ \midrule
Tumor BED - sample mean  (Gy)              & 162.75 & 161.44 & 156.57 & 160.14 & 161.40 & 161.49 \\
		 Tumor BED - sample $5\%$ quantile (Gy)               & 151.98 & 150.94 & 147.57 & 150.04 & 150.90 & 151.04 \\
		 Tumor BED - sample wc (Gy) & 145.98 & 146.33 & 142.53 & 145.58 & 146.32 & 146.39 \\
		 		 \textbf{Tumor BED - wc over $Z$} (\textbf{Gy}) & \textbf{114.72} & \textbf{116.19} & \textbf{116.19} & \textbf{116.19} & \textbf{116.19} & \textbf{116.19}\\
		 OAR violation - mean ($\%$) 				     &   1.25 &   0    &   0    &   0    &   0    &   0    \\
	     OAR violation - max  ($\%$) 				     &  4.22 &   0    &   0    &   0    &   0    &   0    \\
         Stage-1 dose $d_1$ (Gy) 					     &   1.50 &   1.50 &   2.29 &   2.29 &   1.51 &   1.66 \\ 
         Stage-2 dose $d_2$ (Gy) 					     &   3.45 &   3.24 &   2.48 &   2.95 &   3.24 &   3.19 \\ 
         Stage-2 fractions $N_2$ 					     &   20.0 &   22.2 &   27.2 &   22.2 &   22.2 &   22.2 \\ \bottomrule
\end{tabular}

\caption{\small Results for experiments with exact biomarker information and uniform sampling of $(\rho,\tau)$ over $Z$ (200 scenarios). For each scenario, results are averaged over 20 patients$^{\ast}$. All methods optimize for worst-case tumor BED in $Z$, which is displayed in bold.\\
$^{\ast}$: the maximum OAR violation is computed over all patients and scenarios.\label{table: numexp-res-unif}}
\end{table}
\begin{figure}[htb!]
\centering
\includegraphics[scale=0.8]{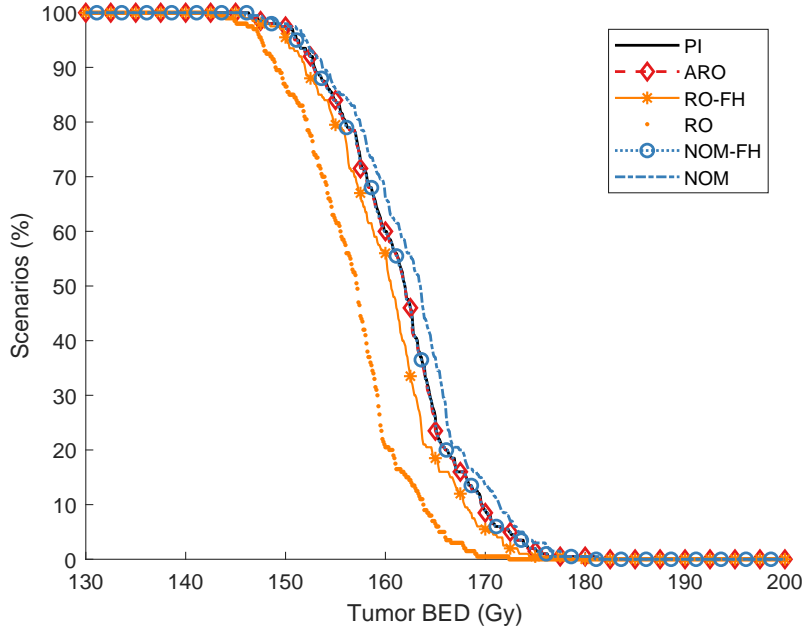}
\caption{\small Cumulative scenario-tumor BED graph for experiments with exact biomarker information and uniform sampling of $(\rho,\tau)$ over $Z$ (200 scenarios). A point $(x,y)$ indicates that in $y\%$ of scenarios the tumor BED (averaged over 20 patients) is at least $x$ Gy. ARO and NOM-FH are very close to PI. \label{fig: numexp-exact}}
\end{figure}

\Cref{fig: numexp-exact} shows the complete cumulative scenario-tumor BED graph. A point $(x, y)$ in \Cref{fig: numexp-exact} can be interpreted as follows: for the average patient, in $y\%$ of scenarios the tumor BED is at least $x$ Gy. The results clearly demonstrate that RO and RO-FH are outperformed by the other methods. Both NOM-FH and ARO are visually almost indistinguishable from PI. NOM performs even better accross the entire sample (except the first percent of the sample), at the cost of OAR BED violations. 

To see the difference in mean performance between the multiple worst-case optimal solutions, we compare \PARO{} solution found by the ARO method to the \ARO{} solution that performs worst in the two auxiliary scenarios. \Cref{table: numexp-ARO-best-worst} shows the results, OAR constraint violations are zero in all cases. The worst-performing \ARO{} solution has a considerably higher stage-1 dose. This implies that (for the current parameter settings) delivering a high stage-1 dose does not allow as much adjustment possibilities in stage 2 as a low stage-1 dose, but it does allow for adjustments to reach the worst-case optimum. Relative to the results of \Cref{table: numexp-res-unif}, the difference between the best and worst ARO solution is considerable: the worst-performing ARO solution performs worse than the RO-FH solution.
\begin{table}[htb!]
\small
\centering
\begin{tabular}{l | c c } \toprule
       & \multicolumn{2}{c}{Method} \\
      						&  \ARO{}$_{\text{worst}}$ & \ARO{}$_{\text{best}}$ \\ \midrule
Tumor BED - sample mean  (Gy)              & 159.87 & 161.40\\
Tumor BED - sample $5\%$ quantile (Gy)     & 149.86 & 150.90 \\
Tumor BED - sample worst-case (Gy) 		   & 145.34 & 146.32\\
		 \textbf{Tumor BED - wc over $Z$} (\textbf{Gy}) & \textbf{116.19} & \textbf{116.19} \\
         Stage-1 dose $d_1$ (Gy) 			    &   2.67 &   1.51 \\ 
         Stage-2 dose $d_2$ (Gy) 			    &   2.79 &   3.24 \\ 
         Stage-2 fractions $N_2$ 			    &   22.2 &   22.2 \\ \bottomrule
\end{tabular}
\caption{\small Comparison between the best (\PARO{}) and worst performing \ARO{} solutions, for uniform sampling of $(\rho,\tau)$ over $Z$ (200 scenarios). For each scenario, results are averaged over 20 patients.  All methods optimize for worst-case tumor BED in $Z$, which is displayed in bold. OAR constraint violations are zero in all cases.\label{table: numexp-ARO-best-worst}}
\end{table}

\Cref{app: out-of-sample} reports the results of an auxiliary experiment where the $(\rho,\tau)$ samples are drawn from a superset of $Z$, to compare the out-of-sample performance of the methods. NOM remains the only method with OAR constraint violations. Compared to \Cref{table: numexp-res-unif}, static methods NOM and RO have poor performance. The relative performance of the adaptive methods remains mostly unchanged. 

Altogether, the results of \Cref{sec: results-exact} demonstrate that if exact biomarker information is available mid-treatment, most stage-1 decisions allow for sufficient adaptation space in stage 2, also with realizations outside of $Z$. Different stage-1 decisions yield the worst-case optimum, have good performance on the scenario sample and have no OAR BED violations. We note that all presented differences in tumor BED are of relatively small magnitude. One reason for this is that the number of stage-2 fractions is restricted to $[N_2^{\text{min}}, N_2^{\text{max}}] = [20, 30]$. If the minimum number of fractions represents a `true' hypofractionation case, the dose per fraction can vary more, and the difference in performance between hypo- and hyperfractionation strategies is amplified.

\subsection{Results inexact biomarker information} \label{sec: results-inexact}
In case of in inexact biomarker information (IDP), we do not obtain the true parameter values $(\rho,\tau)$ after $N_1=10$ fractions, but only an estimate $(\hat{\rho},\hat{\tau})$. As discussed in Section \ref{sec: inexact}, we specify a new uncertainty set $\hat{Z}$ such that $(\hat{\rho},\hat{\tau})-(\rho,\tau) \in \hat{Z}$. Let $\text{DQ} \in [0,1]$ indicate the data quality. Then we set $\hat{Z}$ such that the width of the new uncertainty intervals for $\tau$ and $\rho$ is $(1-\text{DQ})$ times the width of the original intervals $[\tau_L,\tau_U]$ and $[\rho_L,\rho_U]$. That is, $\text{DQ} \cdot 100\%$ can be interpreted as the percentage by which the uncertainty intervals can be reduced due to the observation. The relation with the accuracy parameter $r^{\rho}$ (or similarly $r^{\tau}$) is given by
\begin{align}
r^{\rho} = \frac{1}{2}(\rho_U-\rho_L)(1-\text{DQ}).
\end{align}
Note that even $\text{DQ}=0$ has some value as the new interval is centered around the observation, which already cuts off part of the original uncertainty set $Z$. We pick $\text{DQ}=2/3$, so the obtained information after fraction $N_1$ reduces the size of the interval by $66.7\%$ around the new observation. Variations for $\text{DQ}$ are considered in Section \ref{sec: results-N1}. For all 20 patients the required auxiliary scenarios for the method of Section \ref{sec: inexact-PARO} can be found. 

Table \ref{table: numexp-res-unif-id} shows the results. The robust methods RO, RO-FH and ARO are all worst-case optimal. This indicates that, although not theoretically guaranteed, ARO finds an ARO solution in all considered scenarios. The mean performance of RO-FH and ARO is further away from PI than in the case with exact biomarker information (\Cref{table: numexp-res-unif}). This is as expected, as due to inexact observations the possibility for ARO and RO-FH to make adjustments is less valuable, whereas PI is not in influenced by this. On the other hand, NOM and NOM-FH are not worst-case optimal, but have better performance on the sample of scenarios, at the cost of OAR violations.

\begin{table}[htb!]
\small
\centering
\begin{tabular}{l | c c c c c c} \toprule
       & \multicolumn{6}{c}{Method} \\
      						&    NOM & NOM-FH &     RO &  RO-FH &    ARO &    PI \\ \midrule
Tumor BED - sample mean  (Gy)                   & 162.52&	161.03	&156.38	&158.76	&159.46	&161.16\\
Tumor BED - sample $5\%$ quantile (Gy)          & 151.36&	150.12	&147.04	&148.51	&148.94	&150.21\\
Tumor BED - sample wc (Gy) 						& 147.79&	146.04	&144.01	&145.02	&145.33	&146.18\\
\textbf{Tumor BED - wc over $Z$} (\textbf{Gy})  & \textbf{114.72} &	\textbf{115.96}	&\textbf{116.19}	&\textbf{116.19}	&\textbf{116.19}	&\textbf{116.19}\\
OAR violation - mean ($\%$) 					& 1.25	&0.16	&0	& 0	&0	&0 \\
OAR violation - max  ($\%$) 					& 4.23	&1.49	&0	& 0	&0	&0 \\
Stage-1 dose $d_1$ (Gy) 						& 1.50	&1.50	&2.29	&2.29	&1.79	&1.65\\
Stage-2 dose $d_2$ (Gy) 						& 3.45	&3.25	&2.48	&2.78	&2.92	&3.20\\
Stage-2 fractions $N_2$ 						& 20.0	&22.0	&27.2	&24	&24.3	&22.1\\\bottomrule
\end{tabular}
\caption{\small Results for experiments with inexact biomarker information (data quality $\text{DQ} = 2/3$) and uniform sampling of $(\rho,\tau)$ over $Z$ (200 scenarios). All results are averages over a sample of $20$ patients. For each scenario, results are averaged over 20 patients$^{\ast}$. All methods optimize for worst-case tumor BED in $Z$, which is displayed in bold.\\
$^{\ast}$: the maximum OAR violation is computed over all patients and scenarios. \label{table: numexp-res-unif-id}}
\end{table}

ARO is the only method (together with PI) that has a different stage-1 decision than in
the case with exact biomarker information. This is because it is the only method that takes inexactness of biomarker information into account at the start of stage 1. The average stage-1 dose $d_1$ differs considerably between ARO and RO-FH, whereas their worst-case performance is equal on average (and equal to PI). This demonstrates the existence of multiple worst-case optimal solutions. Whereas optimizing worst-case optimal solutions for ARO over two auxiliary scenarios does not guarantee a PARO solution (\Cref{sec: inexact-PARO}), results in \Cref{table: numexp-res-unif-id} indicate that it does yield solutions that perform slightly better on average than RO-FH.

For ARO, it is noteworthy that the average number of stage-2 fractions (24.3 fx) differs from that of PI (22.1 fx). Although ARO uses optimal decision rules for stage 2, these are optimal for the worst-case scenario in the new uncertainty set $Z_{(\hat{\rho},\hat{\tau})}$, and need not be optimal for the `true' realization in this set. In fact, NOM-FH treats the inexact biomarker information as the `true' parameter values, and administers 22.0 fractions, on average, which is closer than that of PI. Although the fractionation decision of NOM-FH is not worst-case optimal, \Cref{table: numexp-res-unif-id} shows that it performs better on the samples of scenarios.

\Cref{fig: numexp-inexact} shows the complete cumulative scenario-tumor BED graph for the `average patient'. Whereas in case of exact biomarker information (\Cref{fig: numexp-exact}), the ARO line was very close to PI, here a clear difference can be observed. NOM and NOM-FH outperform ARO (and RO and RO-FH) over the entire distribution. 

\begin{figure}[htb!]
\centering
\includegraphics[scale=0.8]{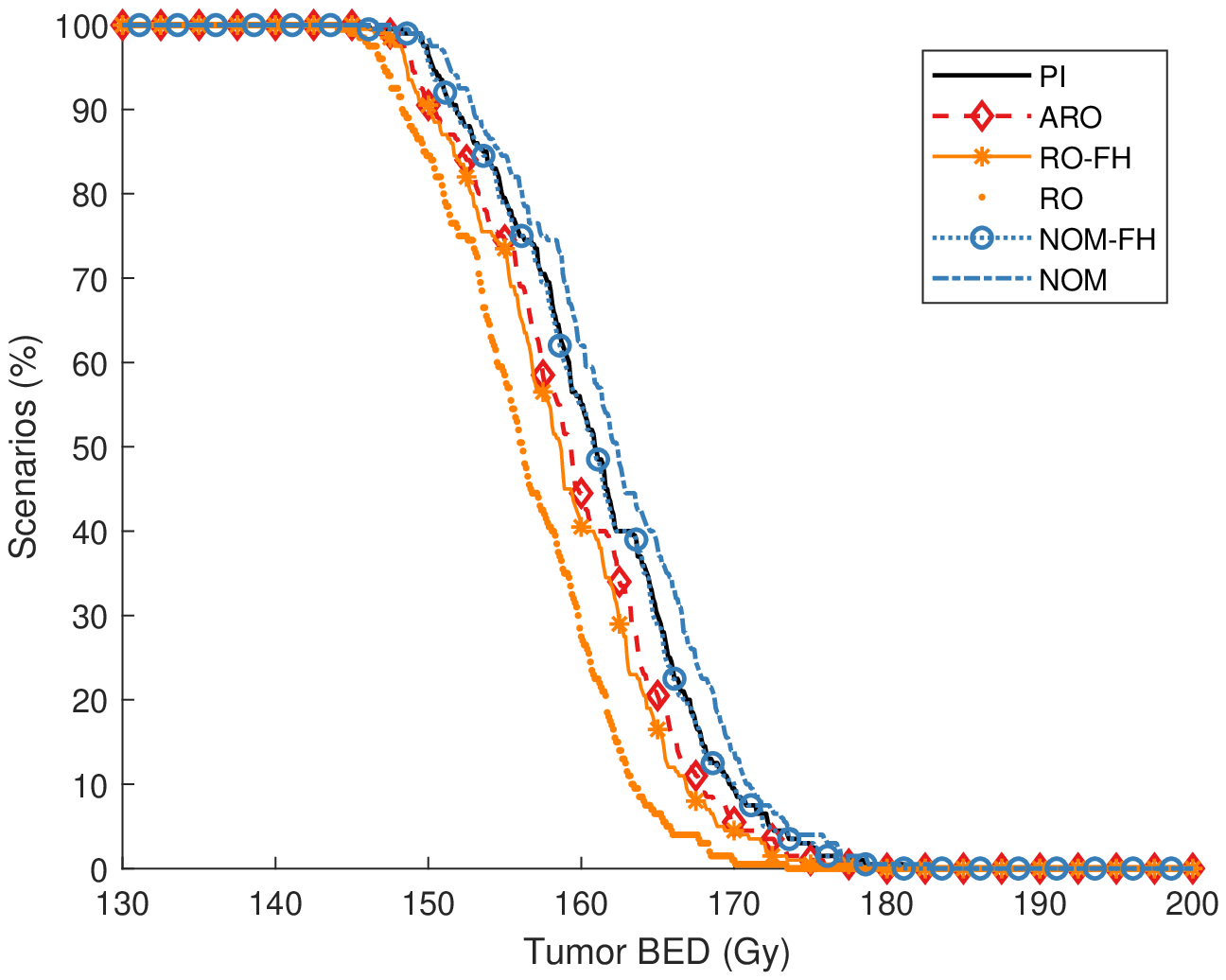}
\caption{\small  Cumulative scenario-tumor BED graph for experiments with inexact biomarker information (data quality $\text{DQ} = 2/3$) and uniform sampling of $(\rho,\tau)$ over $Z$ (200 scenarios). A point $(x,y)$ indicates that in $y\%$ of scenarios the tumor BED (averaged over 20 patients) is at least $x$ Gy. NOM-FH is very close to PI. \label{fig: numexp-inexact}}
\end{figure}

The good performance of NOM and NOM-FH in terms of sample mean tumor BED does come at the cost of OAR violations. However, these are relatively minor. The reason for this is that the number of stage-2 fractions is relatively high (between 20 and 30fx), so any method delivers reasonably low dose per fraction in stage 2. Consequently, the quadratic term in the BED model is smaller, and so is the influence of the $\alpha/\beta$ parameters. With higher dose per fraction, the use of incorrect (e.g., nominal) $\alpha/\beta$ parameter values may yield higher OAR constraint violations. Preliminary experiments for stereotactic body radiation therapy (SBRT, an RT modality that uses around five high dose fractions) indeed result in slightly higher OAR constraint violations for NOM and NOM-FH. In any case, a trade-off can be observed between higher tumor BED attained by NOM and NOM-FH and associated OAR constraint violations.

\subsection{Optimal moment of biomarker acquisition} \label{sec: results-N1}
The moment of biomarker observation, need not be fixed. Part of the decision making process then involves choosing this observation moment such that it maximally improves treatment quality. Late observation may result in limited possibilities for treatment adaptation, whereas with too early observation one cannot yet reliably observe the true individual patient response. Although one can incorporate $N_1$ as a decision variable in the mathematical model, the small decision space allows to simply vary its value in numerical experiments. We assume a (hypothetical) mathematical relationship between information point $N_1$ and the data quality parameter DQ. With $N^{\text{max}}$ the maximum number of fractions, we consider the following three data quality functions:
\begin{subequations}
\begin{align}
\text{DQ}_1(N_1) ~&= \Bigg(\frac{N_1}{N^{\text{max}}} \Bigg)^4 \\
\text{DQ}_2(N_1) ~&= \frac{N_1}{N^{\text{max}}} \\
\text{DQ}_3(N_1) ~&= \Bigg(\frac{N_1}{N^{\text{max}}} \Bigg)^{1/4}.
\end{align}
\end{subequations}
Hence, $\text{DQ}_1$, $\text{DQ}_2$ and $\text{DQ}_3$ describe a convex, linear and concave relationship between observation moment and data quality, respectively. Figure \ref{fig: data-quality-shapes} shows the graphs of the three functions. Whether $\text{DQ}_1$, $\text{DQ}_2$ or $\text{DQ}_3$ is most realistic depends on the specific biomarker(s) that is/are used, see \Cref{sec: biomarkers} for details.

\begin{figure}
\centering
\includegraphics[scale=1]{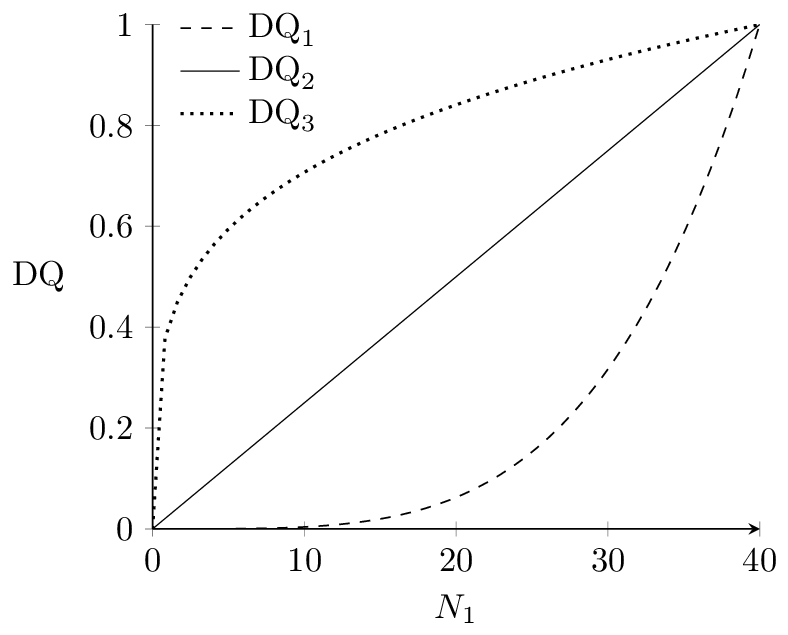}
\caption{\small The biomarker data quality is a function of the number of treatment fractions $N_1$ after which it is acquired. We consider three functions $\text{DQ}_i(N_1)$, $i=1,2,3$. \label{fig: data-quality-shapes}}
\end{figure}

\begin{figure}[ht!]
\hspace*{-0.6cm}
\begin{subfigure}{0.55\textwidth}
\includegraphics[scale=0.58]{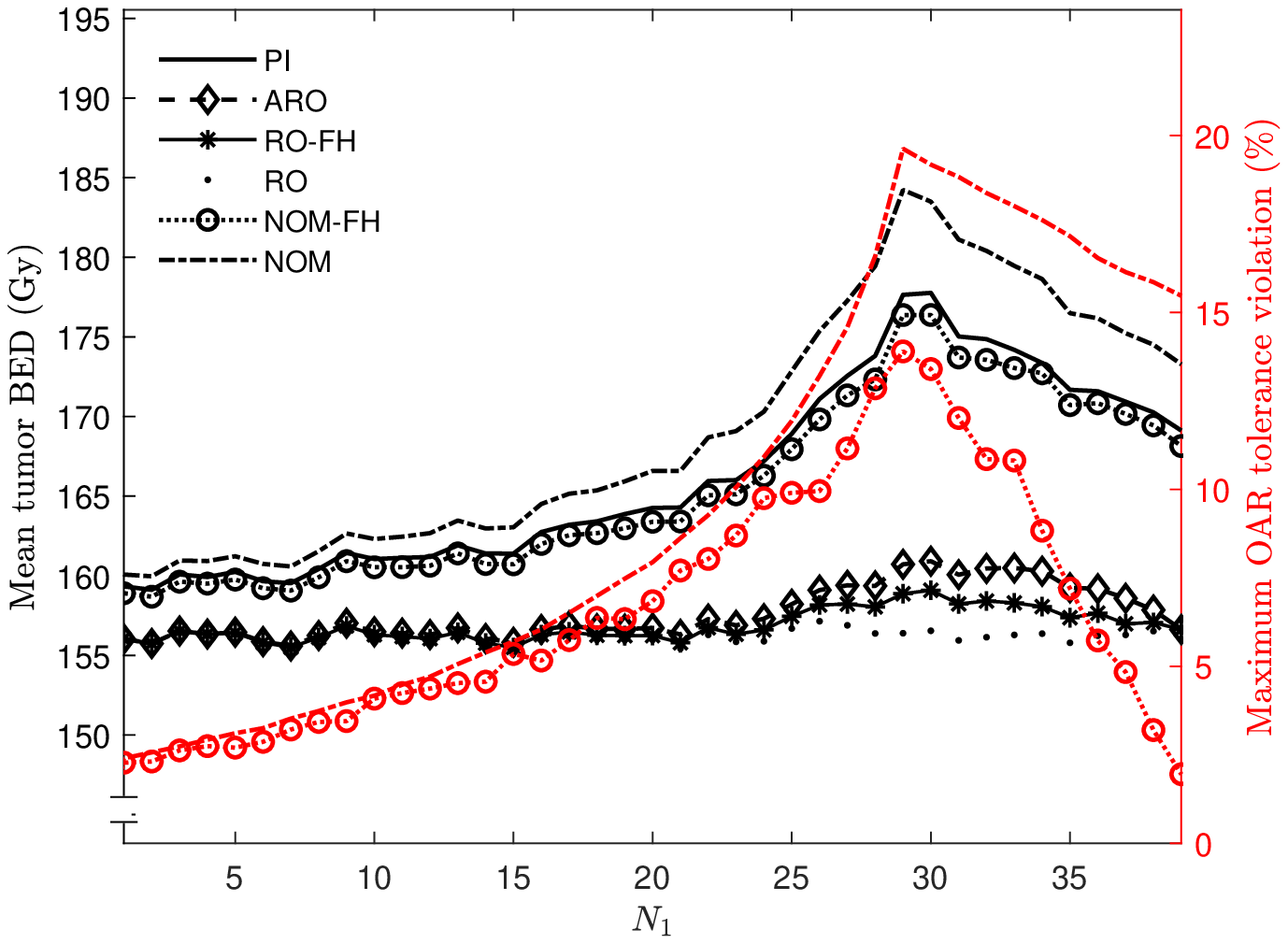}%
\caption{\small $\text{DQ}_1$ (convex) \label{fig: DQplots-convex}}
\end{subfigure}
\begin{subfigure}{0.55\textwidth}
\includegraphics[scale=0.58]{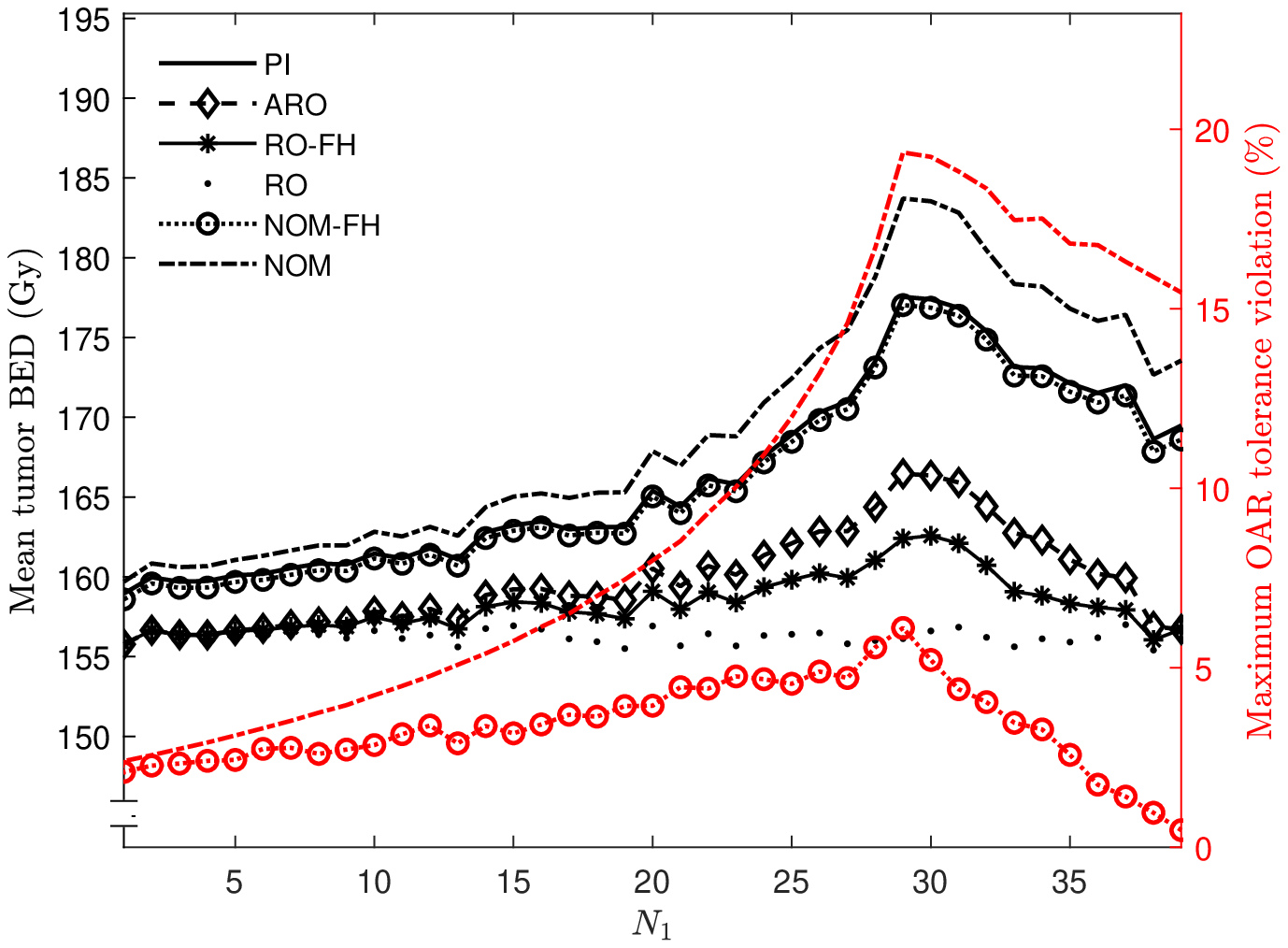}%
\caption{\small $\text{DQ}_1$ (linear) \label{fig: DQplots-linear}}
\end{subfigure}
\begin{subfigure}{\textwidth}
\centering
\includegraphics[scale=0.58]{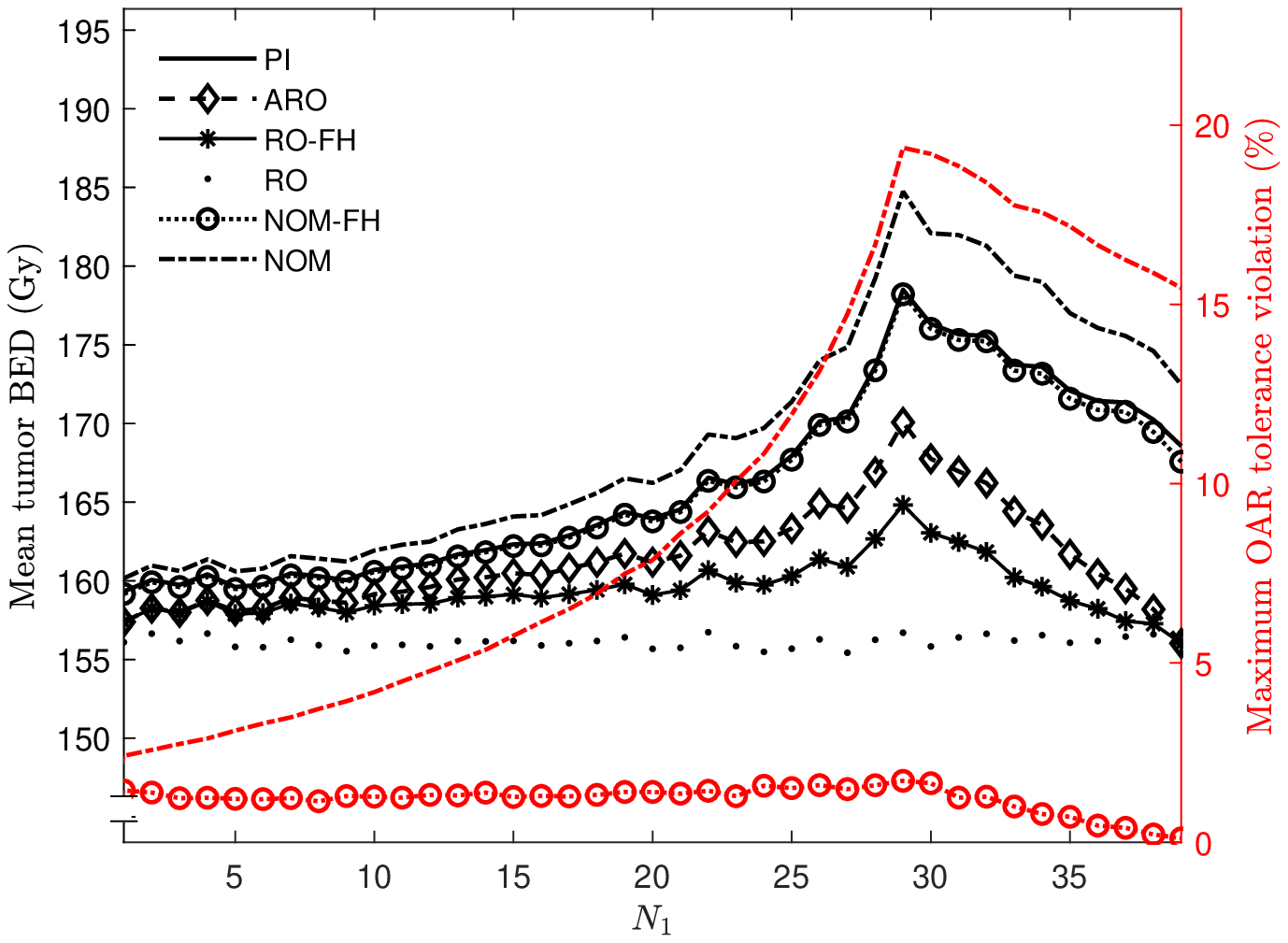}%
\caption{\small $\text{DQ}_2$ (concave) \label{fig: DQplots-concave}}
\end{subfigure}
\caption{\small Results for varying the information point $N_1$ from $1$ to $N^{\text{max}}-1$, for data quality functions $DQ_i(N_1)$, $i=1,2,3$.  The maximum OAR BED constraint violation ($\%$) of NOM (dash-dotted) and NOM-FH (dotted, circle marker) is measured against the right vertcial axis. Note that the left vertical axis measures displays the mean tumor BED (averaged over all patients and scenarios), while the methods maximize the worst-case tumor BED per patient. \label{fig: DQplots}}
\end{figure}

Figure \ref{fig: DQplots} shows the results for the numerical experiments where we vary the information point $N_1$ from $0$ to $N^{\text{max}}-1$ for data quality functions $\text{DQ}_i(N_1)$, $i=1,2,3$. The left vertical axis indicates the mean tumor BED (averaged over all patients and scenarios), the right vertical axis indicates the maximum OAR tolerance violation for NOM and NOM-FH. It is important to note that as $N_1$ increases past $N^{\text{min}}=30$, this also increases the minimum number of fractions correspondingly. Moreover, the dose per fraction is constant per treatment stage, so the choice of $N_1$ also influences the types of treatments that can be delivered. 

For these reasons the curve for PI is not constant, even though it does not actually use biomarker information. The optimal moment of biomarker acquisition for PI is $N_1 =  29$. This is because the minimum number of fractions is $N^{\text{min}} =30$. Hence, if hypofractionation is optimal we can deliver one more fraction with high dose, and deliver a low dose in stage 1. If hyperfractionation is optimal we can deliver $11$ more fractions (and get the total maximum of $40$) with low dose. Having $N_1 > 29$ forces the use of more than $N_2^{\text{min}} =30$ fractions, which is disadvantageous for those (patients, scenario) cases where hypofractionation is optimal. 

For the latter reason the NOM curve is also not constant. It results in a higher tumor BED than PI for any value of $N_1$, at the cost of OAR violations of up to $20\%$. NOM-FH yields a sample mean tumor BED close to PI for all three DQ functions, and the OAR violations depend on the DQ function. With poor data quality (\Cref{fig: DQplots-convex}) and an observation moment close to $N_1=29$,  OAR violations over $10\%$ are possible, despite the fact that NOM-FH is an adaptive method. On the other hand, with good (\Cref{fig: DQplots-convex}) data quality, the violations remain below $2\%$. The OAR tolerance violations are highest near $N_1 = 29$. This is because in case of hypofractionation in stage 2 the influence of the $\alpha/\beta$ parameters is highest, as was noted in \Cref{sec: results-inexact}.

The robust methods RO, RO-FH and ARO do not result in any OAR violations, by construction. The better the data quality, the larger the differences between RO, RO-FH and ARO. This implies that, if robustness is required, there is value in (i) adapting based on inexact information, (ii) taking adaptability into account when planning the stage-1 dose. The good performance of NOM-FH shows that this value diminishes if OAR violations are allowed. NOM-FH does not account for adaptability, and does not take inexactness of biomarker information into account. Nevertheless, it results in higher sample mean tumor BED for any $N_1$, and the difference increases from poor (convex) to good (concave) data quality. Thus, \Cref{fig: DQplots-concave} illustrates the trade-off between higher sample mean tumor BED and possible OAR violations that was also observed in \Cref{sec: results-inexact} (for the entire sample distribution).

The shape of the data quality function influences the optimal moment of biomarker observation only slightly. For all adaptive methods, we find that the peak is more pronounced for high data quality (concave) than low data quality (convex), but it is centered around $N_1=29$. In case of convex data quality the peak is relatively flat, indicating a trade-off between observing at $N_1=29$ (giving maximum adaptation flexibility) and postponing (waiting for higher data accuracy).

\section{Concluding remarks}\label{sec: conclusion}
In this paper we have presented an ARO approach to optimally adapt the treatment length of radiation therapy treatments, using mid-treatment biomarker information. Using an ARO approach, adaptability is taken into account prior to treatment and it provides insight into the optimal stage-2 decisions. 

In the case of exact biomarker information, there is sufficient space to adapt, and numerical results show that taking into account both robustness and adaptability is not necessary. In the case of inexact biomarker information, adaptive strategies can use only parameter estimates instead of true parameter values, and may still yield violations if this uncertainty is not accounted for. Accounting for adaptability and inexactness of biomarker information is particularly beneficial when robustness (w.r.t. OAR violations) is of high importance. If minor OAR violations are allowed, NOM-FH is a good performing alternative, which can outperform ARO. NOM-FH and ARO thus yield a trade-off between higher performance and OAR violations. Both the difference in performance and the magnitude of OAR violations of NOM-FH are highly influenced by the data quality (i.e., accuracy of parameter estimates).

The current setting can be extended in several ways. In practice the tumor and OAR $\alpha/\beta$ values would have to be estimated from actual biomarkers (e.g., imaging, blood-based biomarkers, genotyping), which can be incorporated in the model. Furthermore, the approach can be extended to heterogeneous tumor response (different $\alpha/\beta$ ratios for different tumor subvolumes), or time-dependent response parameters. Other RT applications may also benefit from ARO, such as re-optimization to account for organ motion or setup errors, optimization using the MR-linac or combining RT with chemotherapy.

\bibliographystyle{apalike}
\bibliography{ReferenceList}

\begin{thebibliography}{}

\bibitem[Ajdari and Ghate, 2017]{Ajdari17b}
Ajdari, A. and Ghate, A. (2017).
\newblock Robust fractionation in radiotherapy.
\newblock Unpublished paper.

\bibitem[Ajdari et~al., 2018]{Ajdari18}
Ajdari, A., Ghate, A., and Kim, M. (2018).
\newblock Adaptive treatment-length optimization in spatiobiologically
  integrated radiotherapy.
\newblock {\em Phys. Med. Biol.}, 63(7):075009.

\bibitem[Ajdari et~al., 2019]{Ajdari19}
Ajdari, A., Niyazi, M., Nicolay, N., Thieke, C., Jeraj, R., and Bortfeld, T.
  (2019).
\newblock Towards optimal stopping in radiation therapy.
\newblock {\em Radiother. Oncol.}, 134:96--100.

\bibitem[Barker and Powell, 2010]{Barker10}
Barker, C.~A. and Powell, S.~N. (2010).
\newblock Enhancing radiotherapy through a greater understanding of homologous
  recombination.
\newblock {\em Semin. Radiat. Oncol.}, 20(4):267–273.

\bibitem[Baumann et~al., 2016]{Baumann16}
Baumann, M., Krause, M., Overgaard, J., Debus, J., Bentzen, S.~M., Daartz, J.,
  ..., and Bortfeld, T. (2016).
\newblock Radiation oncology in the era of precision medicine.
\newblock {\em Nat. Rev. Cancer}, 16(4):234--249.

\bibitem[Ben-Tal et~al., 2004]{BenTal04}
Ben-Tal, A., Goryashko, A., Guslitzer, E., and Nemirovski, A. (2004).
\newblock Adjustable robust solutions of uncertain linear programs.
\newblock {\em Math. Program.}, 99:351--376.

\bibitem[Bindra et~al., 2013]{Bindra13}
Bindra, R.~S., Goglia, A.~G., Jasin, M., and Powell, S.~N. (2013).
\newblock Development of an assay to measure mutagenic non-homologous
  end-joining repair activity in mammalian cells.
\newblock {\em Nucleic Acids Res.}, 41(11):e115.

\bibitem[B{\"o}ck et~al., 2019]{Bock19}
B{\"o}ck, M., Eriksson, K., and Forsgren, A. (2019).
\newblock On the interplay between robustness and dynamic planning for adaptive
  radiation therapy.
\newblock {\em Biomed. Phys. Eng. Express}, 5(4):045004.

\bibitem[B{\"o}ck et~al., 2017]{Bock17}
B{\"o}ck, M., Eriksson, K., Forsgren, A., and H{\aa}rdemark, B. (2017).
\newblock Toward robust adaptive radiation therapy strategies.
\newblock {\em Med Phys.}, 44(6):2054--2065.

\bibitem[Bortfeld et~al., 2015]{Bortfeld15}
Bortfeld, T., Ramakrishnan, J., Tsitsiklis, J.~N., and Unkelbach, J. (2015).
\newblock Optimization of radiation therapy fractionation schedules in the
  presence of tumor repopulation.
\newblock {\em {INFORMS} J. Comput.}, 27(4):788--803.

\bibitem[Chan and Mi\v{s}i\'{c}, 2013]{Chan13}
Chan, T. C.~Y. and Mi\v{s}i\'{c}, V.~V. (2013).
\newblock Adaptive and robust radiation therapy optimization for lung cancer.
\newblock {\em Eur. J. Oper. Res}, 231:745--756.

\bibitem[Cox, 1986]{Cox86}
Cox, J.~D. (1986).
\newblock Presidential address: Fractionation: A paradigm for clinical research
  in radiation oncology.
\newblock {\em Int. J. Radiat. Oncol. Biol. Phys.}, 13:1271--1281.

\bibitem[Dabadghao and Roy, 2020]{Dabadghao20}
Dabadghao, S. and Roy, A. (2020).
\newblock Optimal interventions for adaptive robust optimization under
  time-dependent uncertainty with application to radiotherapy.
\newblock Available at {SSRN}.

\bibitem[De~Ruiter et~al., 2017]{DeRuiter17}
De~Ruiter, F. J. C.~T., Ben-Tal, A., Brekelmans, R. C.~M., and den Hertog, D.
  (2017).
\newblock Robust optimization of uncertain multistage inventory systems with
  inexact data in decision rules.
\newblock {\em Comput. Manag. Sci.}, 14(1):45--77.

\bibitem[De~Ruiter et~al., 2016]{DeRuiter16}
De~Ruiter, F. J. C.~T., Brekelmans, R. C.~M., and den Hertog, D. (2016).
\newblock The impact of the existence of multiple adjustable robust solutions.
\newblock {\em Math. Program.}, 160:531--545.

\bibitem[Ehrgott et~al., 2008]{Ehrgott08}
Ehrgott, M., G{\"u}ler, {\c C}., Hamacher, H.~W., and Shao, L. (2008).
\newblock Mathematical optimization in intensity modulated radiation therapy.
\newblock {\em {4OR-Q}. J. Oper. Res.}, 6:199--262.

\bibitem[Fowler, 1989]{Fowler89}
Fowler, J.~F. (1989).
\newblock The linear-quadratic formula and progress in fractionated
  radiotherapy.
\newblock {\em Brit. J. Radiol.}, 62(740):679--694.

\bibitem[Fowler, 2010]{Fowler10}
Fowler, J.~F. (2010).
\newblock 21 {Years} of biologically effective dose.
\newblock {\em Brit. J. Radiol.}, 83:554--568.

\bibitem[Ghate, 2011]{Ghate11}
Ghate, A. (2011).
\newblock Dynamic optimization in radiotherapy.
\newblock In Geunes, J., Gray, P., and Greenberg, H., editors, {\em Tutorials
  in {Operations Research}: transforming research into action}, chapter~4.
  {INFORMS}.

\bibitem[Hall and Giaccia, 2012]{Hall12}
Hall, E.~J. and Giaccia, A.~J. (2012).
\newblock {\em Radiobiology for the radiologist}.
\newblock Lippincott Williams \& Wilkins.
\newblock {Philadelphia, Pennsylvania, USA}.

\bibitem[Iancu and Trichakis, 2014]{Iancu14}
Iancu, D.~A. and Trichakis, N. (2014).
\newblock Pareto efficiency in robust optimization.
\newblock {\em Manag. Sci.}, 60(1):130--147.

\bibitem[Iancu et~al., 2020]{Iancu20}
Iancu, D.~A., Trichakis, N., and Yoon, D.~Y. (2020).
\newblock Monitoring with limited information.
\newblock \emph{Manag. Sci.} Articles in Advance.

\bibitem[Kehwar, 2005]{Kehwar05}
Kehwar, T.~S. (2005).
\newblock Analytical approach to estimate normal tissue complication
  probability using best fit of normal tissue tolerance doses into the {NTCP}
  equation of the linear quadratic model.
\newblock {\em J. Canc. Res. Ther.}, 1(3):168--179.

\bibitem[Kim et~al., 2012]{Kim12}
Kim, M., Ghate, A., and Phillips, M.~H. (2012).
\newblock A stochastic control formalism for dynamic biologically conformal
  radiation therapy.
\newblock {\em Eur. J. Oper. Res.}, 219:541--556.

\bibitem[Lim et~al., 2020]{Lim20}
Lim, G.~J., Kardar, L., Ebrahimi, S., and Cao, W. (2020).
\newblock A risk-based modeling approach for radiation therapy treatment
  planning under tumor shrinkage uncertainty.
\newblock {\em Eur. J. Oper. Res.}, 280(1):266--278.

\bibitem[Long, 2015]{Long15}
Long, T. (2015).
\newblock {\em Optimization problems in radiation therapy treatment planning}.
\newblock PhD thesis, University of Michigan.
\newblock Chapter 5: Adaptive treatment planning for lung cancer.

\bibitem[Mahmoudzadeh, 2015]{Mahmoudzadeh15}
Mahmoudzadeh, H. (2015).
\newblock {\em Robust optimization methods for breast cancer radiation
  therapy}.
\newblock PhD thesis, University of Toronto.
\newblock Chapter 4: {Pareto} robust optimization in breast cancer {RT}.

\bibitem[Mar and Chan, 2015]{Mar15}
Mar, P.~A. and Chan, T. C.~Y. (2015).
\newblock Adaptive and robust radiation therapy in the presence of drift.
\newblock {\em Phys. Med. Biol.}, 60(9):3599--3615.

\bibitem[Mizuta et~al., 2012]{Mizuta12}
Mizuta, M., Takao, S., Date, H., Kishimoto, N., Sutherland, K.~L., Onimaru, R.,
  and Shirato, H. (2012).
\newblock A mathematical study to select fractionation regimen based on
  physical dose distribution and the linear-quadratic model.
\newblock {\em Int. J. Radiat. Oncol. Biol. Phys.}, 84(3):829--833.

\bibitem[Nohadani and Roy, 2017]{Nohadani17}
Nohadani, O. and Roy, A. (2017).
\newblock Robust optimization with time-dependent uncertainty in radiation
  therapy.
\newblock {\em IISE Trans. Healthc. Syst. Eng.}, 7(2):81--92.

\bibitem[Perk\'{o} et~al., 2018]{Perko18}
Perk\'{o}, Z., Bortfeld, T., Hong, T., Wolfgang, J., and Unkelbach, J. (2018).
\newblock Derivation of mean dose tolerances for new fractionation schemes and
  treatment modalities.
\newblock {\em Phys. Med. Biol.}, 63(3):035038.

\bibitem[Saberian et~al., 2016a]{Saberian16a}
Saberian, F., Ghate, A., and Kim, M. (2016a).
\newblock Optimal fractionation in radiotherapy with multiple normal tissues.
\newblock {\em Math. Med. Biol.}, 33(2):211--252.

\bibitem[Saberian et~al., 2016b]{Saberian16b}
Saberian, F., Ghate, A., and Kim, M. (2016b).
\newblock A theoretical stochastic control framework for adapting radiotherapy
  to hypoxia.
\newblock {\em Phys. Med. Biol.}, 61(19):7136--7161.

\bibitem[Saberian et~al., 2017]{Saberian17}
Saberian, F., Ghate, A., and Kim, M. (2017).
\newblock Spatiotemporally optimal fractionation in radiotherapy.
\newblock {\em {INFORMS} J. Comput.}, 29(3):422--437.

\bibitem[Saka et~al., 2014]{Saka14}
Saka, B., Rardin, R.~L., and Langer, M.~P. (2014).
\newblock Biologically guided intensity modulated radiation therapy planning
  optimization with fraction-size dose constraints.
\newblock {\em J. Oper. Res. Soc.}, 65:557--571.

\bibitem[Santiago et~al., 2016]{Santiago16}
Santiago, A., Barczyk, S., Jelen, U., Engenhart-Cabillic, R., and Wittig, A.
  (2016).
\newblock Challenges in radiobiological modeling: can we decide between {LQ}
  and {LQ-L} models based on reviewed clinical {NSCLC} treatment outcome data.
\newblock {\em Radiat. Oncol.}, 11:67.

\bibitem[Shepard et~al., 1999]{Shepard99}
Shepard, D.~M., Ferris, M.~C., Olivera, G.~H., and Mackie, T.~R. (1999).
\newblock Optimizing the delivery of radiation therapy to cancer patients.
\newblock {\em {SIAM} Rev.}, 41(4):721--744.

\bibitem[Somaiah et~al., 2015]{Somaiah15}
Somaiah, N., Rothkamm, K., and Yarnold, J. (2015).
\newblock Where do we look for markers of radiotherapy fraction size
  sensitivity?
\newblock {\em Clin. Oncol.}, 27:570--578.

\bibitem[Van~Leeuwen et~al., 2018]{vanLeeuwen18}
Van~Leeuwen, C.~M., Oei, A., Crezee, J., Bel, A., Franken, N. A.~P., Stalpers,
  L. J.~A., and Kok, H.~P. (2018).
\newblock The alpha and beta of tumours: a review of parameters of the
  linear-quadratic model, derived from clinical radiotherapy studies.
\newblock {\em Radiation. Oncol.}, 13:96.

\bibitem[Withers, 1985]{Withers85}
Withers, H.~R. (1985).
\newblock Biological basis for altered fractionation schemes.
\newblock {\em Cancer}, 55:2086--2095.

\bibitem[Yan{\i}ko\u{g}lu et~al., 2019]{Yanikoglu19}
Yan{\i}ko\u{g}lu, {\.I}., Gorissen, B.~L., and den Hertog, D. (2019).
\newblock A survey of adjustable robust optimization.
\newblock {\em Eur. J. Oper. Res.}, 277:799--813.

\end{thebibliography}

\section*{Appendix}
\appendix
\renewcommand{\theequation}{\thesection.\arabic{equation}}
\renewcommand{\thefigure}{\thesection.\arabic{figure}}
\renewcommand{\thetable}{\thesection.\arabic{table}}

\setcounter{equation}{0}
\setcounter{figure}{0}
\setcounter{table}{0}
\sectionfont{\normalsize}
\subsectionfont{\small}

\section{Results exact biomarker information: out-of-sample performance} \label{app: out-of-sample}
\small
To investigate the out-of-sample performance of the methods, we assume a uniform distribution for $(\rho,\tau)$ over a larger set than $Z$. We can write $Z$ as
\begin{align}
Z = \{(\rho,\tau): \rho_L \leq \rho \leq \rho_U, \tau_L \leq \tau \leq \tau_U \} = \{(\rho,\tau): | \bar{\rho} - \rho | \leq \varepsilon_{\rho},~| \bar{\tau} - \tau | \leq \varepsilon_{\tau} \},
\end{align}
where $(\varepsilon_{\rho},\varepsilon_{\tau})$ is the maximum deviation from the nominal scenario $(\bar{\rho},\bar{\tau})$. This allows us to define
\begin{align}
Z_c = \{(\rho,\tau): | \bar{\rho} - \rho | \leq c\varepsilon_{\rho},~| \bar{\tau} - \tau | \leq c\varepsilon_{\tau} \},
\end{align}
where $c>0$ is a parameter. We assume a uniform distribution over the new set $Z_c$. If $c=1$, we have $Z_c = Z$, so we sample exactly from $Z$. If $c>1$, we sample from an interval that is $c^2$ times as large as $Z$ ($c$ times larger for both $\rho$ and $\tau$). For $c=2$ we obtain the results in Table \ref{table: numexp-res-unif2}. The stage-1 dose $d_1$ is the same as in \Cref{table: numexp-res-unif} for all methods except PI, because PI is the only method that is aware that the samples are not taken from uncertainty set $Z$ but from $Z_2$. For NOM, the maximum violation percentage has increased slightly. All other methods are able to deal with the out-of-sample realizations and do not have any OAR constraint violations.

\begin{table}[htb!]
\small
\centering
\begin{tabular}{l | c c c c c c} \toprule
       & \multicolumn{6}{c}{Method} \\
      						&    NOM & NOM-FH &     RO &  RO-FH &    ARO &    PI \\ \midrule                         
Tumor BED - sample mean  (Gy)                   & 156.49&	158.92&	151.30&	156.83&	158.87&	159.06 \\
Tumor BED - sample $5\%$ quantile (Gy)          & 140.14&	142.11&	137.55&	141.11&	142.11&	142.32 \\
Tumor BED - sample wc (Gy) 						& 134.06&	136.80&	132.56&	136.09&	136.79&	137.12 \\
\textbf{Tumor BED - wc over $Z$} (\textbf{Gy})  & \textbf{114.72}&	\textbf{116.19}&	\textbf{116.19}&	\textbf{116.19}&	\textbf{116.19}&	\textbf{116.19} \\
OAR violation - mean ($\%$) 					& 1.16	&	0	&	0	&	0	&	0	&	0 \\
OAR violation - max  ($\%$) 				    & 5.30	&	0	&	0	&	0	&	0	&	0 \\
Stage-1 dose $d_1$ (Gy) 					    & 1.50	&	1.50&	2.29&	2.29&	1.51&	1.73 \\
Stage-2 dose $d_2$ (Gy) 					    & 3.45	&	3.21&	2.48&	2.92&	3.21&	3.15 \\
Stage-2 fractions $N_2$ 					    & 20	&	22.9&	27.2&	22.9&	22.9&	22.9 \\ \bottomrule         
\end{tabular}
\caption{\small Results for experiments with exact biomarker information and uniform sampling of $(\rho,\tau)$ over $Z_2$. For each scenario, results are averaged over 20 patients$^{\ast}$. All methods optimize for worst-case tumor BED in $Z$, which is displayed in bold.\\
$^{\ast}$: the maximum OAR violation is computed over all patients and scenarios
\label{table: numexp-res-unif2}}
\end{table}

Due to the larger sampling space (the area of $Z_2$ is four times the area of $Z$), the difference between sample mean and sample worst-case performance is much larger than in \Cref{table: numexp-res-unif} for all methods. The true worst-case objective value in $Z$ is still lower than the sample worst-case in $Z_2$. The reason for this is that the true worst-case scenario can differ per patient. The relative performance of the adaptive methods remains mostly unchanged.
\begin{figure}[htb!]
\centering
\includegraphics[scale=0.8]{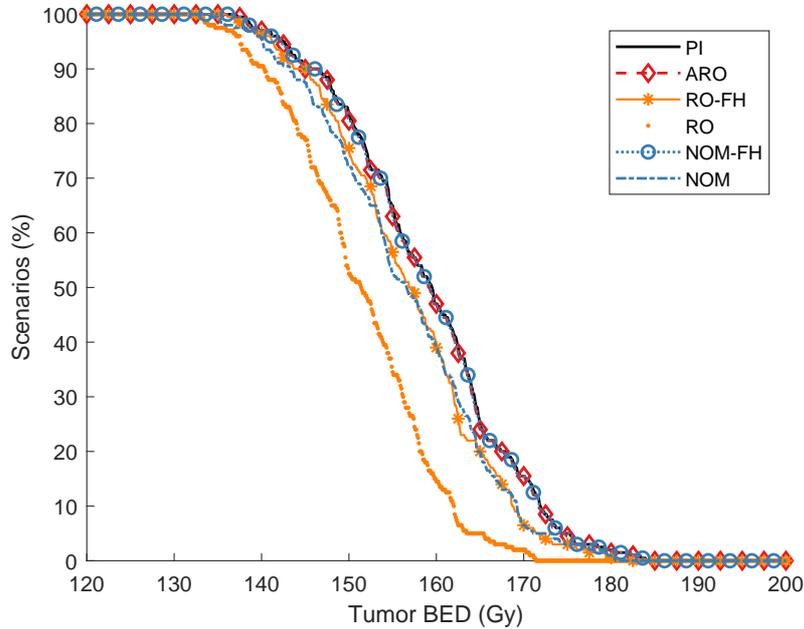}
\caption{\small Cumulative scenario-tumor BED graph for experiments with exact biomarker information and uniform sampling of $(\rho,\tau)$ over $Z_2$ (200 scenarios). A point $(x,y)$ indicates that in $y\%$ of scenarios the tumor BED (averaged over 20 patients) is at least $x$ Gy. ARO and NOM-FH are very close to PI. \label{fig: numexp-exact-outofsample}}
\end{figure}

NOM-FH and ARO are near optimal for the worst-case sample scenario, and are also close to PI in the sample $5\%$ quantile and sample mean. The relative performance of the adaptive methods remains mostly unchanged, RO-FH performs slightly worse than NOM-FH and ARO, similar to \Cref{table: numexp-res-unif}. Compared to \Cref{table: numexp-res-unif}, NOM and RO have poor performance across the sample. This indicates bad performance of the static methods on scenarios outside of $Z$. 

\Cref{fig: numexp-exact-outofsample} shows the complete cumulative scenario-tumor BED graph for the `average patient'. Compared to \Cref{fig: numexp-exact}, the main difference is the decrease in performance of NOM. Naturally, the performance of static nominal optimization is directly related to the magnitude of possible deviations from the nominal scenario, which is higher in $Z_2$ than in $Z$.

\setcounter{equation}{0}
\setcounter{figure}{0}
\setcounter{table}{0}
\section{Extra analyses and proofs} \label{app: proofs}
\small
For convenience, we repeat the definitions of functions $B$, $g$ and $f$:
\begin{subequations}
\small
\begin{flalign}
B(d',N';\rho) &:= \varphi D \Big( 1 + \frac{\varphi D}{T} \rho\Big) - \sigma d_1 N_1 - \sigma^2\rho d_1^2 N_1& \label{appeq: B}\\
g(d',N',N''; \rho) &:= \frac{-1 + \sqrt{1 + \frac{4\rho}{N''}B(d',N';\rho)}}{2\sigma \rho}& \label{appeq: g}\\
f(d_1,N_2 ; \rho,\tau) &:=
\begin{cases}
 N_1d_1+ N_2 g(d_1,N_1,N_2;\rho) + \tau \big(N_1d_1^2+N_2g(d_1,N_1,N_2;\rho)^2\big) & \text{ if } d_1 \in [0, g(0,0,N_1;\rho)]  \\
  -\infty & \text{ otherwise},
 \end{cases} & \label{appeq: f}
\end{flalign}
\end{subequations}
see \eqref{eq: B}, \eqref{eq: g} and \eqref{eq: f}.

\subsection{Proof \Cref{theorem: exact-stage2}} \label{app: proof-exact-stage2}
First, we show that for fixed $d_1$, feasible to \eqref{eq: exact}, and given $(\rho,\tau)$, it is optimal to minimize the number of stage-2 fractions if $\tau \geq \sigma \rho$, and it is optimal to maximize the number of stage-2 fractions otherwise. After that, we show that with stage-2 dose $d_2$ such that \eqref{eq: exact-2} holds with equality, $N_2(\rho,\tau)=N_2^{\text{min}}$ is feasible if $\tau \geq \sigma \rho$ and $N_2(\rho,\tau)=N_2^{\text{max}}$ is feasible otherwise.

Consider problem \eqref{eq: exact}. At the start of stage 2, we have delivered $N_1$ fractions with dose $d_1$ per fraction. Let $(\rho,\tau)$ be the realization of the uncertain parameters. The stage-2 problem reads
\begin{subequations}\label{appeq: exact-stage-2-problem}
\begin{align}
N_1d_1 + \tau N_1 d_1^2 + \max_{d_2,N_2} ~&~ N_2d_2 + \tau  N_2d_2^2 \label{appeq: exact-stage-2-problem1} \\
\text{s.t.}    ~&~ \sigma N_2d_2 + \rho \sigma^2 N_2d_2^2\leq B(d_1,N_1,\rho)  \label{appeq: exact-stage-2-problem2} \\
               ~&~ d_2 \geq d_{\text{min}} \\
               ~&~ N_2 \in \{N_2^{\text{min}},\dotsc,N_2^{\text{max}} \}.
\end{align}
\end{subequations}
This is a static fractionation problem. Constraint \eqref{appeq: exact-stage-2-problem2} will hold with equality at the optimum, because it is the only dose-limiting constraint. This yields
\begin{align} \label{appeq: d2-function}
d_2^{\ast}(d_1,N_2; \rho) = g(d_1,N_1,N_2; \rho).
\end{align}
Secondly, this allows us to rewrite the objective to
\begin{align}
\max_{d_2,N_2} ~&~ N_2d_2 \Big( \frac{\sigma \rho - \tau}{\sigma \rho} \Big) + \frac{\tau B(d_1,N_1,\rho)}{\sigma^2 \rho},
\end{align}
which implies that if $\tau > \sigma \rho$ it is optimal to minimize $d_2N_2$. If $\tau < \sigma \rho$ it is optimal to maximize $d_2N_2$, and if $\tau = \sigma \rho$ the objective value is independent of the value of $N_2$. Similar results are obtained in \citet{Mizuta12,Bortfeld15}. As given in \Cref{sec: exact}, at the optimum
\begin{align}
N_2d_2^{\ast}(d_1,N_2;\rho) = N_2g(d_1,N_1,N_2;\rho) = \frac{-N_2 + \sqrt{N_2^2 + 4 N_2 \rho B(d_1,N_1;\rho)}}{2\sigma \rho},
\end{align}
and it is straightforward to show that
\begin{align}
\frac{\partial N_2g(d_1,N_1,N_2;\rho) }{\partial N_2} \geq 0.
\end{align}
Hence, if $\tau > \sigma \rho$, it is optimal to minimize the number of fractions, and if $\tau < \sigma \rho$ it is optimal to maximize the number of fractions. If $\tau = \sigma \rho$, every feasible number of fractions is optimal.

For the second part, we must show that for any $(\rho,\tau) \in Z \cap \{\tau \geq \sigma\rho \}$ resp. $(\rho,\tau) \in Z\cap \{ \tau < \sigma\rho \}$, it is indeed possible to deliver $N_2^{\text{min}}$ resp. $N_2^{\text{max}}$ fractions with dose according to \eqref{appeq: d2-function} in stage 2. That is, we must show
\begin{subequations}
\begin{align}
& g(d_1,N_1,N_2^{\text{min}};\rho) \geq d^{\text{min}},~~\forall (\rho,\tau) \in  Z \cap \{\tau \geq \sigma\rho \} \\
& g(d_1,N_1,N_2^{\text{max}};\rho) \geq d^{\text{min}},~~\forall (\rho,\tau) \in Z\cap \{ \tau < \sigma\rho \},
\end{align}
\end{subequations}
which is equivalent to
\begin{subequations} \label{appeq: fractionation-feasible}
\begin{align}
& d_1 \leq g(d^{\text{min}},N_2^{\text{min}},N_1;\rho),~~\forall (\rho,\tau) \in Z \cap \{\tau \geq \sigma\rho \} \\
& d_1 \leq g(d^{\text{min}},N_2^{\text{max}},N_1;\rho),~~\forall (\rho,\tau) \in Z\cap \{ \tau < \sigma\rho \}.
\end{align}
\end{subequations}
\Cref{lemma: g-omega} states that $g$ is increasing or decreasing in $\rho$ for a fixed first argument. Hence, it is sufficient to consider only the largest and smallest value of $\rho$ in either subset of $Z$. Therefore, \eqref{appeq: fractionation-feasible} is equivalent to
\begin{subequations} \label{appeq: fractionation-feasible2}
\begin{align}
& d_1 \leq g(d^{\text{min}},N_2^{\text{min}},N_1;\rho_L) \\
& d_1 \leq g(d^{\text{min}},N_2^{\text{min}},N_1;\min\{\frac{\tau_U}{\sigma},\rho_U\})  \label{appeq: fractionation-feasible2-2}\\
& d_1 \leq g(d^{\text{min}},N_2^{\text{max}},N_1;\frac{\tau_L}{\sigma}) \\
& d_1 \leq g(d^{\text{min}},N_2^{\text{max}},N_1;\rho_U).
\end{align}
\end{subequations}
From \eqref{appeq: g} we see that function $g$ is decreasing in its second argument, so \eqref{appeq: fractionation-feasible2-2} is redundant. The remaining three conditions in \eqref{appeq: fractionation-feasible2} hold true due to \Cref{ass: exact-dminmax}. Hence, an optimal decision rule for $N_2(\cdot)$ is given by
\begin{align}
N_2^{\ast}(\rho,\tau) =
\begin{cases}
N_2^{\text{min}} & \text{ if } \tau \geq \sigma \rho \\
N_2^{\text{max}} & \text{ otherwise,}
\end{cases}
\end{align}
and 
\begin{align} 
d_2^{\ast}(d_1;\rho,\tau) &=
\begin{cases}
g(d_1,N_1,N_2^{\text{min}};\rho) & \text{ if } \tau \geq \sigma \rho  \\
g(d_1,N_1,N_2^{\text{max}};\rho) & \text{ otherwise}.
\end{cases}
\end{align}
are optimal decision rules for $N_2(\cdot)$ and $d_2(\cdot)$, respectively. For $\tau \neq \sigma \rho$, these give the unique optimal decisions. For $\tau = \sigma \rho$ any $N_2 \in \{N_2^{\text{min}},\dotsc,N_2^{\text{max}}\}$ is optimal, and the corresponding optimal $d_2$ follows according to \eqref{appeq: d2-function}.

\subsection{Proof \Cref{theorem: PARO}} \label{app: proof-PARO}
Due to \Cref{theorem: exact-stage2} a stage-1 decision $d_1$ is \PARO{} according to \Cref{def: PARO-x} if conditions \eqref{eq: PARO-conditions} hold with $(d_2^{\ast}(\cdot),N_2^{\ast}(\cdot))$ plugged in. Thus, we must show that for any $d_1 \in X^{\text{PARO}}$ there is no \ARO{} $\bar{d_1}$ such that
\begin{subequations}
\begin{align}
f(d_1, N_2^{\ast}(\rho,\tau); \rho, \tau) &\leq f(\bar{d_1}, N_2^{\ast}(\rho,\tau); \rho, \tau) ~~\forall (\rho,\tau) \in Z\\
f(d_1, N_2^{\ast}(\bar{\rho},\bar{\tau}); \bar{\rho}, \bar{\tau}) &< f(\bar{d_1}, N_2^{\ast}(\bar{\rho},\bar{\tau}); \bar{\rho}, \bar{\tau})~~\text{ for some } (\bar{\rho},\bar{\tau}) \in Z.
\end{align}
\end{subequations}
If $|X^{\text{PARO}}| = 1$, then the single element yields a strictly better objective value than all other elements in $X^{\text{ARO}}$ in either scenario $(\rho^{\text{aux-min}},\tau^{\text{aux-min}})$ or $(\rho^{\text{aux-max}},\tau^{\text{aux-max}})$ or both, so it is \PARO{}. For the remainder of this proof we assume $|X^{\text{PARO}}| \geq 2$.

Consider $X^{\text{aux-min}}$. By construction of $(\rho^{\text{aux-min}}, \tau^{\text{aux-min}})$ it holds that $\tau^{\text{aux-min}} \neq \sigma \rho^{\text{aux-min}}$. Hence, according to \Cref{lemma: twins}, there can be at most two values for $d_1$ in $X^{\text{aux-min}}$ that yield the same objective value $f$ in scenario $(\rho^{\text{aux-min}}, \tau^{\text{aux-min}})$. Hence, $|X^{\text{aux-min}}| = |X^{\text{PARO}} | = 2$. Denote the two elements of $X^{\text{PARO}}$ by $d_1'$ and $d_1''$, let $d_1' < d_1''$. Solutions $d_1'$ and $d_1''$ are both optimal to \eqref{eq: ARO-exact-aux-min} and \eqref{eq: ARO-exact-aux-max}. Hence, according to \Cref{lemma: twins}, it holds that
\begin{subequations}
\begin{align}
d_1'' & = t(d_1'; \rho^{\text{aux-min}},\tau^{\text{aux-min}}) \label{appeq: d''-min} \\
d_1'' & = t(d_1'; \rho^{\text{aux-max}},\tau^{\text{aux-max}}). \label{appeq: d''-max}
\end{align}
\end{subequations}
From the definition of $t$ (see \eqref{appeq: t}) we derive for $\sigma \rho \neq \tau$:
\begin{align}
\frac{\partial t(d_1; \rho,\tau)}{\partial \rho} = \frac{2 N_2^{\ast}(\rho,\tau)}{N_1 + N_2^{\ast}(\rho,\tau)} \frac{\partial g(d_1,N_1, N_2^{\ast}(\rho,\tau) ; \rho)}{\partial \rho},
\end{align}
because $N_2^{\ast}(\rho,\tau)$ is constant in $\rho$ unless $\sigma \rho = \tau$. According to \Cref{lemma: g-omega}, if for given $N_2$ it holds that $d_1 \neq d_1^{-}(N_2)$ and $d_1 \neq d_1^{+}(N_2)$ (defined in \eqref{appeq: d1-roots}), then function $g(d_1,N_1,N_2,\rho)$ is strictly increasing or decreasing in $\rho$. By construction, it holds that $d_1^{+}(N_2) = t(d_1^{-}(N_2);\rho,\tau)$ for any $\rho$. According to \Cref{lemma: intersection2}, we have $d_1^{-}(N_2^{\text{min}}) \neq d_1^{-}(N_2^{\text{max}})$, so $d_1'$ cannot be equal to both. Additionally, it cannot hold that $d_1' = d_1^{+}(N_2^{\text{min}})$ or $d_1' = d_1^{+}(N_2^{\text{max}})$, because it would imply $d'' \leq d'$. Hence, either $d_1' \notin \{d_1^{-}(N_2^{\text{min}}),d_1^{+}(N_2^{\text{min}})\}$ or $d_1' \notin \{d_1^{-}(N_2^{\text{max}}), d_1^{+}(N_2^{\text{max}})\}$ holds (or both).

We show that in either case, we can construct two new scenarios where $d_1'$ outperforms $d_1''$ in one scenario, and vice versa in the other. Suppose $d_1' \notin \{d_1^{-}(N_2^{\text{min}}),d_1^{+}(N_2^{\text{min}})\}$. In this case, it holds that
\begin{align} \label{appeq: partial-t-nonzero}
\frac{\partial t(d_1'; \rho^{\text{aux-min}},\tau^{\text{aux-min}})}{\partial \rho} \neq 0.
\end{align}
We consider two new scenarios. Let $\varepsilon >0$ be a sufficiently small number and define
\begin{subequations}
\begin{align}
(\rho_1,\tau_1) &:= (\rho^{\text{aux-min}} - \varepsilon,\tau^{\text{aux-min}}) ~~~~~~~~(\in \text{int}(Z^{\text{min}}))\\
(\rho_2,\tau_2) &:= (\rho^{\text{aux-min}} + \varepsilon,\tau^{\text{aux-min}}). ~~~~~~~(\in \text{int}(Z^{\text{min}}))
\end{align}
\end{subequations}
This is visualized in \Cref{fig: scenarios-epsilon}. Due to \eqref{appeq: partial-t-nonzero} and \eqref{appeq: d''-min}, it holds that
\begin{align} \label{appeq: clauses}
\big( t(d_1'; \rho_1,\tau_1) > d''  \land  t(d_1'; \rho_2,\tau_2) < d'' \big) \lor \big( t(d_1'; \rho_1,\tau_1) < d''  \land  t(d_1'; \rho_2,\tau_2) > d'' \big).
\end{align}
If the first clause is is true, we obtain
\begin{subequations}
\begin{align}
f(d_1',N_2^{\text{min}}; \rho_1,\tau_1) > f(d_1'',N_2^{\text{min}}; \rho_1,\tau_1) \\
f(d_1',N_2^{\text{min}}; \rho_2,\tau_2) < f(d_1'',N_2^{\text{min}}; \rho_2,\tau_2),
\end{align}
\end{subequations}
where we used convexity of $f(d_1,N_2^{\text{min}};\rho,\tau)$ for $(\rho,\tau) \in \text{int}(Z^{\text{min}})$. Similarly, if the second clause of \eqref{appeq: clauses} is true, we obtain
\begin{subequations}
\begin{align}
f(d_1',N_2^{\text{min}}; \rho_1,\tau_1) < f(d_1'',N_2^{\text{min}}; \rho_1,\tau_1) \\
f(d_1',N_2^{\text{min}}; \rho_2,\tau_2) > f(d_1'',N_2^{\text{min}}; \rho_2,\tau_2).
\end{align}
\end{subequations}
In either case, there is a scenario in $Z^{\text{min}}$ where $d_1'$ outperforms $d_1''$ and a scenario in $Z^{\text{min}}$ where $d_1''$ outperforms $d_1'$. Hence, both $d_1'$ and $d_1''$ are \PARO{}. Using similar arguments, we can show that in case $d_1' \notin \{d_1^{-}(N_2^{\text{max}}), d_1^{+}(N_2^{\text{max}})\}$ also both $d_1'$ and $d_1''$ are \PARO{}.
\begin{figure}
\centering
\includegraphics{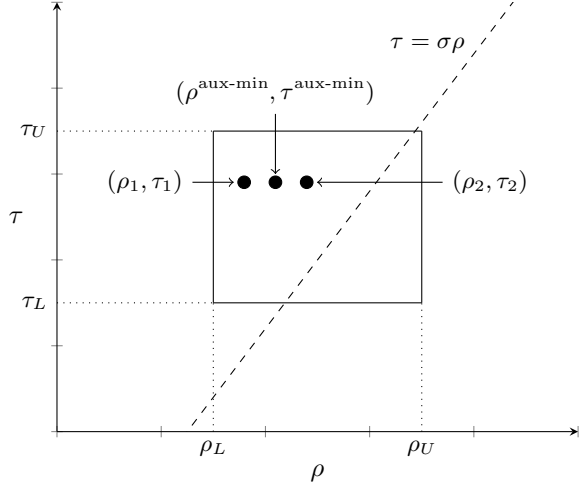}
\caption{\small Case $d_1' \neq d_1^{-}(N_2^{\text{min}})$. Construction of two new scenarios $(\rho_1,\tau_1)$ and $(\rho_2,\tau_2)$ from scenario $(\rho^{\text{aux-min}},\tau^{\text{aux-min}})$. Solution $d_1'$ outperforms $d_1''$ at one scenario, vice versa at the other. \label{fig: scenarios-epsilon}}
\end{figure}

\subsection{Proof \Cref{theorem: inexact-stage2}} \label{app: proof-inexact-stage2}
Consider problem \eqref{eq: inexact}. At the start of stage 2, we have delivered $N_1$ fractions with dose $d_1$ per fraction. Let $(\hat{\rho},\hat{\tau})$ be the observation. The resulting stage-2 problem for \eqref{eq: inexact} reads
\begin{subequations}\label{appeq: inexact-stage2p1}
\begin{align}
\max_{d_2,N_2} ~&~ \min_{(\rho,\tau) \in Z_{(\hat{\rho},\hat{\tau})}}~ (N_1d_1 + N_2d_2) + \tau ( N_1d_1^2 + N_2d_2^2) \label{appeq: inexact-stage2p1-1} \\
\text{s.t.}    ~&~ \sigma N_2d_2 + \rho \sigma^2 N_2d_2^2 \leq B(d_1,N_1,\rho)~~\forall (\rho,\tau) \in Z_{\hat{\rho},\hat{\tau}}  \label{appeq: inexact-stage2p1-2} \\
               ~&~ d_2 \geq d_{\text{min}} \label{appeq: inexact-stage2p1-3}\\
               ~&~ N_2 \in \{N_2^{\text{min}},\dotsc,N_2^{\text{max}} \}.
\end{align}
\end{subequations}
This is a static robust optimization problem. Constraint \eqref{appeq: inexact-stage2p1-2} will hold with equality at the optimum, because it is the only dose-limiting constraint. Solving for $d_2$ yields the constraint
\begin{align}
d_2 = g(d_1,N_1,N_2;\rho),~~\forall (\rho,\tau) \in Z_{(\hat{\rho},\hat{\tau})},
\end{align}
and this is used to rewrite \eqref{appeq: inexact-stage2p1-1} and \eqref{appeq: inexact-stage2p1-2} in terms of functions $f$ and $g$. Problem \eqref{appeq: inexact-stage2p1} is equivalent to
\begin{subequations} \label{appeq: inexact-stage2p2}
\begin{align}
\max_{N_2}  ~&~  \min_{(\rho,\tau) \in Z_{(\hat{\rho},\hat{\tau})}}~ f(d_1,N_2, \rho, \tau) \label{appeq: inexact-stage2p2-1} \\
\text{s.t.} ~&~ g(d_1,N_1,N_2;\rho) \geq d^{\text{min}}, ~~\forall (\rho,\tau) \in Z_{(\hat{\rho},\hat{\tau})} \label{appeq: inexact-stage2p2-2}\\
            ~&~ N_2 \in \{N_2^{\text{min}},\dotsc,N_2^{\text{max}} \}.
\end{align}
\end{subequations}
Similar to the exact case (\Cref{sec: exact}), in any worst-case realization it will hold that $\tau$ is at its lowest value, so it is sufficient to consider only those observations $(\rho,\tau) \in Z_{(\hat{\rho},\hat{\tau})}$ with $\tau = \hat{\tau}_L$. Additionally, according to \Cref{lemma: fg-omega} functions $f$ and $g$ are increasing or decreasing in $\rho$. Hence, there are two candidate worst-case scenarios: $(\hat{\rho}_L,\hat{\tau}_L)$ and $(\hat{\rho}_U,\hat{\tau}_L)$. We can rewrite \eqref{appeq: inexact-stage2p2} to
\begin{subequations} \label{appeq: inexact-stage2p3}
\begin{align}
\max_{N_2}  ~&~ \min \big\{ f(d_1,N_2, \hat{\rho}_L,\hat{\tau}_L),~f(d_1,N_2, \hat{\rho}_U,\hat{\tau}_L) \big\} \label{appeq: inexact-stage2p3-1} \\
\text{s.t.} ~&~ g(d_1,N_1,N_2;\hat{\rho}_L) \geq d^{\text{min}} \label{appeq: inexact-stage2p3-2}\\
            ~&~ g(d_1,N_1,N_2;\hat{\rho}_U) \geq d^{\text{min}} \label{appeq: inexact-stage2p3-3}\\
            ~&~ N_2 \in \{N_2^{\text{min}},\dotsc,N_2^{\text{max}} \}.
\end{align}
\end{subequations}
We distinguish three cases:
\begin{itemize}
\item Case $(\hat{\rho},\hat{\tau}) \in Z_{\text{ID}}^{\text{max}}$: Analogous to the proof of \Cref{theorem: exact-stage2}, one can show that for any realization $(\rho,\tau) \in Z_{(\hat{\rho},\hat{\tau})}$ it is optimal to maximize the number of fractions in stage 2. We plug in $N_2^{\ast}(\rho,\tau) = N_2^{\text{max}}$ and show that it is feasible. Constraints \eqref{appeq: inexact-stage2p3-2} and \eqref{appeq: inexact-stage2p3-3} reduce to
    \begin{align}
    \min \big\{ g(d_1,N_1,N_2^{\text{max}};\hat{\rho}_L), g(d_1,N_1,N_2^{\text{max}};\hat{\rho}_U) \big\} \geq d^{\text{min}},
    \end{align}
    which is equivalent to
    \begin{align} \label{appeq: inexact-d1bound1}
    d_1 \leq \min \big\{ g(d^{\text{min}}, N_2^{\text{max}}, N_1; \hat{\rho}_L), g(d^{\text{min}}, N_2^{\text{max}}, N_1; \hat{\rho}_U) \big\}.
    \end{align}
    It holds that $\hat{\rho}_L \geq \frac{\hat{\tau}_L}{\sigma} \geq \frac{\tau_L}{\sigma}$, and $\hat{\rho}_U \leq \rho_U$. According to \Cref{lemma: g-omega} function $g$ is either increasing or decreasing in $\rho$ for other arguments fixed. Hence, by \Cref{ass: inexact-dminmax} condition \eqref{appeq: inexact-d1bound1} holds. Hence, $N_2^{\ast}(\rho,\tau) = N_2^{\text{max}}$ is feasible and optimal. Thus, \eqref{appeq: inexact-stage2p2} equals
    \begin{align}
    \min \big\{ f(d_1,N_2^{\text{max}}, \hat{\rho}_L,\hat{\tau}_L),~f(d_1,N_2^{\text{max}}, \hat{\rho}_U,\hat{\tau}_L) \big\}.
    \end{align}
    By definition of $f$, this implies
    \begin{align}
    d_2 = \min \big\{ g(d_1,N_1,N_2^{\text{max}};\hat{\rho}_L),  g(d_1,N_1,N_2^{\text{max}};\hat{\rho}_U) \big\}.
    \end{align}

\item Case $(\hat{\rho},\hat{\tau}) \in Z_{\text{ID}}^{\text{min}}$: Similar to the previous case. Analogous to the proof of \Cref{theorem: exact-stage2}, one can show that for any realization $(\rho,\tau) \in Z_{(\hat{\rho},\hat{\tau})}$ it is optimal to minimize the number of fractions in stage 2. We plug in $N_2^{\ast}(\rho,\tau) = N_2^{\text{min}}$ and show that it is feasible. Similar to the previous case, constraints \eqref{appeq: inexact-stage2p3-2} and \eqref{appeq: inexact-stage2p3-3} reduce to
    \begin{align} \label{appeq: inexact-d1bound2}
    d_1 \leq \min \big\{ g(d^{\text{min}}, N_2^{\text{min}}, N_1; \hat{\rho}_L), g(d^{\text{min}}, N_2^{\text{min}}, N_1; \hat{\rho}_U) \big\}.
    \end{align}
    It holds that $\hat{\rho}_L \geq \rho_L$, and $\hat{\rho}_U \leq \rho_U$. Hence, by \Cref{ass: inexact-dminmax}, \Cref{lemma: g-omega} and using the fact that function $g$ is decreasing in its second argument, condition \eqref{appeq: inexact-d1bound2} holds. Hence, $N_2^{\ast}(\rho,\tau) = N_2^{\text{min}}$ is feasible and optimal. Similar to the previous case, we find
    \begin{align}
    d_2 = \min \big\{ g(d_1,N_1,N_2^{\text{min}};\hat{\rho}_L),  g(d_1,N_1,N_2^{\text{min}};\hat{\rho}_U) \big\}.
    \end{align}

\item Case $(\hat{\rho},\hat{\tau}) \in Z_{\text{ID}}^{\text{int}}$: The optimal number of fractions in stage-2 is not known a priori. By definition of $Z_{\text{ID}}^{\text{int}}$, it holds that $\hat{\rho}_L \geq \max\{\rho_L,\frac{\tau_L}{\sigma} - 2r^{\rho})\}$ and $\hat{\rho}_U \leq \rho_U$. By \Cref{ass: inexact-dminmax} it holds that
    \begin{align} \label{appeq: assumption-int}
        d_1 \leq \min  \big\{ g(d^{\text{min}}, N_2^{\text{max}}, N_1; \max\{\rho_L,\frac{\tau_L}{\sigma} - 2r^{\rho})\}, g(d^{\text{min}}, N_2^{\text{max}}, N_1; \rho_U) \big\}.
    \end{align}
    \Cref{lemma: g-omega}, the fact that function $g$ is decreasing in its third argument and  \eqref{appeq: assumption-int} together imply that \eqref{appeq: inexact-stage2p3-2} and \eqref{appeq: inexact-stage2p3-3} hold for any feasible $N_2$. Hence, from problem \eqref{appeq: inexact-stage2p3} we derive
    \begin{align}
    N_2^{\ast}(d_1; \hat{\rho},\hat{\tau}) = \argmax_{N_2 \in \{N_2^{\text{min}},\dotsc,N_2^{\text{max}} \}} \min \big\{ f(d_1,N_2, \hat{\rho}_L,\hat{\tau}_L),~f(d_1,N_2, \hat{\rho}_U,\hat{\tau}_L) \big\},
    \end{align}
    and by definition of $f$ the corresponding value for $d_2$ is
    \begin{align}
    d_2 = \min \big\{ g(d_1,N_1,N_2^{\ast}(d_1; \hat{\rho},\hat{\tau});\hat{\rho}_L),  g(d_1,N_1,N_2^{\ast}(d_1; \hat{\rho},\hat{\tau});\hat{\rho}_U) \big\}.
    \end{align}
\end{itemize}
Combining the above three cases, we arrive at the optimal decision rules \eqref{eq: inexact-stage2-N2} and \eqref{eq: inexact-stage2-d2} for fixed $d_1$.

\subsection{Extra analysis to \Cref{sec: inexact}} \label{app: proof-inexact-reformulation}
This analysis makes use of the lemmas in \Cref{app: auxiliary-lemmas}. Consider problem \eqref{eq: inexact}. For given $d_1$, the optimal stage-2 decision rules are given by \Cref{theorem: inexact-stage2}. As stated in \Cref{sec: inexact}, we split the uncertainty set $Z$ into three subsets. This enables us to exploit the fact that depending on $(\hat{\rho},\hat{\tau})$ the value $N_2^{\ast}(d_1; \hat{\rho},\hat{\tau})$ may be known in advance. The split \eqref{eq: def-Zi} is repeated here for convenience
\begin{subequations} \label{appeq: def-Zi}
\begin{align}
Z_{\text{ID}}^{\text{min}} & = \{(\hat{\rho},\hat{\tau}) \in Z :  \hat{\tau}_L \geq \sigma \hat{\rho}_U  \} \label{appeq: def-Z1}\\
Z_{\text{ID}}^{\text{int}} & = \{(\hat{\rho},\hat{\tau}) \in Z :  \sigma \hat{\rho}_L < \hat{\tau}_L < \sigma \hat{\rho}_U \} \label{appeq: def-Z2}\\
Z_{\text{ID}}^{\text{max}} & = \{(\hat{\rho},\hat{\tau}) \in Z : \hat{\tau}_L \leq \sigma \hat{\rho}_L  \}, \label{appeq: def-Z3}
\end{align}
\end{subequations}
so that $Z = Z_{\text{ID}}^{\text{min}} \cup Z_{\text{ID}}^{\text{int}} \cup Z_{\text{ID}}^{\text{max}}$. The associated sets of observation-realization pairs $(\rho,\tau,\hat{\rho},\hat{\tau})$ are given by
\begin{align}
U^i &= U \cap \{(\rho,\tau,\hat{\rho},\hat{\tau}): (\hat{\rho},\hat{\tau}) \in Z_{\text{ID}}^i \},~~i\in \{\text{min},\text{int},\text{max} \},
\end{align}
so it holds that $U = U^{\text{min}} \cup U^{\text{int}} \cup U^{\text{max}}$. Set $U^i$ can be interpreted as the set of observation-realization pairs for which the observation $(\hat{\rho},\hat{\tau})$ is in set $Z_{\text{ID}}^i$. \Cref{fig: Z-split} illustrates the subsets $Z_{\text{ID}}^i$. Set $U^{\text{min}}$ consists of those observation-realization pairs $(\rho,\tau,\hat{\rho},\hat{\tau})$ for which $N_2^{\ast}(d_1; \hat{\rho},\hat{\tau}) = N_2^{\text{max}}$. If $(\rho,\tau,\hat{\rho},\hat{\tau}) \in U^{\text{int}}$, then based on the observation $(\hat{\rho},\hat{\tau})$ it is unclear what fractionation is worst-case optimal. Last, if $(\rho,\tau,\hat{\rho},\hat{\tau}) \in U^{\text{max}}$ we know $N_2^{\ast}(d_1; \hat{\rho},\hat{\tau}) = N_2^{\text{min}}$. Problem \eqref{eq: inexact} is equivalent to
\begin{subequations}\label{appeq: inexact2}
\begin{align}
\max_{d_1,q}~&~ q \\
\text{s.t.}~&~ q \leq f(d_1, N_2^{\text{min}};\rho,\tau), ~~\forall (\rho,\tau,\hat{\rho},\hat{\tau}) \in U^{\text{min}} \label{appeq: inexact2-1} \\
            ~&~ q \leq f(d_1, N_2^{\ast}(d_1;\hat{\rho},\hat{\tau});\rho,\tau), ~~\forall (\rho,\tau,\hat{\rho},\hat{\tau}) \in U^{\text{int}} \label{appeq: inexact2-2} \\
            ~&~ q \leq f(d_1, N_2^{\text{max}};\rho,\tau), ~~\forall (\rho,\tau,\hat{\rho},\hat{\tau}) \in U^{\text{max}} \label{appeq: inexact2-3} \\
            ~&~ d^{\text{min}} \leq d_1 \leq d_1^{\text{max}}. \label{appeq: inexact2-4}
\end{align}
\end{subequations}
Similar to the exact case (\Cref{sec: exact}), in any worst-case realization it will hold that
$\tau = \tau_L$. Therefore, any observation with $\hat{\tau} - r^{\tau} > \tau_L$ cannot yield the worst-case realization.
Define
\begin{align}
U_L^i = U_i \cap \{(\rho,\tau,\hat{\rho},\hat{\tau}): \hat{\tau} - r^{\tau} \leq \tau_L \},~~i\in \{\text{min},\text{int},\text{max} \},
\end{align}
which is the subset of $U^i$ of observation-realization pairs for which $\tau_L$ is a possible realization of $\tau$. Constraints \eqref{appeq: inexact2-1}-\eqref{appeq: inexact2-4} can be replaced by 
\begin{subequations}
\begin{align}
q &\leq f(d_1, N_2^{\text{min}};\rho,\tau), ~~\forall (\rho,\tau,\hat{\rho},\hat{\tau}) \in U_L^{\text{min}} \label{appeq: inexactL-1} \\
            q &\leq f(d_1, N_2^{\ast}(d_1;\hat{\rho},\hat{\tau});\rho,\tau), ~~\forall (\rho,\tau,\hat{\rho},\hat{\tau}) \in U_L^{\text{int}} \label{appeq: inexactL-2} \\
q &\leq f(d_1, N_2^{\text{max}};\rho,\tau), ~~\forall (\rho,\tau,\hat{\rho},\hat{\tau}) \in U_L^{\text{max}} \label{appeq: inexactL-3}
\end{align}
\end{subequations}
For \eqref{appeq: inexactL-1} and \eqref{appeq: inexactL-3} it remains to find the worst-case realization of $\rho$ for which the observation-realization pair is in $U_L^{\text{min}}$ and $U_L^{\text{max}}$, respectively. According to \Cref{lemma: f-omega}, function $f$ is increasing or decreasing in $\rho$ for fixed $d_1$, so it is sufficient to check the maximum and minimum realization of $\rho$ for which the observation-realization pair is in those sets. These are
\begin{subequations}
\begin{align}
 &\min \{\rho : (\rho,\tau,\hat{\rho},\hat{\tau}) \in U_L^{\text{min}} \} = \rho_L,~~ \max \{\rho : (\rho,\tau,\hat{\rho},\hat{\tau}) \in U_L^{\text{min}} \} = \frac{\tau_L}{\sigma} \\
 &\min \{\rho : (\rho,\tau,\hat{\rho},\hat{\tau}) \in U_L^{\text{max}} \} = \frac{\tau_L}{\sigma},~~ \max \{\rho : (\rho,\tau,\hat{\rho},\hat{\tau}) \in U_L^{\text{max}} \} = \rho_U.
\end{align}
\end{subequations}
Plugging in $\rho = \frac{\tau_L}{\sigma}$ in \eqref{appeq: inexactL-1} and \eqref{appeq: inexactL-3} yields $q \leq K$, with $K$ defined by \eqref{eq: K}. \Cref{lemma: i=2-reform} provides a conservative approximation of constraint \eqref{appeq: inexactL-2}. Putting everything together, the optimum of the following problem is a lower bound to the optimum of \eqref{appeq: inexact2} (or, equivalently, \eqref{eq: inexact}):
\begin{subequations}\label{appeq: inexact3}
\begin{align}
\max_{d_1,q}~&~ q \\
\text{s.t.}~&~ q \leq f(d_1,N_2^{\text{min}};\rho_L,\tau_L) \label{appeq: inexact3-2} \\
            ~&~ q \leq f(d_1,N_2^{\text{max}};\rho_U,\tau_L) \label{appeq: inexact3-3} \\
            ~&~ q \leq K \label{appeq: inexact3-4} \\
~&~ q \leq p(d_1) \label{appeq: inexact3-5} \\
~&~ d_{\text{min}} \leq d_1 \leq d_1^{\text{max}}, \label{appeq: inexact3-6}
\end{align}
\end{subequations}
with $p(d_1)$ defined by \eqref{appeq: p} in \Cref{app: auxiliary-lemmas}. Constraint \eqref{appeq: inexact3-5} is the only conservative constraint, all other constraints are exact reformulations. In particular, this means that if for a solution the objective value equals $K$, it is certain that this is an optimal solution. It is easy to obtain other straightforward conservative approximations of \eqref{appeq: inexact2-2}. For instance, a policy that delivers $N_2^{\text{min}}$ or $N_2^{\text{max}}$ fractions (or any number in between, for that matter) for any observation $(\hat{\rho},\hat{\tau}) \in Z_{\text{ID}}^{\text{int}}$ is a conservative approximation. However, these perform less good and do not use all available information, as explained in the proof of \Cref{lemma: i=2-reform}.

\setcounter{equation}{0}
\setcounter{figure}{0}
\setcounter{table}{0}
\section{Extra lemmas} \label{app: auxiliary-lemmas}
\small
This appendix states and proves several frequently used properties of functions $g$ and $f$.
\begin{lemma}[Convexity/concavity $f$ w.r.t. $d_1$] \label{lemma: convexity-f}
Let $\rho>0$, $\tau>0$ and $N_1,N_2 \in \mathbb{N}_+$. Let $d_1 \in [0,g(0,0,N_1;\rho)]$. The following properties hold for function $f$:
\begin{itemize}
\item Function $f(d_1,N_2;\rho,\tau)$ is strictly convex in $d_1$ if $\tau  > \rho\sigma$, with unique minimizer $g(0,0,N_1+N_2;\rho)$;
\item Function $f(d_1,N_2;\rho,\tau)$ is strictly concave in $d_1$ if $\tau  < \rho\sigma$, with unique maximizer $g(0,0,N_1+N_2;\rho)$;
\item Function $f(d_1,N_2;\rho,\tau)$ is constant in $d_1$ if $\tau = \rho\sigma$, with value $\frac{1}{\sigma}B(0,0,\frac{\tau}{\sigma})$.
\end{itemize}
\end{lemma}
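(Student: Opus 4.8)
The plan is to reduce the entire trichotomy to a single structural fact: after substituting the binding OAR constraint into $f$, the $d_1$-dependence collapses onto a strictly concave univariate map multiplied by a scalar whose sign is governed exactly by the comparison of $\tau$ and $\rho\sigma$.

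First I would exploit that, by construction, $g(d_1,N_1,N_2;\rho)$ is the value of $d_2$ that makes \eqref{eq: exact-2} bind, i.e. $\sigma N_2 g + \rho\sigma^2 N_2 g^2 = B(d_1,N_1;\rho)$. Using this identity to eliminate the quadratic stage-2 term exactly as in the proof of \Cref{theorem: exact-stage2}, and then inserting $B(d_1,N_1;\rho) = \text{BED}_{\text{tol}}(\rho) - \sigma N_1 d_1 - \rho\sigma^2 N_1 d_1^2$, the stage-1 quadratic term $\tau N_1 d_1^2$ cancels and one is left with the clean reformulation
\[
f(d_1,N_2;\rho,\tau) = \frac{\rho\sigma-\tau}{\rho\sigma}\bigl(N_1 d_1 + N_2 g(d_1,N_1,N_2;\rho)\bigr) + \frac{\tau\,\text{BED}_{\text{tol}}(\rho)}{\rho\sigma^2}.
\]
The second term is constant in $d_1$, so the whole $d_1$-dependence resides in $h(d_1) := N_1 d_1 + N_2 g(d_1,N_1,N_2;\rho)$, scaled by the coefficient $c_1 := (\rho\sigma-\tau)/(\rho\sigma)$.

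Next I would show $h$ is strictly concave on the admissible interval. Writing $\Phi(d_1) := 1 + \tfrac{4\rho}{N_2}B(d_1,N_1;\rho)$ so that $g = (-1+\sqrt{\Phi})/(2\sigma\rho)$, a first differentiation gives $h'(d_1) = N_1\bigl(1 - (1+2\rho\sigma d_1)/\sqrt{\Phi}\bigr)$, and a second gives
\[
h''(d_1) = -\frac{2\rho\sigma N_1}{\Phi^{3/2}}\Bigl(\Phi + \frac{N_1(1+2\rho\sigma d_1)^2}{N_2}\Bigr) < 0,
\]
where $\Phi>0$ throughout, since $d_1 \le g(0,0,N_1;\rho)$ forces $B(d_1,N_1;\rho)\ge 0$. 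Hence $h$ is strictly concave. Combining this with the sign of $c_1$ settles the trichotomy immediately: $\tau<\rho\sigma$ yields $c_1>0$ and $f$ strictly concave; $\tau>\rho\sigma$ yields $c_1<0$ and $f$ strictly convex; $\tau=\rho\sigma$ yields $c_1=0$ and $f$ constant, with value obtained by substituting $\rho=\tau/\sigma$ into the constant term, which simplifies to $\tfrac{1}{\sigma}B(0,0,\tfrac{\tau}{\sigma})$.

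Finally, for the extremizer I would solve $h'(d_1)=0$, i.e. $\sqrt{\Phi}=1+2\rho\sigma d_1$; both sides being positive, squaring and clearing denominators collapses the identity to $\text{BED}_{\text{tol}}(\rho) = \sigma(N_1+N_2)d_1 + \rho\sigma^2(N_1+N_2)d_1^2$, which is precisely the defining equation of $g(0,0,N_1+N_2;\rho)$. Strict concavity of $h$ makes this critical point its unique maximizer, so it is the unique maximizer of $f$ when $c_1>0$ and the unique minimizer when $c_1<0$. The only step carrying genuine content is the second-derivative sign computation; it is routine but must be done carefully to confirm $h''<0$ uniformly on the interval (and to verify $\Phi>0$ there), after which everything else follows mechanically from the reformulation.
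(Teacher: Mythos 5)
Your proof is correct, and it is organized around a genuinely different decomposition than the one in the paper. The paper's proof (\Cref{app: auxiliary-lemmas}) works on $f$ directly: it computes $\partial f/\partial d_1$ with all the $\tau$-terms present, simplifies until the factor $(\tau-\rho\sigma)$ emerges, repeats this for the second derivative, and then needs a separate argument --- rewriting $f$ as a constrained maximum over $d_2$ --- to establish the constant case $\tau=\rho\sigma$ and its value $\tfrac{1}{\sigma}B(0,0;\tfrac{\tau}{\sigma})$. You instead front-load the algebra: substituting the binding OAR constraint $\sigma(N_1d_1+N_2g)+\rho\sigma^2(N_1d_1^2+N_2g^2)=\text{BED}_{\text{tol}}(\rho)$ (the same trick the paper uses for the stage-2 objective in the proof of \Cref{theorem: exact-stage2}, here extended to absorb the stage-1 terms as well) exhibits $f$ as an affine function of the total physical dose $h(d_1)=N_1d_1+N_2g(d_1,N_1,N_2;\rho)$ with coefficient $(\rho\sigma-\tau)/(\rho\sigma)$. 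This buys a structural explanation of the trichotomy: the sign of $\tau-\rho\sigma$ is visible before any differentiation, the constant case and its value follow immediately from the vanishing coefficient (substituting $\rho=\tau/\sigma$ into the constant term), and only the single $\tau$-free function $h$ needs calculus. Your computations check out: $h'(d_1)=N_1\bigl(1-(1+2\rho\sigma d_1)/\sqrt{\Phi}\bigr)$ and your expression for $h''$ agree with the paper's \eqref{appeq: g-partial-d1}, the positivity of $\Phi$ on $[0,g(0,0,N_1;\rho)]$ follows from $B(d_1,N_1;\rho)\geq 0$ there, and squaring $\sqrt{\Phi}=1+2\rho\sigma d_1$ (both sides positive) does collapse to the defining equation of $g(0,0,N_1+N_2;\rho)$. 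The only step both you and the paper leave implicit is that this critical point actually lies in the stated interval, which holds because $g$ is decreasing in its third argument, so $g(0,0,N_1+N_2;\rho)\leq g(0,0,N_1;\rho)$; it is worth one sentence, since otherwise ``unique minimizer/maximizer'' on the interval is formally stronger than what the critical-point computation alone gives.
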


\begin{proof}
The partial derivative of $f$ w.r.t. $d_1$ is given by \small
\begin{align} \label{appeq: f-partial-d1}
\hspace*{-0.2cm}
\frac{\partial f(d_1,N_2;\rho,\tau)}{\partial d_1} =  N_1 + N_2\frac{\partial g(d_1,N_1,N_2;\rho)}{\partial d_1} + \tau\Big(2N_1d_1 + 2N_2g(d_1,N_1,N_2;\rho)\frac{\partial g(d_1,N_1,N_2;\rho)}{\partial d_1}\Big),
\end{align} 
where the partial derivative of $g$ w.r.t. $d_1$ is given by
\begin{align} \label{appeq: g-partial-d1}
\frac{\partial g(d_1,N_1,N_2;\rho)}{\partial d_1} = -\frac{1}{N_2}(N_1+2N_1d_1\sigma\rho) \Big(1+\frac{4 \rho}{N_2} B(d_1,N_1;\rho)\Big)^{-\frac{1}{2}}.
\end{align}
Define $h(d_1,N_2;\rho) = 1 + 4 \frac{\rho}{N_2}B(d_1,N_1;\rho)$. Then, plugging \eqref{appeq: g-partial-d1} in \eqref{appeq: f-partial-d1}, we obtain
\begin{align*}
\frac{\partial f(d_1,N_2;\rho,\tau)}{\partial d_1} = ~&~ (N_1-(N_1+2N_1d_1\sigma\rho) h(d_1,N_2;\rho)^{-\frac{1}{2}} \\
~&~ + \tau \Big(2N_1d_1 - \frac{2}{N_2}(N_1+2N_1d_1\sigma\rho) h(d_1,N_2;\rho)^{-\frac{1}{2}} N_2\frac{-1 + h(d_1,N_2;\rho)^{\frac{1}{2}}}{2\sigma\rho } \Big)\\[0.5em]
= ~&~ \frac{ N_1}{\sigma\rho} \big(h(d_1,N_2;\rho)^{-\frac{1}{2}}(2\sigma\rho d_1+1)-1\big)(\tau -\rho\sigma).
\end{align*}
Further elementary math shows that $h(d_1,N_2;\rho)^{-\frac{1}{2}}(2\sigma\rho d_1+1) -1 = 0$ if and only if $d_1=g(0,0,N_1+N_2;\rho)$. For the second derivative of $f$ w.r.t. $d_1$ we obtain:
\begin{align*}
\frac{\partial^2 f(d_1,N_2;\rho,\tau)}{\partial d_1^2} ~&~= \Big(\frac{\tau -\rho\sigma}{\sigma \rho}N_1 \Big) \frac{\partial}{\partial d_1} h(d_1,N_2;\rho)^{-\frac{1}{2}}(2\sigma \rho d_1+1) \\
=~&~\Big(\frac{\tau -\rho\sigma}{\sigma \rho} N_1 \Big) \Bigg[h(d_1,N_2;\rho)^{-\frac{1}{2}}2\sigma\rho + \frac{2\rho}{N_2} (2\sigma \rho d_1 + 1) h(d_1,N_2,\rho)^{-\frac{3}{2}}(\sigma N_1 + 2\rho\sigma^2d_1N_1)   \Bigg],
\end{align*}
and the second part of this product is positive. Hence, its sign depends only on the term $\tau - \rho\sigma$. Combining the result for the first and second derivative, we obtain
\begin{itemize}
\item Function $f(d_1,N_2;\rho,\tau)$ is strictly convex in $d_1$ if $\tau > \rho\sigma$, with unique minimizer $g(0,0,N_1+N_2;\rho)$;
\item Function $f(d_1,N_2;\rho,\tau)$ is strictly concave in $d_1$ if $\tau < \rho\sigma$, with unique maximizer $g(0,0,N_1+N_2;\rho)$;
\item Function $f(d_1,N_2;\rho,\tau)$ is constant in $d_1$ otherwise.
\end{itemize}
If $\tau = \rho \sigma$, we can rewrite $f(d_1,N_2; \frac{\tau}{\sigma},\tau)$ to
\begin{align*}
f(d_1,N_2;\frac{\tau}{\sigma},\tau) = ~& \max_{d_2} \Big\{ d_1N_1 + d_2N_2 + \tau(d_1^2N_1 + d_2^2 N_2)~|~
\sigma (d_1N_1 + d_2N_2) + \rho \sigma^2(d_1^2N_1 + d_2^2 N_2) \leq B(0,0;\rho) \Big\}\\[0.5em]
=~& \max_{d_2} \Big\{d_1N_1 + d_2N_2 + \tau(d_1^2N_1 + d_2^2 N_2) ~|~
\sigma (d_1N_1 + d_2N_2) + \rho \sigma^2(d_1^2N_1 + d_2^2 N_2) = B(0,0;\rho) \Big\} \\[0.5em]
		= ~& \max_{d_2}\Big\{  d_1N_1 + d_2N_2 + \tau(d_1^2N_1 + d_2^2 N_2) ~|~
d_1N_1 + d_2N_2 + \tau(d_1^2N_1 + d_2^2 N_2) = \frac{1}{\sigma}B(0,0,\frac{\tau}{\sigma}) \Big\}\\[0.5em]		
					=~&~ \frac{1}{\sigma}B(0,0,\frac{\tau}{\sigma}).
\end{align*}

\end{proof}

Define
\begin{subequations}\label{appeq: d1-roots}
\begin{align}
d_1^{-}(N_2) =
\begin{cases}
\displaystyle \frac{\varphi D - \varphi D \Big(1 + (N_1+N_2) \frac{N_2-T}{N_1 T} \Big)^{\frac{1}{2}}}{\sigma(N_1+N_2)}  & \text{ if } N_1 + N_2 \geq T \land N_1 \leq T \\
 - \infty  & \text{ otherwise }
\end{cases}\label{appeq: d1-rootA}\\
d_1^{+}(N_2) =
\begin{cases}
\displaystyle  \frac{\varphi D + \varphi D \Big(1 + (N_1+N_2) \frac{N_2-T}{N_1 T} \Big)^{\frac{1}{2}}}{\sigma(N_1+N_2)} & \text{ if } N_1 + N_2 \geq T \land N_2 \leq T \\
+ \infty & \text{ otherwise. }
\end{cases} \label{appeq: d1-rootB}
\end{align}
\end{subequations}
If two functions $f$ with equal $N_2$ but different $\rho$ intersect, $d_1$ takes value $d_1^{-}(N_2)$ or $d_1^{+}(N_2)$. The following lemma provides information on the existence and location of these intersection points. We consider only those values for $d_1$ where function $f(d_1,N_2; \rho,\tau)$ is finite for all $(\rho,\tau)\in Z$. Let
\begin{align}
d_{\text{UB}} = \min_{(\rho,\tau) \in Z} g(0,0,N_1;\rho).
\end{align}

\begin{lemma}[Properties $d_1^{-}$ and $d_1^{+}$] \label{lemma: intersection}
Let $N_1, T \in \mathbb{N}_+$.
\begin{lemlist}
\item Let $N_2 \in \mathbb{N}_+$. If $\rho_1 \neq \rho_2$, the equation
\begin{align} \label{appeq: f-equality}
f(d_1,N_2; \rho_1,\tau) = f(d_1,N_2; \rho_2,\tau),
\end{align}
has the following real roots for $d_1$ on the interval $[0, d_{\text{UB}} ]$:
\begin{subequations} \label{appeq: summ-roots}
\begin{align}
&\bullet \qquad d_1^{-}(N_2) \text{ and } d_1^{+}(N_2) &\text{ if } N_1 + N_2 \geq T,~N_2 \leq T \text{ and } N_1 \leq T \label{appeq: summ-roots1}\\
&\bullet \qquad d_1^{-}(N_2) &\text{ if } N_1 + N_2 \geq T,~ N_2 \leq T \text{ and } N_1 > T \label{appeq: summ-roots2}\\
&\bullet \qquad d_1^{+}(N_2) &\text{ if } N_1 + N_2 \geq T ,~ N_2 > T \text{ and } N_1 \leq T \label{appeq: summ-roots3}\\
&\bullet \qquad \text{no roots on interval} & \text{ if } N_1 + N_2 \geq T ,~~ N_2 > T \text{ and } N_1 > T\\
&\bullet \qquad \text{no real roots } & \text{ otherwise}.\label{appeq: summ-roots4}
\end{align}
\end{subequations} \label{lemma: intersection1}
\item Let $N_2^A,N_2^B \in \mathbb{N}_+$ such that $N_2^A < N_2^B$. It holds that
\begin{enumerate}[label=(\emph{\roman*})]
\item If $d_1^{-}(N_2^A)$ and $d_1^{-}(N_2^B)$ are both finite, then $d_1^{-}(N_2^A) > d_1^{-}(N_2^B)$;
\item If $d_1^{+}(N_2^A)$ and $d_1^{+}(N_2^B)$ are both finite, then $d_1^{+}(N_2^A) \leq d_1^{+}(N_2^B)$.
\end{enumerate} \label{lemma: intersection2}
\end{lemlist}
\end{lemma}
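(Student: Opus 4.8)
The plan is to prove part (a) by reducing the equation \eqref{appeq: f-equality} to an equation for the stage-2 dose $g$, and to prove part (b) by differentiating the closed forms \eqref{appeq: d1-roots}. For part (a), fix $N_2$ and $\tau$ and write $g_i:=g(d_1,N_1,N_2;\rho_i)$; on $[0,d_{\text{UB}}]$ both $g_i$ are nonnegative, which is exactly what $d_{\text{UB}}$ guarantees. I would begin from the factorization
\[ f(d_1,N_2;\rho_1,\tau)-f(d_1,N_2;\rho_2,\tau)=N_2(g_1-g_2)\bigl(1+\tau(g_1+g_2)\bigr). \]
Since $g_1,g_2\ge0$ and $\tau>0$, the second factor is strictly positive, so \eqref{appeq: f-equality} holds if and only if $g_1=g_2$.

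Setting $y:=g_1=g_2$ and exploiting that $B(d_1,N_1;\rho)$ is \emph{affine} in $\rho$, I would subtract the two defining relations $\rho_i\sigma^2N_2y^2+\sigma N_2y=B(d_1,N_1;\rho_i)$ and divide by $\rho_1-\rho_2\neq0$. The linear-in-$y$ term cancels, leaving $\sigma^2(N_1d_1^2+N_2y^2)=(\varphi D)^2/T$; back-substitution then gives $\sigma(N_1d_1+N_2y)=\varphi D$. Hence $g_1=g_2$ is equivalent to the two conditions ``total dose $=\varphi D/\sigma$'' and ``sum of squared doses $=(\varphi D)^2/(\sigma^2T)$'', together with $y\ge0$; conceptually these pin down the exceptional $d_1$ at which $g(d_1,N_1,N_2;\cdot)$ is \emph{constant} in $\rho$, the only way the monotone function of \Cref{lemma: g-omega} can agree at $\rho_1\neq\rho_2$. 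Eliminating $y$ yields the quadratic $N_1(N_1+N_2)d_1^2-2\tfrac{\varphi D}{\sigma}N_1d_1+\tfrac{(\varphi D)^2}{\sigma^2}\tfrac{T-N_2}{T}=0$, whose roots are exactly $d_1^{-}(N_2)$ and $d_1^{+}(N_2)$ of \eqref{appeq: d1-roots}. Its discriminant is a positive multiple of $N_2(N_1+N_2-T)$, so real roots exist iff $N_1+N_2\ge T$, which is the last bullet \eqref{appeq: summ-roots4}.

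It then remains to decide which real roots lie in $[0,d_{\text{UB}}]$. The clean observation is that, at such a root, $y\ge0$ is equivalent to $d_1\le d_{\text{UB}}$: if $y=g(d_1,N_1,N_2;\rho)\ge0$ for every $\rho$, then $d_1\le g(0,0,N_1;\rho)$ for every $\rho$, hence $d_1\le d_{\text{UB}}$, and conversely. A short computation gives $d_1^{-}\ge0\iff N_2\le T$ (with $y(d_1^{-})>0$ automatic) and $y(d_1^{+})\ge0\iff N_1\le T$ (with $d_1^{+}>0$ automatic). Combining these with $N_1+N_2\ge T$ sorts the four remaining bullets of \eqref{appeq: summ-roots}: both roots when $N_1,N_2\le T$; only $d_1^{-}$ when $N_2\le T<N_1$; only $d_1^{+}$ when $N_1\le T<N_2$; and none when $N_1,N_2>T$.

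For part (b) I would treat $N_2$ as continuous in \eqref{appeq: d1-rootA}--\eqref{appeq: d1-rootB}, write $c=N_1$, $n=N_2$, $m=c+n-T\ge0$ and $P=\tfrac{n(c+n-T)}{cT}$, and differentiate. After clearing the common positive factor $2cT\sqrt P$, the numerator of $\partial d_1^{+}/\partial n$ collapses to $cm+nT-2\sqrt{(cm)(nT)}=(\sqrt{cm}-\sqrt{nT})^2\ge0$, the AM-GM identity that makes (ii) work (equality only when $cm=nT$); the same reduction for $d_1^{-}$ gives $-(cm+nT)-2\sqrt{(cm)(nT)}<0$, yielding the strict decrease in (i). Since both expressions are smooth wherever finite, monotonicity of these extensions gives the stated integer comparisons. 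I expect the main obstacle to be the characterization in part (a): recognizing that the affineness of $B$ in $\rho$ collapses $f(\cdot;\rho_1)=f(\cdot;\rho_2)$ to the two dose conditions, and that $y\ge0$ is exactly the upper-bound feasibility $d_1\le d_{\text{UB}}$. Once that is in place, the discriminant test, the interval bookkeeping, and the AM-GM step of (b) are all routine.
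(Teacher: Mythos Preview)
Your proposal is correct and follows essentially the same route as the paper. For part (a), both reduce $f(\cdot;\rho_1)=f(\cdot;\rho_2)$ to $g_1=g_2$, then extract from the two BED identities the linear relation $\sigma(N_1d_1+N_2y)=\varphi D$ and the resulting quadratic in $d_1$; the paper gets the linear relation by subtracting $\rho_2/\rho_1$ times one BED equation from the other and then substitutes into $g$, while you obtain both the linear and the squared-dose condition in one stroke by exploiting that $B$ is affine in $\rho$. Your upper-bound check is slightly slicker: observing that at a root $g(d_1,\cdot;\rho)$ is \emph{constant} in $\rho$ lets you read off $y\ge0\Rightarrow d_1\le d_{\text{UB}}$ directly, whereas the paper bounds $d_1^{+}$ by $\lim_{\rho\to\infty}g(0,0,N_1;\rho)=\varphi D/(\sigma\sqrt{N_1T})$. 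For part (b), the paper handles (i) by noting the numerator decreases while the denominator increases, and handles (ii) by a direct derivative computation that reduces to $(N_1-T)^2(N_1+N_2)^2\ge0$; your unified derivative approach lands on the equivalent but cleaner AM--GM form $(\sqrt{cm}-\sqrt{nT})^2\ge0$ (indeed $cm-nT=(N_1+N_2)(N_1-T)$). These are cosmetic differences; the arguments are the same in substance.
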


\begin{proof}
Both parts of the lemma are proved individually.\\
\noindent \emph{Proof \Cref{lemma: intersection1}}\\
By definition of $f$, the equation $f(d_1,N_2;\rho_1,\tau) = f(d_1,N_2;\rho_2,\tau)$ reduces to $g(d_1,N_1,N_2;\rho_1) = g(d_1,N_1,N_2;\rho_2)$ with $d_1 \in [0, \min\{g(0,0,N_1;\rho_1),g(0,0,N_1;\rho_2)\}]$. By construction of $g$, this means we are interested in the pairs $(d_1,d_2)$ that solve the system
\begin{subequations} \label{appeq: BED-cond}
\begin{align}
& \sigma (N_1d_1 + N_2 d_2) + \rho_1 \sigma^2(N_1d_1^2 + N_2 d_2^2) = \varphi D (1 + \rho_1 \frac{D}{T}\varphi) \label{appeq: BED-cond-1}\\
& \sigma (N_1d_1 + N_2 d_2) + \rho_2 \sigma^2(N_1d_1^2 + N_2 d_2^2) = \varphi D (1 + \rho_2 \frac{D}{T}\varphi)\label{appeq: BED-cond-2}\\
& d_1\geq 0,~d_2 \geq 0.
\end{align}
\end{subequations}
We subtract $\frac{\rho_2}{\rho_1}$ times \eqref{appeq: BED-cond-1} from \eqref{appeq: BED-cond-2} and solve for $d_1$ to obtain
\begin{align} \label{appeq: cond-d1}
d_1 = \frac{\varphi D - \sigma N_2 d_2}{\sigma N_1}.
\end{align}
We know that $d_2 = g(d_1,N_1,N_2;\rho_1)$. Plugging in \eqref{appeq: cond-d1} in this expression and simplifying yields the following roots for $d_2$:
\begin{subequations}\label{appeq: r2-roots}
\begin{align}
r_2^{-}(N_2) = \frac{\varphi D + \varphi D \Big(1 + (N_1+N_2) \big(\frac{N_1}{T} -1\big)/N_2 \Big)^{\frac{1}{2}}}{\sigma(N_1+N_2)} \\
r_2^{+}(N_2) = \frac{\varphi D - \varphi D \Big(1 + (N_1+N_2) \big(\frac{N_1}{T} -1\big)/N_2 \Big)^{\frac{1}{2}}}{\sigma(N_1+N_2)}.
\end{align}
\end{subequations}
Plugging \eqref{appeq: r2-roots} in \eqref{appeq: cond-d1} and simplifying yields the following roots for $d_1$:
\begin{subequations}\label{appeq: r1-roots}
\begin{align}
r_1^{-}(N_2) = \frac{\varphi D - \varphi D \Big(1 + (N_1+N_2) \frac{N_2-T}{N_1 T} \Big)^{\frac{1}{2}}}{\sigma(N_1+N_2)} \label{appeq: r1-rootA}\\
r_1^{+}(N_2) = \frac{\varphi D + \varphi D \Big(1 + (N_1+N_2) \frac{N_2-T}{N_1 T} \Big)^{\frac{1}{2}}}{\sigma(N_1+N_2)}. \label{appeq: r1-rootB}
\end{align}
\end{subequations}
These roots need not be real-valued, nor in the interval $[0, \min\{g(0,0,N_1;\rho_1),g(0,0,N_1;\rho_2)\}]$. For both $r_1^{-}(N_2)$ and $r_1^{+}(N_2)$ to be real-valued, we require that
\begin{align*}
&1 + (N_1+N_2) \big(\frac{N_1}{T} -1\big)/N_2 \geq 0,
\end{align*}
which reduces to $N_1 + N_2 \geq T$. Furthermore, for nonnegativity of $r_1^{-}(N_2)$ and $r_1^{+}(N_2)$ it suffices to check nonnegativity of the former. This is equivalent to
\begin{align*}
\varphi D - \sigma N_2 d_2^{-} \geq 0,
\end{align*}
which reduces to $N_2\leq T$. Moreover, it needs to hold that $r_1^{+}(N_2) \leq g(0,0,N_1;\rho_1)$ and $r_1^{+}(N_2) \leq g(0,0,N_1;\rho_2)$. This is equivalent to $r_2^{+}(N_2) \geq 0$, which can be rewritten to
\begin{align*}
1 + (N_1+N_2) \big(\frac{N_1}{T} -1\big)/N_2 \leq 1,
\end{align*}
and this reduces to $N_1\leq T$. Parameters $d_1^{-}(N_2)$ resp. $d_1^{+}(N_2)$ (see \eqref{appeq: d1-roots}) take the values of $r_1^{-}(N_2)$ resp. $r_1^{-}(N_2)$ if they are a root of \eqref{appeq: f-equality}, and $-\infty$ resp. $+\infty$ otherwise. All together, we obtain the cases in \eqref{appeq: summ-roots}.

It remains to show that the obtained roots are in the interval $[0, d_{\text{UB}}]$. It is already shown that, if they are (real-valued) roots to \eqref{appeq: f-equality}, then $d_1^{-}(N_2), d_1^{+}(N_2)$ are nonnegative. Furthermore, in that case $d_1^{-}(N_2) \leq d_1^{+}(N_2)$. It holds that
\begin{align*}
\frac{\partial g(0,0,N_1;\rho)}{\partial \rho} \leq 0 \Leftrightarrow N_1 \leq T.
\end{align*}
Hence, if $d_1^{+}(N_2)$ is a real-valued root to \eqref{appeq: f-equality} it follows that
\begin{align*}
d_{\text{UB}} = \min_{(\rho,\tau) \in Z} g(0,0,N_1;\rho) \geq \lim_{\rho \rightarrow +\infty} g(0,0,N_1;\rho) = \frac{\varphi D}{\sigma \sqrt{N_1 T}} \geq d_1^{+}(N_2),
\end{align*}
where the second equality follows from the definition of $g$. This implies that indeed $d_1^{-}(N_2), d_1^{+}(N_2) \in [0,d_{\text{UB}}]$.\\

\noindent \emph{Proof \Cref{lemma: intersection2}}\\
Assume $N_2^A,N_2^B \in \mathbb{N}_+$ such that $N_2^A \leq N_2^B$, and assume $N_1 + N_2^A \geq T$. Statements $(i)$ and $(ii)$ are proved individually.\\

\noindent \emph{Proof part (i)}\\
Assume $d_1^{-}(N_2^A)$ and $d_1^{-}(N_2^B)$ are both finite. The denominator of $d_1^{-}(N_2)$ (see \eqref{appeq: d1-rootA}) is increasing in $N_2$. The derivative (w.r.t. $N_2$) of the part within the square root in the numerator of \eqref{appeq: d1-rootA} is given by $(N_1T)^{-1}(N_1+2N_2 - T) \geq 0$, because $N_1 +N_2 \geq T$. Hence, the numerator is decreasing in $N_2$, while the denominator is increasing in $N_2$. This implies $d_1^{-}(N_2^A) > d_1^{-}(N_2^B)$.\\

\noindent \emph{Proof part (ii)}\\
Assume $d_1^{+}(N_2^A)$ and $d_1^{+}(N_2^B)$ are both finite. One can show that
\begin{align}
\frac{\partial d_1^{+}(N_2)}{\partial N_2} = \varphi D \frac{(N_1+N_2)(N_1+2N_2-T) - 2N_1T \sqrt{\frac{N_2(N_1+N_2-T)}{N_1T}} - 2N_2(N_1+N_2-T)}{2N_1T \sigma (N_1+N_2)^2 \sqrt{\frac{N_2(N_1+N_2-T)}{N_1T}}}.
\end{align}
This implies
\begin{align}
& \frac{\partial d_1^{+}(N_2)}{\partial N_2} \geq 0 \nonumber\\
\Leftrightarrow~ & (N_1+N_2)(N_1+2N_2-T) - 2N_1T \sqrt{\frac{N_2(N_1+N_2-T)}{N_1T}} - 2N_2(N_1+N_2-T) \geq 0 \nonumber\\
\Leftrightarrow~ & \frac{N_1(N_1+N_2-T) + TN_2}{2N_1T} \geq \sqrt{\frac{N_2(N_1+N_2-T)}{N_1T}} \nonumber \\
\Leftrightarrow~ & \frac{\big(N_1(N_1+N_2-T) + TN_2 \big)^2}{4N_1^2T^2} \geq \frac{N_2(N_1+N_2-T)}{N_1T} \nonumber \\
\Leftrightarrow~ & \frac{(N_1-T)^2(N_1+N_2)^2}{4N_1^2T^2} \geq 0,\label{appeq: d1B-deriv-eta}
\end{align}
where the fourth line is obtained by using the fact that $N_1 + N_2 \geq T$, and squaring on both sides. The last line follows from simple algebraic manipulations. Condition \eqref{appeq: d1B-deriv-eta} clearly holds, so $d_1^{+}(N_2^A) \leq d_1^{+}(N_2^B)$.
\end{proof}

\begin{lemma}[Derivative $f$ and $g$ w.r.t. $\rho$] \label{lemma: fg-omega}
Let $(\rho,\tau) \in Z$.
\begin{lemlist}
\item Let $N',N'' \in \mathbb{N}_+$. Let $d' \in [0,d_{\text{UB}}]$. If $N' + N'' < T$, then
\begin{align}
\frac{\partial g(d',N',N'';\rho)}{\partial \rho} < 0~~\text{ for all } d' \in [0,d_{\text{UB}}].
\end{align}
If $N' + N'' \geq T$, then
\begin{align} \label{eq: g-omega}
\frac{\partial g(d',N',N'';\rho)}{\partial \rho}
\begin{cases}
< 0 & \text{ if }~ d' \in [0,d_1^{-}(N'')) \cup (d_1^{+}(N''), d_{\text{UB}}]\\
= 0 & \text{ if }~ d' \in [0,d_{\text{UB}}] \cap \{d_1^{-}(N''), d_1^{+}(N'')\} \\
> 0 & \text{ if }~ d' \in [0,d_{\text{UB}}] \cap (d_1^{-}(N''), d_1^{+}(N'')). \\
\end{cases}
\end{align}\label{lemma: g-omega}

\item Let $N_1,N_2 \in \mathbb{N}_+$. Let $d_1 \in [0,d_{\text{UB}}]$. If $N_1 + N_2 < T$, then
\begin{align}
\frac{\partial f(d_1,N_2;\rho,\tau)}{\partial \rho} < 0~~\text{ for all } d_1 \in [0,d_{\text{UB}}].
\end{align}
If $N_1 + N_2 \geq T$, then
\begin{align}
\frac{\partial f(d_1,N_2;\rho,\tau)}{\partial \rho}
\begin{cases}
< 0 & \text{ if }~ d_1 \in [0,d_1^{-}(N_2)) \cup (d_1^{+}(N_2), d_{\text{UB}}]\\
= 0 & \text{ if }~ d_1 \in [0,d_{\text{UB}}] \cap \{d_1^{-}(N_2), d_1^{+}(N_2)\} \\
> 0 & \text{ if }~ d_1 \in [0,d_{\text{UB}}] \cap (d_1^{-}(N_2), d_1^{+}(N_2)). \\
\end{cases}
\end{align} \label{lemma: f-omega}
\end{lemlist}
\end{lemma}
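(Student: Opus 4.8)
The plan is to prove part (a) by a direct sign analysis of $\partial g/\partial\rho$, and then to obtain part (b) almost for free, since $f$ depends on $\rho$ only through $g$.

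First I would differentiate the \emph{defining relation} of $g$ implicitly, rather than differentiating the square-root formula \eqref{appeq: g} directly. Writing $B(d',N';\rho)=B_0+B_1\rho$ with $B_0=\varphi D-\sigma d' N'$ and $B_1=\tfrac{(\varphi D)^2}{T}-\sigma^2 (d')^2 N'$, the quantity $g=g(d',N',N'';\rho)$ is the nonnegative root of
\begin{equation*}
\sigma N'' g + \rho\sigma^2 N'' g^2 = B(d',N';\rho).
\end{equation*}
Differentiating in $\rho$ and solving gives
\begin{equation*}
\frac{\partial g}{\partial\rho}\,\sigma N''\bigl(1 + 2\rho\sigma g\bigr) = B_1 - \sigma^2 N'' g^2 ,
\end{equation*}
and since the bracket is strictly positive, $\operatorname{sign}(\partial g/\partial\rho)=\operatorname{sign}(\psi)$, where $\psi:=\tfrac{(\varphi D)^2}{T}-\sigma^2\bigl(N'(d')^2+N'' g^2\bigr)$. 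Thus the whole statement reduces to locating the zeros of $\psi$ as $d'$ varies and fixing its sign between them.

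The crux is to show those zeros are exactly $d_1^-(N'')$ and $d_1^+(N'')$ and are independent of $\rho$. I would argue that $\psi=0$ together with the defining relation forces the $\rho$-free system $\sigma(N' d'+N'' g)=\varphi D$ and $\sigma^2(N'(d')^2+N'' g^2)=(\varphi D)^2/T$; eliminating $g$ yields a quadratic in $d'$ whose two roots are precisely $d_1^{\pm}(N'')$. This is the same computation already carried out in the proof of \Cref{lemma: intersection}, where the intersection condition of two BED curves was shown to be $\sigma^2(N'(d')^2+N'' g^2)=(\varphi D)^2/T$. Hence the zero set of $\psi$ is $\{d_1^-(N''),d_1^+(N'')\}\cap[0,d_{\text{UB}}]$, and \Cref{lemma: intersection1} tells me exactly which of these are finite in each $(N',N'',T)$-regime, recovering the case split in \eqref{eq: g-omega}.

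It then remains to pin the sign of $\psi$ on each subinterval. Since $\psi$ is continuous in $d'$ and vanishes only at $d_1^{\pm}(N'')$, its sign is constant on each piece and flips across each simple root, so I only need the sign at one reference point. Evaluating at $d'=0$, I would compare $g(0,N',N'';\rho)$ with the threshold $\varphi D/(\sigma\sqrt{N''T})$: substituting this threshold into the (monotone in $g$) defining relation shows $g(0,N',N'';\rho)>\varphi D/(\sigma\sqrt{N''T})$ precisely when $N''<T$, so $\psi(0)<0$ when $N''<T$ and $\psi(0)>0$ when $N''>T$. Propagating this sign outward through the known roots reproduces each line of \eqref{eq: g-omega}, including the case $N'+N''<T$ (no root, and $N''<T$ forces $\psi<0$ throughout). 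Finally, for part (b) I note that on the region where $f$ is finite $\frac{\partial f}{\partial\rho}=N_2(1+2\tau g)\frac{\partial g}{\partial\rho}$, and $N_2(1+2\tau g)>0$, so $\operatorname{sign}(\partial f/\partial\rho)=\operatorname{sign}(\partial g/\partial\rho)$; \Cref{lemma: f-omega} is then immediate from \Cref{lemma: g-omega} with $(N',N'',d')=(N_1,N_2,d_1)$. I expect the identification of the zeros of $\psi$ with $d_1^{\pm}(N'')$—and the care needed at the degenerate boundaries ($N''=T$ or $N'+N''=T$, where the two roots merge into a double root at which $\psi$ only touches zero)—to be the main obstacle; the remaining sign bookkeeping is routine.
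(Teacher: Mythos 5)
Your reduction is correct and takes a genuinely somewhat different (and cleaner) route than the paper's. The paper differentiates the closed form \eqref{appeq: g} and characterizes $\partial g/\partial\rho \geq 0$ via the inequality \eqref{appeq: g-inequality}, which it can only square after splitting into the cases $\varphi D \geq d'N'\sigma$ and $\varphi D < d'N'\sigma$; your implicit differentiation of the defining relation gives $\operatorname{sign}(\partial g/\partial \rho)=\operatorname{sign}(\psi)$ with $\psi = (\varphi D)^2/T - \sigma^2\bigl(N'(d')^2+N''g^2\bigr)$ in one step, with no case split. Your identification of the zero set is also right: $\psi=0$ together with the defining relation forces $\sigma(N'd'+N''g)=\varphi D$ and $\sigma^2(N'(d')^2+N''g^2)=(\varphi D)^2/T$, which is exactly the $\rho$-free system appearing in the proof of \Cref{lemma: intersection} (see \eqref{appeq: BED-cond} and \eqref{appeq: cond-d1}), whose elimination yields the quadratic with roots $d_1^{\pm}(N'')$; and your part (b) is identical to the paper's, via $\partial f/\partial\rho = N_2(1+2\tau g)\,\partial g/\partial\rho$ with positive prefactor.

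The gap is in the sign bookkeeping. You assert that $\psi$ ``flips across each simple root,'' but $\psi$ is not a polynomial: knowing that its zero set coincides with the root set of a concave parabola says nothing, by itself, about whether $\psi$ crosses or merely touches zero at those points, so a single reference value $\psi(0)$ plus ``propagation'' does not determine the sign on the remaining subintervals. The paper does not have this problem because its squaring argument establishes the stronger statement that $\partial g/\partial\rho \geq 0$ is \emph{equivalent} to the concave parabola \eqref{appeq: condition-g-w} being nonnegative --- sign sets, not just zero sets, coincide --- and a concave parabola is positive exactly strictly between its roots. Within your framework the cleanest repair is: the defining relation gives $\rho\psi = \sigma(N'd'+N''g)-\varphi D$, and the map $d'\mapsto \sigma\bigl(N'd'+N''g(d')\bigr)$ is concave, since $g(\cdot)$ is the upper boundary of the convex set $\{(x,y)\geq 0 : \sigma(N'x+N''y)+\rho\sigma^2(N'x^2+N''y^2)\leq \text{BED}_{\text{tol}}(\rho)\}$ and hence concave; therefore $\{\psi>0\}$ is exactly the open interval between the two zeros and $\psi<0$ outside their closure, which is the claimed pattern, including the tangency case $N'+N''=T$. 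Finally, a small slip: at $N''=T$ the two roots do not merge; rather $d_1^{-}(N'')=0$ hits the boundary of the domain (so $\psi(0)=0$), and the genuine double root occurs only when $N'+N''=T$.
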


\begin{proof}
We first prove \Cref{lemma: g-omega}, after that the result of \Cref{lemma: f-omega} is easily obtained.

It holds that
\begin{align}
\frac{\partial g(d',N',N'';\rho)}{\partial \rho} =  \frac{\sqrt{1+\frac{4\rho}{N''} B(d',N';\rho)}  - 1 +\frac{2\rho}{N''} (d'N'\sigma- \varphi D )}{2\sigma \rho^2 \sqrt{1 + \frac{4\rho}{N''} B(d',N';\rho)}},
\end{align}
so
\begin{align} \label{appeq: g-inequality}
\frac{\partial g(d',N',N'';\rho)}{\partial \rho} \geq 0 \Leftrightarrow \sqrt{(N'')^2+4\rho N'' B(d',N';\rho)} \geq N'' + 2\rho(\varphi D - d'N' \sigma).
\end{align}
We distinguish 2 cases:
\begin{itemize}
\item $\varphi D \geq d'N'\sigma$. In this case, squaring \eqref{appeq: g-inequality} on both sides and simplifying results in
\begin{align} \label{appeq: condition-g-w}
- \sigma^2 N' (N' + N'')d'^2 + 2\varphi D N' \sigma d' + \big(\frac{N''}{T} - 1\big)\varphi^2 D^2  \geq 0,
\end{align}
which is a condition independent of $\rho$. If $N' + N'' < T$, this inequality has no roots for $d'$, and \eqref{appeq: condition-g-w} holds for all $d' \in [0, \frac{\varphi D}{N'\sigma}]$. If $N' + N'' \geq T$ one can verify that $d_1 = d_1^{-}(N'')$ and $d_1 = d_2^{+}(N'')$ are the roots of this concave parabola if they are finite. The smaller root, $d_1^{-}(N'')$, is finite if and only if $N'' \leq T$. The larger root, $d_1^{+}(N'')$ is finite if and only if $N' \leq T$. 
\item $\varphi D < d'N'\sigma$. In this case, $B(d',N';\rho) >0 $ only if $N' > T$. In this case, the delivered dose exceeds the dose that is used to set the BED tolerance, which is only possible if the number of fractions $N'$ is strictly larger than the reference number of fractions $T$. Condition \eqref{appeq: g-inequality} clearly holds, so $g(d',N',N'';\rho)$ is increasing in $\rho$.
    Using the fact that $N' > T$ it is easily shown that $d_1^{-} < \frac{\varphi D}{ \sigma N'} < d'$. Additionally, it can be shown that $d_{\text{UB}} < d_1^{+}(N'')$. Hence this case satisfies \eqref{eq: g-omega}. Putting all of the above together yields the required result for $g$, i.e., \Cref{lemma: g-omega}.
\end{itemize}

The partial derivative of $f$ w.r.t. $\rho$ is given by
\begin{align}
\frac{\partial f(d_1,N_2;\rho,\tau)}{\partial \rho} = \frac{\partial g(d_1,N_1,N_2;\rho)}{\partial \rho} \Big(N_2 + 2\tau N_2 g(d_1,N_1,N_2;\rho) \Big).
\end{align}
Hence, the sign of the partial derivative of $f$ w.r.t. $\rho$ is equal to the sign of the partial derivative of $g$ w.r.t. $\rho$. The result of \Cref{lemma: f-omega} immediately follows.
\end{proof}

For given $(\rho,\tau)$ such that $\tau \neq \sigma \rho$, define the \emph{twin point} of $d_1 \in W(\rho,\tau)$ as
\begin{align} \label{appeq: t}
t(d_1;\rho,\tau) := \frac{\big(N_1 - N_2^{\ast}(\rho,\tau)\big) d_1 + 2N_2^{\ast}(\rho,\tau) g\big(d_1,N_1,N_2^{\ast}(\rho,\tau);\rho\big)}{N_1 + N_2^{\ast}(\rho,\tau)},
\end{align}
where
\begin{align}
\hspace*{-0.3cm} W(\rho,\tau) := \Big [ \max \{0, t(g(0,0,N_1;\rho);\rho,\tau) \},~\min\{t(0;\rho,\tau), g(0,0,N_1;\rho) \} \Big] \backslash \{g(0,0,N_1+N_2;\rho)\}.
\end{align}
\Cref{fig: twin} illustrates the relation between $d_1$ and $t(d_1;\rho,\tau)$. Set $W$ can be interpreted as the points $d_1$ for which there exists another point the graph of $f$ that has the same value, we refer to such points as twin points. The following lemma proves that for fixed $(\rho,\tau)$ any $d_1$ in the set $W(\rho,\tau)$ has a twin point $t(d_1;\rho,\tau)$ that is also in the set $W(\rho,\tau)$, and their objective values are equal.

\begin{figure}
\centering
\begin{subfigure}{0.45\textwidth}
\includegraphics[scale=0.9]{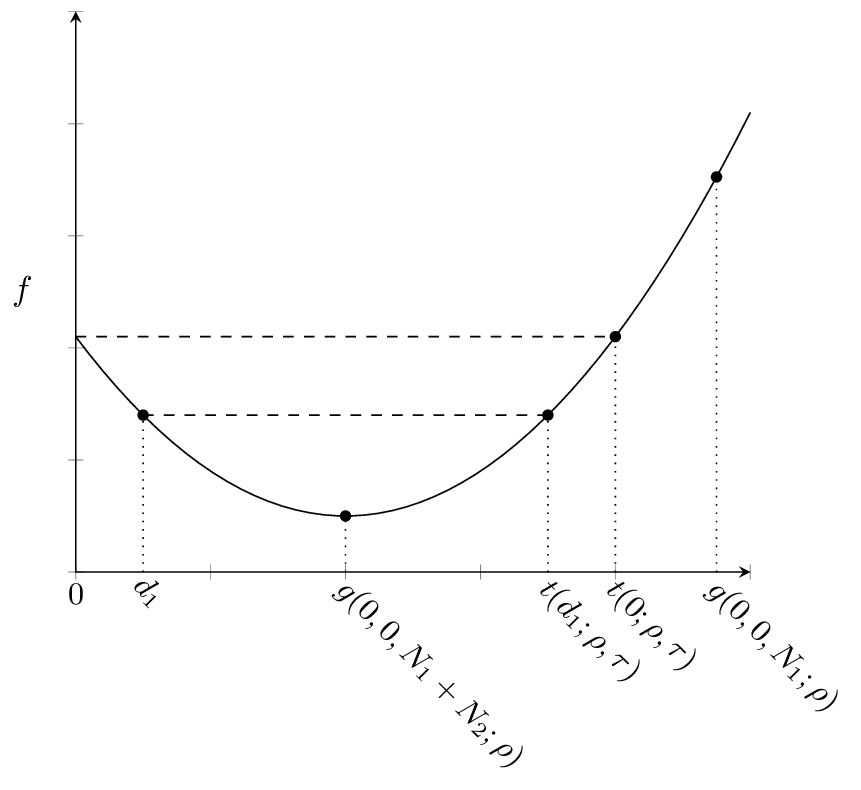}%
\caption{\small Case $t(0;\rho,\tau) < g(0,0,N_1;\rho)$.}
\end{subfigure}
\qquad
\begin{subfigure}{0.45\textwidth}
\includegraphics[scale=0.9]{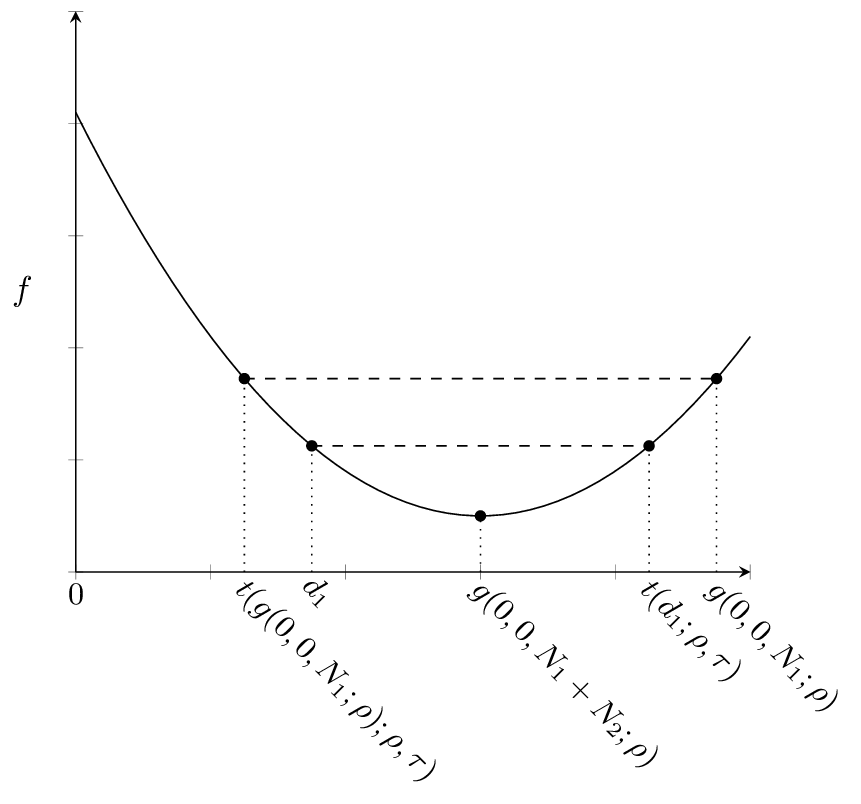}%
\caption{\small Case $t(0;\rho,\tau) > g(0,0,N_1;\rho)$.}
\end{subfigure}
\caption{\small Illustration of $d_1$ and $t(d_1;\rho,\tau)$, for convex $f$. \label{fig: twin}}
\end{figure}

\begin{lemma}\label{lemma: twins}
Let $(\rho,\tau) \in Z$ such that $\tau \neq \sigma\rho$, let $N_2 = N_2^{\ast}(\rho,\tau)$ and let $d_1 \in [0,g(0,0,N_1;\rho)]$. The equation
\begin{align} \label{appeq: equation-t}
f(d_1,N_2;\rho,\tau) = f(d_1',N_2;\rho,\tau)
\end{align}
has a solution $d_1' \in [0,g(0,0,N_1;\rho)]$ unequal to $d_1$ if and only if $d_1 \in W(\rho,\tau)$. In that case, there is a unique solution $d_1' = t(d_1;\rho,\tau) \in W(\rho,\tau)$, and it holds that $d_1 = t(t(d_1;\rho,\tau);\rho,\tau)$.
\end{lemma}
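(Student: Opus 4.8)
The plan is to exploit the fact that, along the curve $d_2=g(d_1,N_1,N_2;\rho)$ on which the OAR constraint is tight, the objective is an affine, strictly monotone function of the aggregate $u(d_1):=N_1d_1+N_2g(d_1,N_1,N_2;\rho)$. Writing the tight constraint as $\sigma u+\rho\sigma^2 w=B(0,0;\rho)$ with $w:=N_1d_1^2+N_2d_2^2$ and eliminating $w$, the tumor BED becomes $f=u\bigl(1-\tfrac{\tau}{\rho\sigma}\bigr)+\tfrac{\tau}{\rho\sigma^2}B(0,0;\rho)$. Since $\tau\neq\sigma\rho$ the coefficient of $u$ is nonzero, so $f(d_1,N_2;\rho,\tau)=f(d_1',N_2;\rho,\tau)$ holds \emph{iff} $u(d_1)=u(d_1')$. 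This reduces the whole statement to studying the level sets of $u$ on the branch $d_1\in[0,g(0,0,N_1;\rho)]$.

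Next I would derive the explicit twin formula. Fixing $u_0=u(d_1)$ and substituting $d_2=(u_0-N_1d_1)/N_2$ into the tight constraint produces a quadratic in $d_1$ whose two roots are exactly the two doses on the constraint ellipse sharing the value $u_0$. By Vieta's formulas their sum equals $\tfrac{2u_0}{N_1+N_2}$, so the second root is $\tfrac{2u_0}{N_1+N_2}-d_1=\tfrac{(N_1-N_2)d_1+2N_2g(d_1,N_1,N_2;\rho)}{N_1+N_2}=t(d_1;\rho,\tau)$, matching \eqref{appeq: t}. Because a quadratic has at most two roots, this immediately gives uniqueness of a distinct second solution, and the involution identity $t(t(d_1;\rho,\tau);\rho,\tau)=d_1$ is automatic since $d_1$ and $t(d_1;\rho,\tau)$ are the two roots of one and the same quadratic.

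It then remains to characterize when the twin is admissible, i.e. when $t(d_1;\rho,\tau)\in[0,g(0,0,N_1;\rho)]$ (equivalently, when its associated $d_2$ is nonnegative). Here I would invoke \Cref{lemma: convexity-f}: since $f$ is strictly convex or concave with extremizer $g(0,0,N_1+N_2;\rho)$ and $f$ is an affine image of $u$, the map $u$ is strictly concave on the branch, unimodal with interior maximizer $v:=g(0,0,N_1+N_2;\rho)\in(0,g(0,0,N_1;\rho))$ (using that $g$ is decreasing in its third argument). Consequently $t(\cdot;\rho,\tau)$ is a strictly decreasing involution with the single fixed point $v$, where the two roots coincide and no distinct twin exists, explaining why $v$ is excised from the set $W(\rho,\tau)$. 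Combining strict monotonicity with $t\circ t=\mathrm{id}$, the admissibility conditions $t(d_1)\ge 0$ and $t(d_1)\le g(0,0,N_1;\rho)$ become $d_1\le t(0;\rho,\tau)$ and $d_1\ge t(g(0,0,N_1;\rho);\rho,\tau)$ respectively; intersecting these with the domain $[0,g(0,0,N_1;\rho)]$ and removing $v$ reproduces precisely $W(\rho,\tau)$.

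The main obstacle I anticipate is the bookkeeping in this final step. One must confirm that the algebraic second root genuinely lies on the admissible arc, so that its $d_2$-value really equals $g$ evaluated there rather than sitting on a spurious branch or being negative, and one must treat the boundary cases in which the $\max$ and $\min$ defining the endpoints of $W$ are active (namely $t(0;\rho,\tau)>g(0,0,N_1;\rho)$ or $t(g(0,0,N_1;\rho);\rho,\tau)<0$). Establishing that $u$ is strictly unimodal with interior maximizer, which is what makes $t$ a strictly decreasing involution, is the crux on which the whole endpoint translation rests.
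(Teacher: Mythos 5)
You take essentially the same route as the paper. The paper also obtains the twin formula \eqref{appeq: t} by matching the linear and quadratic dose aggregates along the tight OAR constraint and solving the resulting quadratic (your $u$--$w$ elimination plus Vieta is the same computation, and your observation that $f$ is affine in $u$ with slope $1-\tfrac{\tau}{\rho\sigma}$ makes the reduction ``$f$-equality iff $u$-equality'' more direct than the paper's two-step argument, which first matches both aggregates and then invokes \Cref{lemma: convexity-f} to cap the number of solutions at two); and the paper settles the characterization by $W(\rho,\tau)$ exactly as you propose, by asserting that $t(\cdot;\rho,\tau)$ is decreasing and pushing the conditions $t(d_1)\geq 0$, $t(d_1)\leq g(0,0,N_1;\rho)$ through the involution.

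However, the step you defer as ``bookkeeping''---and which you claim follows from strict concavity and unimodality of $u$---contains a genuine gap, and it is the same gap the paper's own proof has. Strict concavity of $u$ does \emph{not} make $t(d_1)=\tfrac{2u(d_1)}{N_1+N_2}-d_1$ a strictly decreasing involution on all of $[0,g(0,0,N_1;\rho)]$: one has $t'(d_1)=\tfrac{2u'(d_1)}{N_1+N_2}-1$, which is negative only where $u'(d_1)<\tfrac{N_1+N_2}{2}$; since $0<u'(d_1)\le u'(0)<N_1$ by \eqref{appeq: g-partial-d1} and concavity, this is automatic when $N_1\leq N_2$ but can fail when $N_1>N_2$. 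Concretely, for $N_1=10$, $N_2=1$, $\sigma=\rho=1$, $B(0,0;\rho)=100$ one gets $u'(0)\approx 9.5>5.5$, and $t(0)\approx 1.73$ while $t(g(0,0,N_1;\rho))\approx 2.21$, so $t$ is not monotone; at $d_1=0$ the algebraic second root lies on the lower branch (its associated $d_2\approx-7.8<0$) even though $t(0)\in[0,g(0,0,N_1;\rho)]$, so both the involution identity and your claimed equivalence ``$t(d_1)\in[0,g(0,0,N_1;\rho)]$ iff the associated $d_2$ is nonnegative'' break down; and the interval defining $W$ degenerates to $\emptyset$ while genuine twins still exist (e.g.\ $d_1=2.3$ and $d_1'\approx 2.69$ satisfy \eqref{appeq: equation-t}). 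In fairness, the paper's justification---$t$ is decreasing ``because $d_1-g$ is increasing''---only yields $t'<\tfrac{N_1-N_2}{N_1+N_2}$ and is likewise conclusive only for $N_1\leq N_2$ (the paper's base setting, $N_1=10\leq N_2^{\text{min}}=20$). So your plan is on par with the published argument, but a watertight proof must either assume $N_1\leq N_2$ or treat $N_1>N_2$ separately, where the set $W$ as defined no longer captures the doses possessing twins.
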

\begin{proof}
Let $(\rho,\tau)$ such that $\tau \neq \sigma \rho$, let $N_2 = N_2^{\ast}(\rho,\tau)$. We first show that if there exists a solution $d_1'$ to \eqref{appeq: equation-t} unequal to $d_1$, then this solution is $d_1' = t(d_1;\rho,\tau)$ and that $d_1 = t(t(d_1;\rho,\tau);\rho,\tau)$. Subsequently, we show that this solution exists only if $d_1 \in W(\tau,\rho)$ and that in that case also $t(d_1;\rho,\tau) \in W(\rho,\tau)$.\\

\noindent \emph{First part:}\\
Let $L, Q \in \mathbb{R}_+$ denote the linear and quadratic contribution of $d_1$ to $f$, i.e.,
\begin{subequations} \label{appeq: LQ}
\begin{align}
L(d_1,N_2,\rho) &:= N_1d_1 + N_2g(d_1,N_1,N_2;\rho)  \label{appeq: L}\\
Q(d_1,N_2,\rho) &:= N_1d_1^2 + N_2 g(d_1,N_1,N_2;\rho)^2.\label{appeq: Q}
\end{align}
\end{subequations}
We show that for any $d_1 \in [0,g(0,0,N_1;\rho)]$ unequal to $g(0,0,N_1+N_2,\rho)$ there is a unique solution $d_1'$ such that
\begin{subequations} \label{appeq: LQ-condition}
\begin{align}
L(d_1,N_2;\rho) &= L(d_1',N_2;\rho) \\
Q(d_1,N_2;\rho) &= Q(d_1',N_2;\rho).
\end{align}
\end{subequations}
Any such $d_1'$ has exactly the same objective value as $d_1$. Plug \eqref{appeq: LQ} in \eqref{appeq: LQ-condition}, rewrite the first equation to eliminate $g$, and plug this in the second equation to get
\begin{align}
d_1' &~= \frac{L(d_1,N_2;\rho) \pm \sqrt{\frac{N_2}{N_1} \big((N_1 + N_2) Q(d_1,N_2;\rho) - L^2(d_1,N_2;\rho)\big)}}{N_1 + N_2} \nonumber \\
&~= \frac{N_1d_1 + N_2 g(d_1,N_1,N_2;\rho) \pm N_2(d_1 - g(d_1,N_1,N_2;\rho))}{N_1 + N_2}. \label{appeq: d1-plusminus}
\end{align}
The `+' solution to \eqref{appeq: d1-plusminus} returns $d_1' = d_1$, and the `--' solution returns
\begin{align}
d_1' = \frac{(N_1-N_2)d_1 + 2N_2 g(d_1,N_1,N_2,\rho)}{N_1 + N_2},
\end{align}
and we denote this solution by $t(d_1;\rho,\tau)$. By construction, it holds that $d_1 = t(t(d_1;\rho,\tau);\rho,\tau)$. Because $\tau \neq \sigma \rho$, function $f(d_1,N_2;\rho,\tau)$ is a strictly convex or concave function according to \Cref{lemma: convexity-f}, so $f(d_1,N_2;\rho,\tau) = z$ for some constant $z \in \mathbb{R}$ has either 0, 1 or 2 solutions. In particular, $d_1 = t(d_1;\rho,\tau)$ if and only if $d_1$ equals minimizer $g(0,0,N_1+N_2;\rho)$. Hence if there exists a solution $d_1'$ to \eqref{appeq: equation-t} unequal to $d_1$, then this solution is $d_1' = t(d_1;\rho,\tau)$.\\

\noindent \emph{Second part:}
\begin{itemize}
\item Suppose $d_1 \notin W(\rho,\tau)$. We distinguish three cases. Case (i): `$d_1 = g(0,0,N_1+N_2;\rho)$'. Because this is the unique minimizer of $f$, there does not exist a $d_1'$ with equal objective value. Case (ii). `$d_1 > \min \{t(0;\rho,\tau),~g(0,0,N_1;\rho) \}$'. Because $d_1 \in [0,g(0,0,N_1;\rho)]$, this implies $d_1 > t(0,\rho,\tau)$. As shown in the first part of the proof, it holds that $d_1 = t(t(d_1;\rho,\tau);\rho,\tau)$. Hence, $d_1 > t(0;\rho,\tau)$ is equivalent to $t(t(d_1;\rho,\tau),\rho,\tau) >t(0;\rho,\tau)$. Because $t(d_1;\rho,\tau)$ is decreasing in $d_1$, this implies $t(d_1;\rho,\tau) < 0$, so according to \eqref{appeq: f} it holds that $f(t(d_1;\rho,\tau), N_2 ; \rho,\tau) = -\infty$ and we have a contradiction. Case (iii): `$d_1 < \max \{0,~ t(g(0,0,N_1;\rho);\rho,\tau) \}$'. Similar to case (ii), one can show that $f(t(d_1;\rho,\tau), N_2 ; \rho,\tau) = -\infty$.
\item Suppose $d_1 \in W(\rho,\tau)$. From \eqref{appeq: d1-plusminus} one can see that, because the term $d_1 - g(d_1,N_1,N_2,\rho)$ is increasing in $d_1$, the function $t(d_1;\rho,\tau)$ is decreasing in $d_1$. Consequently,
\begin{align}
d_1 \leq \min\{t(0;\rho,\tau), g(0,0,N_1;\rho) \} \Leftrightarrow d_1' \geq \max \{0, t(g(0,0,N_1;\rho);\rho,\tau) \}.
\end{align}
Furthermore, using the same argument,
\begin{align}
d_1 \geq \max \{0, t(g(0,0,N_1;\rho);\rho,\tau) \} \Leftrightarrow d_1' \leq \min\{t(0;\rho,\tau), g(0,0,N_1;\rho) \}.
\end{align}
Therefore, it holds that $d_1' \in W(\rho,\tau)$.
\end{itemize}
\end{proof}

\noindent In the following lemma, let $I(\cdot|S)$ denote the indicator function for a set $S$:
\begin{align}
I(x | S) =
\begin{cases}
1 & \text{ if } x\in S \\
0 & \text{ otherwise. }
\end{cases}
\end{align}
\begin{lemma}\label{lemma: i=2-reform}
For given $q \in \mathbb{R}_+$ and given $d_1 \in [0, d_{\text{UB}}]$,
\begin{align} \label{appeqB: inexact-con2}
q \leq f(d_1, N_2^{\ast}(d_1; \hat{\rho},\hat{\tau});\rho,\tau), ~~\forall (\rho,\tau,\hat{\rho},\hat{\tau}) \in U_L^{\text{int}},
\end{align}
holds if $\tau \leq p(d_1)$, with
\begin{align} \label{appeq: p}
\begin{aligned}
p(d_1) =& \sum_{\eta \in \{N_2^{\text{min}},\dotsc,N_2^{\text{max}}\}} \max \big \{ f(d_1,\eta;\rho^{\text{int}}_L,\tau_L),~f(d_1,\eta-1;\rho^{\text{int}}_U,\tau_L)\big\} I(d_1 | S_{\eta}) \\
        & + f(d_1,N_2^{\text{min}};\rho^{\text{int}}_L,\tau_L) I(d_1 | S_{\text{min}}) + f(d_1,N_2^{\text{max}};\rho^{\text{int}}_U,\tau_L) I(d_1 | S_{\text{max}}),
\end{aligned}
\end{align}
where sets $S_{\text{min}}$, $S_{\text{max}}$ and $S_{\eta}$ are defined in \eqref{appeq: S-min}, \eqref{appeq: S-max} and \eqref{appeq: S-eta}, respectively, and $\rho^{\text{int}}_L$, $\rho^{\text{int}}_U$ are defined in \eqref{appeq: omega-int}.
\end{lemma}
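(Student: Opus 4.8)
The plan is to turn the universally-quantified robust constraint \eqref{appeqB: inexact-con2} into an explicit worst-case value over the intermediate observation region and then identify that value, up to a controlled relaxation, with $p(d_1)$. First I would fix $d_1$ and use that every observation-realization pair in $U_L^{\text{int}}$ satisfies $\hat{\tau}_L = \max\{\tau_L,\hat{\tau}-r^{\tau}\} = \tau_L$. Since $f(d_1,N_2;\rho,\tau)$ is increasing in $\tau$ (the coefficient $N_1d_1^2+N_2g^2$ is nonnegative), for any fixed stage-2 response the worst realization has $\tau=\tau_L$, which removes the $\tau$-direction and leaves only the realization $\rho\in[\hat{\rho}_L,\hat{\rho}_U]$.

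The crucial observation is that for a fixed observation $(\hat{\rho},\hat{\tau})\in Z_{\text{ID}}^{\text{int}}$ the optimal stage-2 rule from \Cref{theorem: inexact-stage2} is \emph{itself} the $\argmax$ over $N_2$ of $\min\{f(d_1,N_2;\hat{\rho}_L,\hat{\tau}_L),f(d_1,N_2;\hat{\rho}_U,\hat{\tau}_L)\}$. By \Cref{lemma: f-omega}, $f$ is monotone in $\rho$ for fixed $d_1$, so given $N_2^{\ast}$ the worst realization sits at an endpoint $\hat{\rho}_L$ or $\hat{\rho}_U$; hence the guaranteed objective attained under the observation equals
\begin{align}
V(\hat{\rho},\hat{\tau}) = \max_{N_2\in\{N_2^{\text{min}},\dotsc,N_2^{\text{max}}\}} \min\bigl\{f(d_1,N_2;\hat{\rho}_L,\tau_L),\, f(d_1,N_2;\hat{\rho}_U,\tau_L)\bigr\}.
\end{align}
I would then perform the conservative widening: by construction every admissible observation interval $[\hat{\rho}_L,\hat{\rho}_U]$ is contained in the extreme interval $[\rho^{\text{int}}_L,\rho^{\text{int}}_U]$ defined in \eqref{appeq: omega-int}, and enlarging the $\rho$-range can only decrease an inner $\min$, hence decrease the outer $\max$--$\min$. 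Therefore replacing $(\hat{\rho}_L,\hat{\rho}_U)$ by $(\rho^{\text{int}}_L,\rho^{\text{int}}_U)$ yields a quantity $\bar V(d_1)\le V(\hat{\rho},\hat{\tau})$ valid for \emph{all} observations in $U_L^{\text{int}}$, so that $q\le\bar V(d_1)$ already guarantees \eqref{appeqB: inexact-con2}. It then remains to show $\bar V(d_1)=p(d_1)$.

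For the explicit form I would use that $\tau_L\ge\sigma\rho^{\text{int}}_L$ and $\tau_L\le\sigma\rho^{\text{int}}_U$, which is exactly what places the widened interval in the intermediate region. Combining the stage-2 objective representation from the proof of \Cref{theorem: exact-stage2} with the monotonicity $\partial(N_2 g)/\partial N_2\ge0$, the map $N_2\mapsto f(d_1,N_2;\rho^{\text{int}}_L,\tau_L)$ is (weakly) decreasing while $N_2\mapsto f(d_1,N_2;\rho^{\text{int}}_U,\tau_L)$ is increasing. The $\max$--$\min$ of a decreasing and an increasing sequence is governed by their crossing: if it falls between $\eta-1$ and $\eta$, the optimal integer value is $\max\{f(d_1,\eta;\rho^{\text{int}}_L,\tau_L),f(d_1,\eta-1;\rho^{\text{int}}_U,\tau_L)\}$, whereas a crossing at or beyond the feasible endpoints returns the boundary values $f(d_1,N_2^{\text{min}};\rho^{\text{int}}_L,\tau_L)$ or $f(d_1,N_2^{\text{max}};\rho^{\text{int}}_U,\tau_L)$. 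Defining the $d_1$-regions $S_{\eta}$, $S_{\text{min}}$, $S_{\text{max}}$ accordingly then reproduces exactly the indicator-weighted sum \eqref{appeq: p}.

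The main obstacle I anticipate is this last step: making the crossing analysis rigorous \emph{as $d_1$ varies}. One must verify that the crossing index moves monotonically enough that the sets $S_{\eta}$, $S_{\text{min}}$, $S_{\text{max}}$ genuinely partition $[0,d_{\text{UB}}]$, so that the sum in \eqref{appeq: p} activates a single term at each $d_1$. This relies on the convexity/concavity and intersection structure of $f$ in $d_1$ (\Cref{lemma: convexity-f,lemma: intersection}), and extra care is needed where $f$ switches between convex and concave behaviour as one passes from $\rho^{\text{int}}_L$ to $\rho^{\text{int}}_U$, since there the two envelope families interact.
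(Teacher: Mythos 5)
Your proposal is correct and matches the paper's own proof in all essential respects: the reduction to $\hat{\tau}_L=\tau_L$ and the two endpoint scenarios $\hat{\rho}_L,\hat{\rho}_U$, the partition of $[0,d_{\text{UB}}]$ into $S_{\text{min}}$, $S_{\eta}$, $S_{\text{max}}$ governed by the intersection points $d_1^{\pm}(\eta)$ (\Cref{lemma: convexity-f,lemma: intersection}), the crossing argument between the $N_2$-decreasing and $N_2$-increasing scenario values, and the conservative replacement of $(\hat{\rho}_L,\hat{\rho}_U)$ by $(\rho^{\text{int}}_L,\rho^{\text{int}}_U)$ justified by monotonicity of $f$ in $\rho$ (\Cref{lemma: f-omega}). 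The only difference is organizational: you widen to the extreme interval first and then reformulate a single max--min, whereas the paper reformulates exactly for each fixed observation (exploiting that the sets $S_{\eta}$, $S_{\text{min}}$, $S_{\text{max}}$ do not depend on $\hat{\rho}_L,\hat{\rho}_U$) and only then bounds term by term over observations; the two orderings use identical ingredients and yield the same $p(d_1)$.
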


\begin{proof}
By definition of $N_2^{\ast}(d_1; \hat{\rho},\hat{\tau})$ and $U_L^{\text{int}}$, it holds that
\begin{align}
q \leq f(d_1, N_2^{\ast}(d_1; \hat{\rho},\hat{\tau});\rho,\tau), ~~\forall (\rho,\tau,\hat{\rho},\hat{\tau}) \in U_L^{\text{int}},
\end{align}
is equivalent to
\begin{align} \label{appeq: i=2-equation}
q \leq \max_{\tilde{\eta} \in \{N_2^{\text{min}},\dotsc,N_2^{\text{max}}\}} \min\{ f(d_1, \tilde{\eta} ; \hat{\rho}_L,\hat{\tau}_L), f(d_1, \tilde{\eta} ; \hat{\rho}_U,\hat{\tau}_L)\}, ~~\forall (\hat{\rho},\hat{\tau}) \in Z_{\text{ID}}^{\text{int}} \cap \{(\hat{\rho},\hat{\tau}) : \hat{\tau} \leq \tau_L + r^{\tau} \},
\end{align}
and because function $f$ is increasing in $\tau$, we need to consider only those observations $(\hat{\rho},\hat{\tau})$ with $\hat{\tau}_L = \tau_L$. For the first part of the proof, we fix the observation $(\hat{\rho},\hat{\tau})$, plug in $\hat{\tau}_L = \tau_L$, and rewrite \eqref{appeq: i=2-equation} for this fixed observation.

Because $(\hat{\rho},\hat{\tau}) \in Z_{\text{ID}}^{\text{int}}$, it holds that $\sigma \hat{\rho}_L < \tau_L < \sigma \hat{\rho}_U$. Hence, by \Cref{lemma: convexity-f}, function $f(d_1,\eta, \hat{\rho}_L,\tau_L)$ is convex and $f(d_1,\eta, \hat{\rho}_U,\tau_L)$ is concave in $d_1$ for any $\eta \in \mathbb{N}_+$. We make use of results of \Cref{lemma: intersection}. Define
\begin{subequations}
\begin{align}
E^{-} & = \{ \eta : N_1 + \eta \geq T, \eta \leq T \} \cap \{N_2^{\text{min}},\dotsc,N_2^{\text{max}}\}\\
E^{+} & = \{ \eta : N_1 + \eta \geq T, N_1 \leq T \} \cap \{N_2^{\text{min}},\dotsc,N_2^{\text{max}}\},
\end{align}
\end{subequations}
and let $\eta_{\text{min}}^{-}$, $\eta_{\text{max}}^{-}$, $\eta_{\text{min}}^{+}$ and $\eta_{\text{max}}^{+}$, denote the smallest and largest elements of $E^{-}$ and $E^{+}$, respectively. If $\eta \in E^{-}$ respectively $\eta \in E^{+}$, then, according to \Cref{lemma: intersection1}, $d_1=d_1^{-}(\eta)$ respectively $d_1=d_1^{+}(\eta)$ is a nonnegative real root of
\begin{align*}
f(d_1,\eta; \hat{\rho}_L,\tau_L) = f(d_1,\eta; \hat{\rho}_U,\tau_L),
\end{align*}
and the corresponding objective value equals $K$. From \Cref{lemma: intersection2} we know that
\begin{align}
d_1^{-}(N_2^{\text{max}}) < \dotsc < d_1^{-}(N_2^{\text{min}}) \leq d_1^{+}(N_2^{\text{min}}) \leq \dotsc \leq d_1^{+}(N_2^{\text{max}}).
\end{align}
We use this to split the domain $[0,d_{\text{UB}}]$ as follows:
\begin{subequations}
\begin{align}
& S_{\text{min}} =
\begin{cases}
(d_1^{-}(\eta_{\text{min}}^{-}), d_1^{+}(\eta_{\text{min}}^{+})) &\text{ if } E^{-} \neq \emptyset, E^{+} \neq \emptyset \\
[0, d_1^{+}(\eta_{\text{min}}^{+})) & \text{ if } E^{-} = \emptyset, E^{+} \neq \emptyset \\
(d_1^{-}(\eta_{\text{min}}^{-}), d_{\text{UB}}] & \text{ if } E^{-} \neq \emptyset, E^{+} = \emptyset \\
\emptyset &\text{ if }  N_1 + N_2^{\text{max}} < T \\
[0, d_{\text{UB}}] &\text{ otherwise} \\
\end{cases} \label{appeq: S-min}\\
& S_{\eta}^{-} :=
\begin{cases}
[d_1^{-}(\eta),d_1^{-}(\eta-1)] & \text{ if } \eta_{\text{min}}^{-} \leq \eta-1 < \eta \leq \eta_{\text{max}}^{-} \\
\emptyset & \text{ otherwise }\\
\end{cases} ~~\forall \eta \in \{N_2^{\text{min}}+1,\dotsc,N_2^{\text{max}}\}\\
& S_{\eta}^{+} :=
\begin{cases}
[d_1^{+}(\eta-1),d_1^{+}(\eta)] & \text{ if } \eta_{\text{min}}^{+} \leq \eta-1 < \eta \leq \eta_{\text{max}}^{+} \\
\emptyset & \text{ otherwise }\\
\end{cases} ~~\forall \eta \in \{N_2^{\text{min}}+1,\dotsc,N_2^{\text{max}}\}\\
& S_{\text{max}} =
\begin{cases}
[0, d_1^{-}(\eta_{\text{max}}^{-})) \cup (d_1^{+}(\eta_{\text{max}}^{+}), d_{\text{UB}}] &\text{ if } E^{-} \neq \emptyset, E^{+} \neq \emptyset \\
(d_1^{+}(\eta_{\text{max}}^{+}), d_{\text{UB}}] &\text{ if } E^{-} = \emptyset, E^{+} \neq \emptyset \\
[0, d_1^{-}(\eta_{\text{max}}^{-})) & \text{ if } E^{-} \neq \emptyset, E^{+} = \emptyset \\
[0, d_{\text{UB}}] &\text{ if }  N_1 + N_2^{\text{max}} < T \\
\emptyset &\text{ otherwise.} \\
\end{cases}\label{appeq: S-max}
\end{align}
\end{subequations}
We will reformulate \eqref{appeq: i=2-equation} on each interval (set) separately, assuming it is nonempty.
\begin{enumerate}
\item ``$S_{\text{min}}$'': If $d_1 \in S_{\text{min}}$, then $f(d_1,\eta; \hat{\rho}_L,\tau_L) < f(d_1,\eta; \hat{\rho}_U,\tau_L)$ for all $\eta \in \{N_2^{\text{min}},\dotsc,N_2^{\text{max}}\}$ according to \Cref{lemma: f-omega}, so it is optimal to deliver $N_2^{\text{min}}$ fractions. Hence, on this interval \eqref{appeq: i=2-equation} is equivalent to
\begin{align}
q \leq f(d_1, N_2^{\text{min}}; \hat{\rho}_L,\tau_L).
\end{align}
\item ``$S_{\eta}^{-}$'': From \Cref{lemma: intersection1} we know that $f(d_1,\eta, \hat{\rho}_L,\tau_L) = f(d_1,\eta, \hat{\rho}_U,\tau_L)$ if $d_1 = d_1^{-}(\eta)$ or $d_1 = d_1^{+}(\eta)$. In this case, the objective value equals $K$. Furthermore, function $f(d_1,\eta, \hat{\rho}_L,\tau_L)$ is convex and $f(d_1,\eta, \hat{\rho}_U,\tau_L)$ is concave in $d_1$. Consider the interval $[d_1^{-}(\eta),d_1^{-}(\eta-1)]$. It holds that
\begin{subequations} \label{appeq: dEta-inequalities}
\begin{align}
f(d_1,\eta ; \hat{\rho}_L,\tau_L) \leq K \leq f(d_1,\eta-1 ; \hat{\rho}_L,\tau_L)\\
f(d_1,\eta - 1 ; \hat{\rho}_U,\tau_L) \leq K \leq f(d_1,\eta ; \hat{\rho}_U,\tau_L).
\end{align}
\end{subequations}
This implies that if $d_1 \in [d_1^{-}(\eta),d_1^{-}(\eta-1)]$, it is optimal to deliver either $\eta$ or $\eta-1$ fractions. If we deliver $\eta$ fractions, the restricting worst-case scenario is $(\hat{\rho}_L,\tau_L)$ and the value $f$ is above $K$ for the scenario $(\hat{\rho}_U,\tau_L)$. If we deliver $\eta' > \eta$ fractions, the value for the scenario $(\hat{\rho}_L,\tau_L)$ decreases, while the value for the scenario $(\hat{\rho}_U,\tau_L)$ increases even further. Hence, delivering $\eta' > \eta$ fractions cannot be optimal. Similarly, delivering less than $\eta - 1$ fractions cannot be optimal. Therefore, if $d_1 \in [d_1^{-}(\eta),d_1^{-}(\eta-1)]$ it is optimal to deliver either $\eta$ or $\eta-1$ fractions. This implies that on the interval $S_{\eta}^{-}$ constraint \eqref{appeq: i=2-equation} is equivalent to
\begin{align} \label{appeq: constraint-eta}
q \leq \max\{ f(d_1, \eta ; \hat{\rho}_L,\tau_L),~ f(d_1, \eta-1 ; \hat{\rho}_U,\tau_L) \}.
\end{align}
Note that this result does not depend on the values $\hat{\rho}_L$ and $\hat{\rho}_U$, we only use that $\hat{\rho}_L < \frac{\tau_L}{\sigma} < \hat{\rho}_U$.

\item ``$S_{\eta}^{+}$'': Similar to the case for $S_{\eta}^{+}$, one can show that for $d_1 \in S_{\eta}^{+}$ constraint \eqref{appeq: i=2-equation} is equivalent to \eqref{appeq: constraint-eta}.

\item ``$S_{\text{max}}$'': If $d_1 \in S_{\text{max}}$, then $f(d_1,\eta; \hat{\rho}_L,\tau_L) > f(d_1,\eta; \hat{\rho}_U,\tau_L)$ for all $\eta \in \{N_2^{\text{min}},\dotsc,N_2^{\text{max}}\}$ according to \Cref{lemma: f-omega}, so it is optimal to deliver $N_2^{\text{max}}$ fractions. Hence, on this interval \eqref{appeq: i=2-equation} is equivalent to
\begin{align}
q \leq f(d_1, N_2^{\text{max}}; \hat{\rho}_U,\tau_L).
\end{align}
\end{enumerate}
For sets $S_{\eta}^{-}$ and $S_{\eta}^{+}$ the reformulation is the same. Therefore, define
\begin{align} \label{appeq: S-eta}
S_{\eta} = S_{\eta}^{-} \cup S_{\eta}^{+}.
\end{align}
Putting everything together, for $d_1 \in [0, d_{\text{UB}}]$ the constraint \eqref{appeq: i=2-equation} is equivalent to
\begin{align} \label{appeq: piecewise}
\begin{aligned}
q \leq & \sum_{\eta \in \{N_2^{\text{min}},\dotsc,N_2^{\text{max}}\}}  \max\{ f(d_1, \eta ; \hat{\rho}_L,\tau_L),~ f(d_1, \eta-1 ; \hat{\rho}_U,\tau_L) \} I(d_1 | S_{\eta}) \\
           &+  f(d_1, N_2^{\text{min}} ; \hat{\rho}_L,\tau_L)I(d_1 | S_{\text{min}}) + f(d_1,N_2^{\text{max}} ; \hat{\rho}_U,\tau_L)I(d_1 | S_{\text{max}}),~~\forall (\hat{\rho},\hat{\tau}) \in Z_{\text{ID}}^{\text{int}} \cap \{(\hat{\rho},\hat{\tau}) : \hat{\tau} \leq \tau_L + r^{\tau} \}.
\end{aligned}
\end{align}
In order to find a tractable conservative robust counterpart of \eqref{appeq: piecewise}, denote
\begin{subequations} \label{appeq: omega-int}
\begin{align}
\rho^{\text{int}}_L &= \max\{\rho_L,\frac{\tau_L}{\sigma} - 2r^{\rho} \} \\
\rho^{\text{int}}_U &= \min\{ \rho_U, \frac{\tau_L}{\sigma} + 2r^{\rho}\},
\end{align}
\end{subequations}
and note that $\rho^{\text{int}}_L \leq \hat{\rho}_L < \frac{\tau_L}{\sigma} < \hat{\rho}_U \leq \rho^{\text{int}}_U$. Only if $d_1 \in S_{\eta}$, the robust counterpart is conservative. By \Cref{lemma: f-omega}, it holds that function $f$ is strictly decreasing, constant or strictly increasing in $\rho$ for fixed $d_1$, so
\begin{align*}
f(d_1, \eta ; \hat{\rho}_L,\tau_L) \geq \min\{f(d_1, \eta ; \rho^{\text{int}}_L,\tau_L),f(d_1, \eta ; \frac{\tau_L}{\sigma},\tau_L) \} = \min\{f(d_1, \eta ; \rho^{\text{int}}_L,\tau_L),K \}  = f(d_1, \eta ; \rho^{\text{int}}_L,\tau_L),
\end{align*}
where the second equality follows from \eqref{appeq: dEta-inequalities}. A similar result holds for $f(d_1, \eta -1 ; \hat{\rho}_U,\tau_L)$. Furthermore, as shown before, $f$ is increasing in $\rho$ on $S_{\text{min}}$ and decreasing in $\rho$ on $S_{\text{max}}$. Therefore, a conservative approximation of \eqref{appeqB: inexact-con2} is given by
\begin{align}
\begin{aligned}
q \leq & \sum_{\eta \in \{N_2^{\text{min}},\dotsc,N_2^{\text{max}}\}}  \max\{ f(d_1, \eta ; \rho^{\text{int}}_L,\tau_L),~ f(d_1, \eta-1 ; \rho^{\text{int}}_U,\tau_L) \} I(d_1 | S_{\eta}) \\
           &+  f(d_1, N_2^{\text{min}} ; \rho^{\text{int}}_L,\tau_L)I(d_1 | S_{\text{min}}) + f(d_1,N_2^{\text{max}} ; \rho^{\text{int}}_U,\tau_L)I(d_1 | S_{\text{max}}),
\end{aligned}
\end{align}
and the RHS is $p(d_1)$.
\end{proof}
\noindent Function $p(d_1)$ is a piece-wise function. On intervals defined by $S_{\text{min}}$ and $S_{\text{max}}$ it is convex and concave, respectively. On any interval $S_{\eta}$ function $p(d_1)$ is the maximum of a concave and convex function.

\end{document}